\numberwithin{equation}{section}
\newcommand{\mc}{\mathcal}
\newcommand{\mbb}{\mathbb}
\newcommand{\maxsub}{\mathrm{max}}
\DeclareMathOperator{\Tr}{Tr}
\DeclareMathOperator{\poly}{poly}
\newtheorem{lem}{Lemma}[section]
\newtheorem{thm}[lem]{Theorem}
\newtheorem{cor}[lem]{Corollary}
\newtheorem{prop}[lem]{Proposition}
\newtheorem{fact}[lem]{Fact}
\theoremstyle{definition}
\theoremstyle{remark}
\newtheorem{rem}[lem]{Remark}
\title{Mosaics of combinatorial designs for information-theoretic security}
\author{Moritz Wiese\footnote{Technical University of Munich, Chair of Theoretical Information Technology, Munich, Germany, and CASA: Cyber Security in the Age of Large-Scale Adversaries Exzellenzcluster, Ruhr Universit\"at Bochum, Bochum, Germany. Emails:\{wiese, boche\}@tum.de 
\newline\indent$^\dagger$Technical University of Munich, BMBF Research Hub 6G-life, Munich, Germany.
\newline\indent Part of this work has been presented at the IEEE International Symposium on Information Theory 2021. The conference version only treats the privacy amplification case and refers to the preprint of this paper at https://arxiv.org/abs/2102.00983v1 for details. It contains a slightly shortened introduction. Of Section \ref{sect:mosaics}, it only contains Subsection \ref{subsect:designs_defs} and examples $\mc M^{(1)}$ and $\mc M^{(4)}$ from Subsection \ref{subsect:mos_exs}. Of the results of Section 3, only Lemma \ref{lem:A_unif} and  Theorem \ref{thm:PA_muti_sec} are stated and proved. Section 5 is omitted completely.
}, Holger Boche$^{*\dagger}$}
\date{\today}
\begin{document}

\maketitle

\begin{abstract}
    We study security functions which can serve to establish semantic security for the two central problems of information-theoretic security: the wiretap channel, and privacy amplification for secret key generation. The security functions are functional forms of mosaics of combinatorial designs, more precisely, of group divisible designs and balanced incomplete block designs. Every member of a mosaic is associated with a unique color, and each color corresponds to a unique message or key value. Every block index of the mosaic corresponds to a public seed shared between the two trusted communicating parties. The seed set should be as small as possible. We give explicit examples which have an optimal or nearly optimal trade-off of seed length versus color (i.e., message or key) rate. We also derive bounds for the security performance of security functions given by functional forms of mosaics of designs.
\end{abstract}

\noindent{\small\textbf{Keywords:} Wiretap channel, privacy amplification, semantic security, mosaic of designs, balanced incomplete block design, group divisible design.}

\section{Introduction}

\subsection{Two problems of information-theoretic security}

A \textit{channel} $W:\mc X\to\mc Z$ is a stochastic matrix $W$ with rows indexed by the finite \textit{input alphabet} $\mc X$ and columns indexed by the finite \textit{output alphabet} $\mc Z$. The $(x,z)$ entry is nonnegative and denoted by $w(z\vert x)$. The sum of the entries of every row sums to 1, hence it defines a probability distribution on $\mc Z$. For the purpose of this paper, a \textit{wiretap channel} is determined by a single channel $W$. The interpretation is that a sender, Alice, wants to transmit a confidential message to a receiver, Bob, through a channel which accepts inputs from $\mc X$ and whose output is identical to the input, or whose error probability is as small as desired. An eavesdropper, Eve, obtains a noisy version of the input symbol $x\in\mc X$ through the channel $W$, in other words, she observes a random variable distributed according to $w(\,\cdot\,\vert x)$. The task now is to devise a \textit{security code} for the transmission of confidential messages which does not decrease the reliability of the channel to Bob, and which at the same time ensures that Eve learns nothing about the transmitted messages. In fact, we aim for \textit{semantic security}, by which we loosely mean that the security code should guarantee security no matter how the message is distributed on the message set. Two possible rigorous definitions of this concept will be given below. They guarantee \textit{unconditional security}, which means that no assumptions are made on Eve's computing power.

Another problem from information-theoretic security is \textit{privacy amplification}. Here, Alice and Bob share a random variable $X$ living on a finite set $\mc X$. Eve, the adversary, has access to a discrete random variable $Z$ correlated with $X$. The task is to apply a \textit{privacy amplification function} to $X$ such that the resulting random variable $A$ (the \textit{secret key} shared by Alice and Bob) is distributed approximately uniformly and such that Eve has no information about $A$. Again, the goal is to achieve semantic security. Although all distributions are fixed in this setting, it makes sense to require semantic security. For instance, it guarantees that even if Eve has the a priori knowledge that the key generated in the privacy amplification process has one of only two possible values, she is unable to tell which of these two is the one actually chosen. This property is sometimes called \textit{distinguishing security}, but it is well-known that it is equivalent to unconditional semantic security \cite{BTV}.

Practical scenarios will not in general translate directly into one of the two problems described above. In the wiretap scenario, the physical channel from Alice to Bob will generally be noisy as well, and an error-correcting code needs to be applied first to make the error probability on this channel as small as possible. In this case, the input alphabet $\mc X$ actually is the message set of the error-correcting code. Similarly, in secret key generation, two remote parties will not in general share a random variable $X$ from the outset. In order to establish such a random variable, an \textit{information reconciliation} protocol has to be performed using communication over a public channel. Eve obtains at least part of her correlated information $Z$ about $X$ as she observes the public messages exchanged during information reconciliation.

It follows that a security code or a privacy amplification function will generally be just one component of a modular scheme which as a whole ensures both ``reliability'' (viz.\ error-correction or information reconciliation) and semantic security as well as, in the privacy amplification setting, approximately uniform key distribution.

The two problems above are key techniques for the generation of information-theoretic security in communication and data storage systems. They can be building blocks for embedded security and security-by-design of such systems. An important feature of information-theoretic security is that it provides provable security even against attacks performed by a quantum computer. For this reason, the techniques developed here are of great importance for the development of future 6G mobile communication systems \cite{FettwBoche_personalTactile}. A first practical implementation is presented in \cite{Globecom21_modular}.

\subsection{Security functions}

Both for the wiretap and the privacy amplification scenario, we will assume that Alice and Bob can share an additional resource, a publicly known \textit{seed} $s$ chosen uniformly at random from the finite seed set $\mc S$. Then the basis both for security codes and privacy amplification functions are onto functions $f:\mc X\times\mc S\to\mc A$, where $\mc A$ is a finite set. We will call such a function a \textit{security function}. In the wiretap scenario, $\mc A$ will be the set of confidential messages; in privacy amplification, it represents the range of possible key values. In fact, in privacy amplification, $f$ is nothing else than the privacy amplification function, i.e., given a seed $s\in\mc S$ and a realization $x\in\mc X$ of the random variable $X$ shared by Alice and Bob, the secret key is chosen to be $f(x,s)$ (see Fig.~\ref{fig:PA_scenario}). For the wiretap channel, if Alice wants to send a confidential message $\alpha\in\mc A$ and shares the seed $s$ with Bob, she selects an element $x$ from the preimage $f_s^{-1}(\alpha)=\{x:f(x,s)=\alpha\}$ uniformly at random and transmits $x$. We call this process of selecting $x$ the \textit{randomized inverse} of $f$. By assumption, with high probability, or even with certainty, Bob receives the $x$ that was sent and decodes it into the original confidential message $\alpha=f(x,s)$, so the reliability of message transmission is preserved (see Fig.~\ref{fig:WT_scenario}).

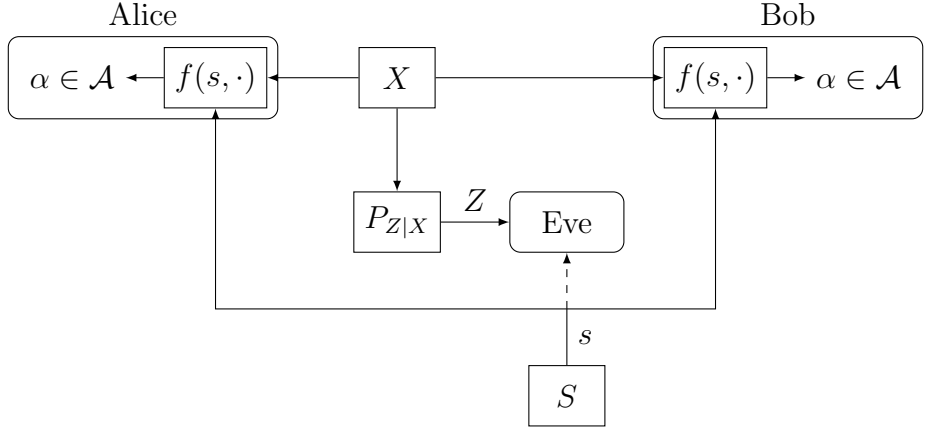
\begin{figure}
	\centering
	\begin{tikzpicture}[kasten/.style={draw, minimum height = .8cm, minimum width = 1cm, inner sep=4pt}, pfeil/.style={->, >=latex}
	]
	
    \node (key_Alice) {$\alpha\in\mc A$};
    \node[kasten, right = .5cm of key_Alice] (uhf_Alice) {$f(s,\cdot)$};
	\begin{scope}[on background layer]
        \node[draw, rounded corners, fit = (key_Alice) (uhf_Alice), label=Alice] {};
    \end{scope}
        
    \node[kasten, right = 1.2cm of uhf_Alice] (source) {$X$};
    
    \node[kasten, right = 3cm of source] (uhf_Bob) {$f(s,\cdot)$};
    \node[right = .5cm of uhf_Bob] (key_Bob) {$\alpha\in\mc A$};
    \begin{scope}[on background layer]
        \node[draw, rounded corners, fit = (uhf_Bob) (key_Bob), label=Bob] {};
    \end{scope}

    \node[kasten, below=1.1cm of source] (eve_channel) {$P_{Z\vert X}$};
    \node[draw, rounded corners, minimum height=.8cm, minimum width=1.5cm, right=.9cm of eve_channel] (eavesdropper) {Eve};
    
    \node[below = .6cm of eavesdropper] (seed_aux) {};
    \node[kasten, below = .6cm of seed_aux] (seed) {$S$};
    
    \draw[pfeil] (source) -> (uhf_Alice);
    \draw[pfeil] (source) -> (uhf_Bob);
    \draw[pfeil] (uhf_Alice) -> (key_Alice);
    \draw[pfeil] (uhf_Bob) -> (key_Bob);
    
    \draw[pfeil] (source) -> (eve_channel);
    \draw[pfeil] (eve_channel) -> (eavesdropper) node[midway, above]{$Z$};
    
    \draw[pfeil, dashed] (seed_aux.center) -- (seed_aux.center|-eavesdropper.south);
    \draw[pfeil] (seed_aux.center) -| (uhf_Alice);
    \draw[pfeil] (seed_aux.center) -| (uhf_Bob);
    \draw (seed) -- (seed_aux.center)  node[right, midway] {$s$};
 \end{tikzpicture}
	\caption{The privacy amplification scenario. The correlation between $X$ and Eve's observation $Z$ here is represented by the conditional probability $P_{Z\vert X}$. Alice and Bob are usually assumed to be connected by a public channel over which they can exchange messages. In particular, they can use this channel to share the seed. The arrow from $S$ to Eve is dashed because Eve may know the seed, but this is not necessary for the operability of the protocol.}\label{fig:PA_scenario}
\end{figure}

\begin{figure}
	\centering
	\begin{tikzpicture}[kasten/.style={draw, minimum height = .8cm, minimum width = 1cm, inner sep=4pt}, pfeil/.style={->, >=latex}
	]
	
    \node (message) {$\alpha\in\mc A$};
    \node[kasten, right = .5cm of message] (inv_uhf) {$f_s^{-1}$};
	\begin{scope}[on background layer]
        \node[draw, rounded corners, fit = (message) (inv_uhf), label=Alice] {};
    \end{scope}
        
    \node[right = 1cm of inv_uhf] (aux1) {};
    
    \node[kasten, right = 3cm of aux1] (uhf) {$f(s,\cdot)$};
    \node[right = .5cm of uhf] (rec_message) {$\alpha\in\mc A$};
    \begin{scope}[on background layer]
        \node[draw, rounded corners, fit = (uhf) (rec_message), label=Bob] {};
    \end{scope}

    \node[kasten, below=1.1cm of aux1] (eve_channel) {$W$};
    \node[draw, rounded corners, minimum height=.8cm, minimum width=1.5cm, right=.9cm of eve_channel] (eavesdropper) {Eve};
    
    \node[below = .6cm of eavesdropper] (seed_aux) {};
    \node[kasten, below = .6cm of seed_aux] (seed) {$S$};
    
    \draw[pfeil] (message) -> (inv_uhf);
    \draw[pfeil] (inv_uhf) -> (uhf) node[above, near start] {$x$};
    \draw[pfeil] (aux1.center) -> (eve_channel);
    \draw[pfeil] (uhf) -> (rec_message);
    \draw[pfeil] (eve_channel) -> (eavesdropper) node[midway, above]{$z$};
    
    \draw[pfeil, dashed] (seed_aux.center) -- (seed_aux.center|-eavesdropper.south);
    \draw[pfeil] (seed_aux.center) -| (inv_uhf);
    \draw[pfeil] (seed_aux.center) -| (uhf);
    \draw (seed) -- (seed_aux.center)  node[right, midway] {$s$};
 \end{tikzpicture}
	\caption{The wiretap scenario in the case where the channel between Alice and Bob is the identity channel. In principle, it is immaterial where the seed is generated. In practice, Alice will generate the seed and transmit it to Bob publicly. The arrow from $S$ to Eve is dashed because Eve may know the seed, but this is not necessary for the operability of the protocol.}\label{fig:WT_scenario}
\end{figure}
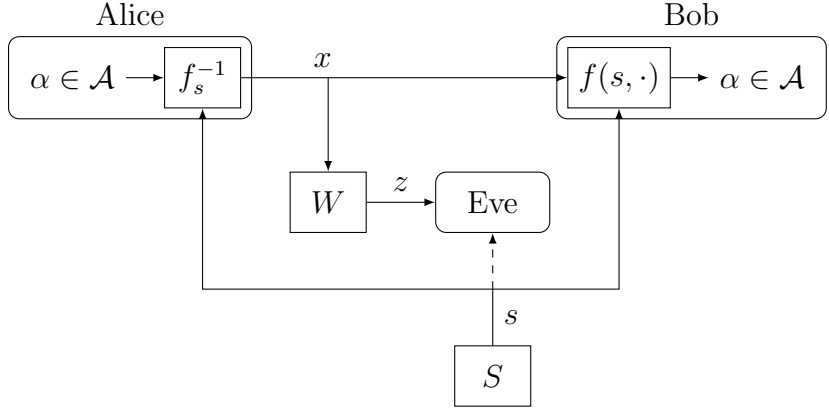

The \textit{color rate} of a security function $f:\mc X\times\mc S\to\mc A$, both in the wiretap and in the privacy amplification context, is given by\footnote{Throughout the paper, $\log$ will denote the logarithm to base 2. When we write $\exp(x)$, we mean $2^x$.}
\[
    \varrho=\frac{\log\lvert\mc A\rvert}{\log\lvert\mc X\rvert}
\]
(the name will be justified in the context of mosaics, see below). As $f$ is onto, this is a number between 0 and 1 which indicates the cost of establishing security as well as, in the privacy amplification scenario, approximately uniform key distribution. This is not a parameter to be optimized. Instead, given a required security level, the channel $W$ in the former and the joint probability $P_{XZ}$ in the latter situation determine a maximal possible color rate. The question is which $f$ achieve or come close to this rate. 

In the wiretap case, a common assumption is that Alice generates the seed, but then she has to use the unsecured channel to transmit it to Bob. This diminishes the overall communication rate significantly. The \textit{block rate}
\[
    \frac{\log\lvert\mc S\rvert}{\log\lvert\mc X\rvert}
\]
indicates how often the unsecured channel needs to be used for the transmission of the seed. It has been shown that in some scenarios the seed can be reused in order to make the loss of overall communication rate negligibly small asymptotically. Nevertheless, it is important to make the seed set $\mc S$ as small as possible. 

The use of a seed is not as problematic in the privacy amplification setting, since it is commonly assumed that there exists a public channel between Alice and Bob. For the purpose of seed sharing, it is sufficient that the public channel goes in one direction only. Usually, one still wants to keep the communication overhead on this public channel small, and this overhead can again be measured by the block rate.

Finally, we would like security codes and privacy amplification functions to be efficiently computable. For an underlying security function, this translates to the efficiency of computing $f(x,s)$ and the randomized inverse $f_s^{-1}(\alpha)$. A precise definition of what we mean by efficiency will be given below.

\subsection{Semantic security by mosaics of designs}

Semantic security can be seen as a per-message type of security. It means that the probability distribution of Eve's observations conditional on any message or key value should be indistinguishable from an arbitrary fixed distribution on Eve's observation space which is independent of the message or key distribution. This suggests to construct security functions $f:\mc X\times\mc S\to\mc A$ whose preimages $f^{-1}(\alpha)$ for every $\alpha\in\mc A$ have a structure suitable for  establishing this indistinguishability. 

Our goal in this paper is to systematically study security functions where every preimage $f^{-1}(\alpha)$ is the incidence relation of a balanced incomplete block design (BIBD) or a group divisible design (GDD) with point set $\mc X$ and block index set $\mc S$. Such a function defines a \textit{mosaic of designs} $(D_\alpha)_{\alpha\in\mc A}$, which is a family of designs on a common point set and a common block index set satisfying that every pair $(x,s)\in\mc X\times\mc S$ is incident in a unique $D_\alpha$. The security function corresponding to such a mosaic will be its \textit{functional form}. The precise definitions will be given in Section \ref{sect:mosaics}. 

Two aspects guide us in the construction of mosaics of designs: the trade-off of the color rate and the block rate, and the computational complexity of the functional form and its randomized inverse. We investigate the optimal trade-off of the color rate $\varrho$ vs.\ the block rate for functional forms of mosaics of BIBDs and GDDs. For mosaics of BIBDs with small color rate, the block rate can at best be equal to 1. In all other cases, the minimal block rate is approximately equal to $2\varrho$. In particular, if $\varrho<1/2$ and the mosaic consists of GDDs, then a block rate smaller than 1 is possible.

We construct families of examples which are close to optimal, or even optimal, in terms of this trade-off. Their color rates are distributed over the complete interval between 0 and 1, densely in all cases except for BIBDs of small color rate. Both for mosaics of BIBDs and GDDs, we need two different families in order to obtain sufficiently variable color rates $\varrho$. In both cases, at $\varrho\approx1/2$, the type of designs which is (close to) optimal in terms of this trade-off changes. 

To the best of our knowledge, we are the first to explicitly study semantic security for privacy amplification. In both scenarios, we measure the amount of semantic security offered by the functional form of a mosaic of BIBDs or GDDs using two alternative security metrics, one of them based on total variation distance, the other on Kullback-Leibler divergence. The upper bounds on these metrics rely on the local properties of the functional form, i.e., on the properties of BIBDs and GDDs. The wiretap channel and the distribution of the random variables $X$ and $Z$ only appear in these bounds through at most two R\'enyi entropies or divergences, which gives the bounds some robustness with respect to the knowledge about the channel or the random variables. 

We evaluate the security bounds in the most frequently studied scenarios of memoryless discrete or Gaussian wiretap channels and of privacy amplification for secret-key generation from discrete memoryless correlated sources. Unfortunately, block rate optimal mosaics of GDDs in the range where $\varrho$ is small only achieve a suboptimal security level in general. This is not due to our construction, but holds in general. Hence a block rate of at least 1 is necessary to achieve asymptotically perfect semantic security at the maximal message or key rate with mosaics of BIBDs or GDDs. For the other block rate optimal constructions, the bounds are asymptotically optimal in the benchmark scenarios. Additionally, in the case of privacy amplification, the regularity of BIBDs and GDDs immediately implies the perfect uniform distribution of the key generated by the functional form of a mosaic of designs.

All the mosaics we construct are explicit, by which we mean the efficient computability of the functional form and its inverse in the usual setting of asymptotic complexity. The examples are derived from well-known designs based on finite fields, so in some cases the explicitness is obvious. There is one case where some work is required to show explicitness.

\subsection{Related literature}\label{subsect:comparison}

Mosaics of combinatorial designs were introduced by Gnilke, Greferath and Pav\v cevi\'c \cite{GGP_mosaics}. Our method of constructing mosaics from resolvable designs or duals thereof is essentially due to them. The application of mosaics to construct functions with special desired properties is new, in particular, the analysis of color and block rates and of efficient computability of such functions. Mosaics generalize more specialized concepts like the tiling of a group with difference sets due to \'Custi\'c, Kr\v cadinac and Zhou \cite{CKZ_tilings}. A predecessor of what now is called mosaics was presented in \cite{Gref_Ther_CCC_designs} by Greferath and Therkelsen. General background on combinatorial designs can be found in the reference work of Beth, Jungnickel and Lenz \cite{BJL_book}.

The idea of separating privacy amplification from information reconciliation goes back to Bennett, Brassard and Robert \cite{BBR_PA} and Bennett, Brassard, Cr\'epeau and Maurer \cite{BBCM_PA}. Hayashi \cite{Hay_expdecr} extended the idea to the construction of security codes for the wiretap channel, where error correction is separated from the establishment of security. Like in \cite{BBR_PA}, \cite{BBCM_PA} and \cite{Hay_expdecr}, the weaker \textit{strong secrecy} criterion has been widely applied in information-theoretic security, where Eve's a priori knowledge is restricted to the true message or key distribution.

Semantic security ensures security no matter what the key or message distribution might be. Originating in complexity-based cryptography, it was adapted for (unconditional) information-theoretic security by Bellare, Tessaro and Vardy \cite{BTV} (the shorter, published version of which is \cite{BTV_published}). To the authors' knowledge, semantic security has only been considered for wiretap channels so far. In \cite{Hay_BB84}, Hayashi implicitly describes a technique for achieving semantic security for the quantum BB84 key distribution protocol.

\cite{BBR_PA}, \cite{BBCM_PA} and \cite{Hay_expdecr} used universal hash functions as security functions. Alternative choices in the privacy amplification scenario with strong secrecy are $\varepsilon$-almost dual universal hash functions (Hayashi \cite{Hay_almost_dual_UHF}) and strong randomness extractors (Maurer and Wolf \cite{MauW_extractors}). None of these choices guarantees perfect uniform distribution of the key. However, the seed required by randomness extractors can be very short. Seedless extractors have been used by Cheraghchi, Didier and Shokrollahi \cite{Cher} to ensure strong secrecy for the ``wiretap channel II'', where the eavesdropper may observe a fraction of his choosing of the transmitted codeword.

When applied as security functions in the wiretap scenario, it seems that the global property defining universal hash functions in general is not enough to ensure semantic security. Even with additional regularity properties (cf.~\cite{BT_poly_time, TV_hashing_published}), semantic security can only be shown for sufficiently symmetric channels. Usually, only strong secrecy is achievable.

Upper bounds on the semantic security metric for the wiretap channel which are comparable to ours were given by Hayashi and Matsumoto \cite[Lemma 21]{HayMat} and the authors \cite{BRI}, using security functions of a different type. The security functions of the former paper are defined in terms of group homomorphisms together with a regularity condition. The single efficiently computable example given in \cite[Remark 16]{HayMat} exhibits a block rate $\approx 2$, which is worse than for mosaics of designs with an optimal trade-off of block rate vs.\ color rate. The security functions of \cite{BRI} are induced by decompositions of complete biregular bipartite graphs into nearly Ramanujan graphs. A nonconstructive example of such a decomposition into Ramanujan graphs is given with a block rate of 1 independent of the color rate.

\subsection{Outline}

In Section \ref{sect:mosaics}, we define and analyze mosaics of BIBDs and GDDs. In Section \ref{sect:sec}, we define how we measure semantic security and give the bounds on the security metrics obtained from functional forms of mosaics of designs. These bounds are proved in Section \ref{sect:sec_proofs}. In Section \ref{sect:maxarc_proof}, we prove the explicitness of one of the examples of Section 2 for which this is not immediately obvious.

\section{Mosaics of combinatorial designs}\label{sect:mosaics}

\subsection{Definitions}\label{subsect:designs_defs}

Let $\mc X$ and $\mc S$ be finite sets. An \textit{incidence structure} $D=(\mc X,\mc S,I)$ on $(\mc X,\mc S)$ is determined by the \textit{incidence relation} $I$ on $\mc X\times\mc S$. An incidence structure $(\mc X,\mc S,I)$ is called \textit{empty} if $I=\emptyset$. If $x\,I\,s$, then $x$ and $s$ are called \textit{incident}. The \textit{incidence matrix} of an incidence structure $D=(\mc X,\mc S,I)$ is the 01-matrix $N$ with rows indexed by $\mc X$ and columns indexed by $\mc S$ such that $N(x,s)=1$ if and only if $x$ and $s$ are incident in $D$. 

A \textit{mosaic of incidence structures} on $(\mc X,\mc S)$ is a family $M=(D_\alpha)_{\alpha\in\mc A}$ of nonempty incidence structures on $(\mc X,\mc S)$ such that for every pair $(x,s)$ there exists a unique incidence structure $D_\alpha$ in which $x$ and $s$ are incident. We call $\mc A$ the \textit{color set} of $M$. Every $D_\alpha$ is called a \textit{member} of $M$. If $N_\alpha$ is the incidence matrix of $D_\alpha$, then $\sum_{\alpha\in\mc A}N_\alpha=J$, the all-ones matrix of appropriate size.

Any function $f:\mc X\times\mc S\to\mc A$ induces a mosaic $(D_\alpha)_{\alpha\in\mc A}$ of incidence structures, where $x$ and $s$ are incident in $D_\alpha$ if and only if $f(x,s)=\alpha$. We say that $f$ is the \textit{functional form} of this mosaic. Clearly, every mosaic $(D_\alpha)_{\alpha\in\mc A}$ on $(\mc X,\mc S)$ has a functional form $f:\mc X\times\mc S\to\mc A$.

We consider the case where every $D_\alpha$ is a combinatorial design. In the context of designs, we will call $\mc X$ the \textit{point set} and $\mc S$ the \textit{block index set}. We set $v=\lvert\mc X\rvert$ and $b=\lvert\mc S\rvert$. A $(v,k,r)$ \textit{tactical configuration} on $(\mc X,\mc S)$ is an incidence structure where every point $x$ is incident with precisely $r$ block indices and every block index $s$ is incident with precisely $k$ points. It holds that
\begin{equation}\label{eq:bireg}
    bk=vr.
\end{equation}

A $(v,k,\lambda)$ \textit{balanced incomplete block design (BIBD)} on $(\mc X,\mc S)$ is an incidence structure on $(\mc X,\mc S)$ such that every $s$ is incident with precisely $k$ points from $\mc X$ and such that any two distinct points from $\mc X$ are incident with precisely $\lambda\geq 1$ common block indices. Every $(v,k,\lambda)$ BIBD is a $(v,k,r)$ tactical configuration, where 
\begin{equation}\label{eq:BIBD_param_rel}
    r(k-1)=\lambda(v-1).
\end{equation}
The key equality when we want to establish security using a security function which is the functional form of a mosaic of BIBDs is that the incidence matrix $N$ of a  $(v,k,\lambda)$ BIBD satisfies
\begin{equation}\label{eq:BIBD_inc_matr}
    NN^T=(r-\lambda)I+\lambda J
\end{equation}
(here $I$ is the identity matrix of appropriate dimensions).

The second type of designs we consider are \textit{group divisible designs (GDDs)}. A $(u,m,k,\lambda_1,\lambda_2)$ GDD is based on a partition of $\mc X$ into $m$ \textit{point classes} of size $u$ each, so $v=um$. Every block index is incident with precisely $k$ points, and two points are incident with $\lambda_1\geq 0$ common block indices if they are contained in the same point class and with $\lambda_2\geq 1$ block indices otherwise. A $(u,m,k,\lambda_1,\lambda_2)$ GDD is a $(v,k,r)$ tactical configuration for $r$ satisfying
\begin{equation}\label{eq:GDD_param_rel}
    r(k-1)=\lambda_1(u-1)+\lambda_2(m-1)u.
\end{equation}
An equality similar to \eqref{eq:BIBD_inc_matr} holds for the incidence matrix $N$ of a GDD. Let $C$ be the 01-matrix with rows and columns indexed by $\mc X$ which has a $1$ in the $(x,x')$ entry if and only if $x$ and $x'$ are contained in the same point class. With a suitable ordering of the elements of $\mc X$, this is a block diagonal matrix with $m$ all-ones matrices of size $u$ each on the diagonal. Then
\begin{equation}\label{eq:GDD_inc_matr}
    NN^T=(r-\lambda_1)I+(\lambda_1-\lambda_2)C+\lambda_2J.
\end{equation}

For a BIBD or GDD $(\mc X,\mc S,I)$, the sets of the form $\{x:x\,I\,s\}$, where $s\in\mc S$, are usually called \textit{blocks} and the set $\mc S$ is identified with the multiset of blocks of the design. Occasionally, we will also speak of blocks and call the parameter $k$ the \textit{block size}. However, we will not identify $\mc S$ with a block multiset since we operate with multiple designs simultaneously. Hence the more cumbersome term ``block index set''. 

All mosaics in this paper will consist of tactical configurations with the same parameters $(v,k,r)$. Given a mosaic $(D_\alpha)_{\alpha\in\mc A}$, we will use the letter $a$ to indicate the cardinality of its color set $\mc A$. If $(D_\alpha)_{\alpha\in\mc A}$ is a mosaic of $(v,k,r)$ tactical configurations, then
\[
    a=\frac{v}{k}.
\]

In fact, the examples of mosaics constructed in the present paper will exclusively consist of BIBDs only or of GDDs only. BIBDs and GDDs together allow us to construct security functions with a wide range of color rates between 0 and 1. For a mosaic of BIBDs with constant block size $k$, note that $\lambda$ also has to be constant due to \eqref{eq:bireg} and \eqref{eq:BIBD_param_rel}.

\subsection{Some properties and examples of designs}\label{subsect:prop_ex_des}

If $D=(\mc X,\mc S,I)$ is an incidence structure, then its \textit{dual} is the incidence structure $D^T=(\mc S,\mc X,I^T)$ where $s\,I^T\,x$ if and only if $x\,I\,s$. Obviously, the incidence matrix of $D^T$ is the transpose of the incidence matrix of $D$. If $(D_\alpha)_{\alpha\in\mc A}$ is a mosaic of designs, then so is $(D_\alpha^T)_{\alpha\in\mc A}$.

A $(v,k,r)$ tactical decomposition $(\mc X,\mc S,I)$ is called \textit{resolvable} if the block index set $\mc S$ can be partitioned into subsets $\mc S_1,\ldots,\mc S_r$ such that for every $j\in\{1,\ldots,r\}$, every $x\in\mc X$ is incident with a unique $s\in\mc S_j$. (It is clear that such a partition necessarily has to have precisely $r$ elements.) Every $\mc S_j$ is called a \textit{parallel class} and contains $v/k$ block indices, in particular, $k$ divides $v$. 

The \textit{sum} of a mosaic is the incidence structure on $(\mc X,a\mc S)$, where $a\mc S$ is the disjoint union of $a$ copies of $\mc S$, and where a point $x$ is incident with the $\alpha$-th copy of $s\in\mc S$ if $x$ and $s$ are incident in $D_\alpha$. Note that the sum of a mosaic of tactical configurations is resolvable.

Two incidence structures $(\mc X,\mc S,I)$ and $(\mc X',\mc S',I')$ are called \textit{isomorphic} if there exist bijective mappings $\Phi_{\mc X}:\mc X\to\mc X'$ and $\Phi_{\mc S}:\mc S\to\mc S'$ such that $x\,I\,s$ if and only if $\Phi_{\mc X}(x)\,I'\,\Phi_{\mc S}(s)$. We also define that two mosaics $(D_\alpha)_{\alpha\in\mc A}$ on $(\mc X,\mc S)$ and $(D'_{\alpha'})_{\alpha'\in\mc A'}$ on $(\mc X',\mc S')$ are isomorphic if there exist bijective mappings $\Phi_{\mc X}:\mc X\to\mc X'$ and $\Phi_{\mc S}:\mc S\to\mc S'$ and $\Phi_{\mc A}:\mc A\to\mc A'$ such that $x\in\mc X$ and $s\in\mc S$ are incident in $D_\alpha$ for $\alpha\in\mc A$ if and only if $\Phi_{\mc X}(x)$ and $\Phi_{\mc S}(s)$ are incident in $D_{\Phi_{\mc A}(\alpha)}$.

A BIBD is called \textit{affine} or \textit{affine resolvable} if it is resolvable and if there exists a number $\mu>0$ such that any two distinct non-parallel blocks have precisely $\mu$ points in common. An \textit{affine plane} is an affine BIBD with $\mu=1$ and block size at least 2. Affine BIBDs have the property that their number of blocks is minimal among all resolvable BIBDs with the same number of points and parallel classes. This is a consequence of Bose's inequality, which states that
\begin{equation}\label{eq:Bose}
    b\geq v+r-1
\end{equation}
for resolvable BIBDs \cite[Corollary 8.6]{BJL_book}, and that equality holds if and only if the BIBD is affine.

Here we give the classical examples of affine designs, on which our constructions below will be based. This is no restriction, since all known affine BIBDs have the same parameters as the affine-geometric ones below or are Hadamard designs \cite[p. 128]{BJL_book}. We ignore the latter since they are limited to $v/k=a=2$, which only allows a very small color rate which vanishes asymptotically as $v$ increases.

Let $q$ be a prime power and $t\geq 2$. The $(q^t,q^{t-1},q^{t-2})$ BIBD $AG_{t-1}(t,q)$ has as block set the vector space $\mbb F_q^t$, the blocks are given by the hyperplanes of this vector space, i.e., all cosets of all $(t-1)$-dimensional subspaces, and the incidence relation is $\in$. These designs are affine resolvable, the parallel classes are given by the sets of nonintersecting hyperplanes. In the case $t=2$, one obtains the affine plane $AG(2,q)$, where the hyperplanes are called \textit{lines}.

\subsection{Block rate optimality}\label{subsect:bro}

We characterize block rate optimality for mosaics of BIBDs and GDDs.

\begin{lem}\label{lem:BRO_BIBD}
    Let $(D_\alpha)_{\alpha\in\mc A}$ be a mosaic of $(v,k,\lambda)$ BIBDs with $a\geq 2$ and color rate $\varrho$. Then
    \begin{equation}\label{eq:b_lb_BIBD}
        b\geq\max\left\{\frac{(v-1)ka^2}{v(k-1)},v\right\}.
    \end{equation}
    Setting
    \[
        \varrho_0(v,k)= 1-\frac{\log(v-1)+\log k-\log(k-1)}{2\log v},
    \]
    then this means for the block rate that
    \begin{equation}\label{eq:rho_lb_BIBD}
        \frac{\log b}{\log v}
        \begin{cases}
            >2\varrho & \text{if }\varrho>\varrho_0(v,k),\\
            \geq 1 & \text{if }\varrho\leq\varrho_0(v,k).
        \end{cases}
    \end{equation}
    If $\varrho\geq\varrho_0(v,k)$, then equality holds in \eqref{eq:b_lb_BIBD} if and only if $\lambda=1$. If $\varrho<\varrho_0(v,k)$, then equality holds in \eqref{eq:b_lb_BIBD} and \eqref{eq:rho_lb_BIBD} if and only if $b=v$. We call a mosaic of $(v,k,\lambda)$ BIBDs satisfying equality in one of these two cases 
    \emph{block rate optimal}. 
    \end{lem}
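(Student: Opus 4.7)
The plan is to establish the two individual lower bounds in \eqref{eq:b_lb_BIBD} separately, and then deduce \eqref{eq:rho_lb_BIBD} together with the equality characterization by an elementary comparison.

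For the bound $b \geq v$, I invoke Fisher's inequality applied to any single member $D_\alpha$, which is itself a $(v,k,\lambda)$ BIBD. Since $a \geq 2$ forces $v > k$, \eqref{eq:BIBD_param_rel} gives $r/\lambda = (v-1)/(k-1) > 1$, so $r > \lambda$, and \eqref{eq:BIBD_inc_matr} shows that $NN^T = (r-\lambda)I + \lambda J$ is positive definite. Consequently $N$ has full row rank $v$, so $b \geq v$, with equality precisely when the design is symmetric. For the bound $b \geq (v-1)ka^2/(v(k-1))$, I combine the parameter relations: \eqref{eq:bireg} and \eqref{eq:BIBD_param_rel} give $b = \lambda v(v-1)/(k(k-1))$, which after substituting $a = v/k$ equals $\lambda \cdot (v-1)ka^2/(v(k-1))$. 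Since $\lambda \geq 1$, the bound follows, with equality iff $\lambda = 1$.

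To deduce \eqref{eq:rho_lb_BIBD}, I compare the two quantities on the right-hand side of \eqref{eq:b_lb_BIBD}. Because $v > k$, a short calculation gives $(v-1)k/(v(k-1)) > 1$, so $(v-1)ka^2/(v(k-1)) > a^2 = v^{2\varrho}$; the first bound therefore strictly exceeds $v^{2\varrho}$ whenever it is binding. Next, the inequality $(v-1)ka^2/(v(k-1)) \geq v$ is equivalent, after taking logarithms and dividing by $2\log v$, to $\varrho \geq \varrho_0(v,k)$, which separates the two regimes in \eqref{eq:rho_lb_BIBD}.

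The equality claims follow from identifying which of the two bounds is binding. For $\varrho > \varrho_0(v,k)$ the first bound strictly exceeds $v$, so equality in \eqref{eq:b_lb_BIBD} is equivalent to $\lambda = 1$. For $\varrho < \varrho_0(v,k)$ the second bound strictly exceeds the first, so equality in \eqref{eq:b_lb_BIBD} and \eqref{eq:rho_lb_BIBD} is equivalent to $b = v$. The threshold case $\varrho = \varrho_0(v,k)$ needs a quick consistency check: both bounds then coincide at $v$, and the defining identity of $\varrho_0$ forces $v = k^2 - k + 1$; for these parameters $\lambda = 1$ and $b = v$ are easily seen to be equivalent (both correspond to a projective plane of order $k-1$), which reconciles the two equality statements. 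The only nontrivial ingredient is Fisher's inequality; the rest is algebraic bookkeeping, and the main care required lies in handling the boundary $\varrho = \varrho_0(v,k)$ cleanly.
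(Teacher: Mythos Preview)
Your proof is correct and follows essentially the same approach as the paper: derive the first bound from the parameter relations \eqref{eq:bireg} and \eqref{eq:BIBD_param_rel} (the paper writes this as $b=ra=\lambda(v-1)a/(k-1)$, which is the same computation), invoke Fisher's inequality for the second, and then compare the two bounds using $a=v^\varrho$ to obtain \eqref{eq:rho_lb_BIBD} and the equality characterizations. Your version is slightly more detailed---you sketch Fisher's inequality via the rank of $NN^T$ and explicitly reconcile the boundary case $\varrho=\varrho_0(v,k)$ with the projective-plane parameters $v=k^2-k+1$---but these are elaborations rather than a different route.
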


\begin{proof}
    Using \eqref{eq:bireg} and \eqref{eq:BIBD_param_rel},
    \[
        b=ra=\frac{\lambda(v-1)a}{k-1}\geq\frac{(v-1)ka^2}{v(k-1)}.
    \]
    Clearly, equality holds if and only if $\lambda=1$. The well-known Fisher's inequality \cite[Theorem II.2.6]{BJL_book} for BIBDs states $b\geq v$ if $k<v$, which settles \eqref{eq:b_lb_BIBD}.
    
    For the proof of \eqref{eq:rho_lb_BIBD}, observe that since $a=v^\varrho$, the maximum in \eqref{eq:b_lb_BIBD} is $v$ if and only if $\varrho\leq\varrho_0(v,k)$. If $\varrho>\varrho_0(v,k)$, then strict inequality has to hold due to $v>k$.
\end{proof}

We note that $\varrho_0(v,k)$ quickly approaches $1/2$ from below as $v$ increases.

We also consider the block rate for GDDs. This is connected to some subclasses of GDDs. First, we recall the classification of GDDs due to Bose and Connor \cite{BC_comb_GDDs}. A GDD is called
\begin{enumerate}
    \item \textit{singular} if $r=\lambda_1$,
    \item \textit{semi-regular} if $r>\lambda_1$ and $rk=v\lambda_2$,
    \item \textit{regular} if $r>\lambda_1$ and $rk>v\lambda_2$.
\end{enumerate}
Every GDD falls under exactly one of these categories. 

An important subclass of the semi-regular GDDs are the \textit{transversal designs}, which satisfy that every block intersects every point class in precisely one point. In this case $m=k$ and $\lambda_1=0$. We call a transversal design with these parameters a $(u,k,\lambda)$ TD, where $\lambda=\lambda_2$. Hanani \cite{Hanani_TDs} has shown that a $(u,k,\lambda)$ TD necessarily satisfies
\begin{equation}\label{eq:Hanani}
    k\leq\frac{\lambda u^2-1}{u-1}.
\end{equation}

\begin{lem}\label{lem:TD_BRO}
    Let $(D_\alpha)_{\alpha\in\mc A}$ be a mosaic of GDDs of constant block size $k$ and of color rate $\varrho$. Then
    \[
        \frac{\log b}{\log v}\geq2\varrho.
    \]
    Equality holds if and only if every $D_\alpha$ is an $(a,k,1)$ TD. We call such a mosaic \emph{block rate optimal}.
\end{lem}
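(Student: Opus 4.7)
The plan is to translate the block-rate bound into an inequality for the common replication number $r$, prove that inequality via the spectrum of $N_\alpha N_\alpha^T$, and then analyze the equality case using the parameter relation \eqref{eq:GDD_param_rel}.

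For the reduction, since each $D_\alpha$ is a $(v,k,r)$ tactical configuration, \eqref{eq:bireg} gives $b = vr/k = ar$ with $a = v/k$. Since $\varrho = \log a/\log v$, the claim $\log b/\log v \geq 2\varrho$ is equivalent to $b \geq a^2$, and hence to $rk \geq v$. Note that $r$ is the same for every member of the mosaic, since it is determined by $v$, $k$, and $b$.

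To show $rk \geq v$, I would fix any $D_\alpha$, a $(u,m,k,\lambda_1,\lambda_2)$ GDD with $m \geq 2$, and read off the spectrum of $N_\alpha N_\alpha^T$ from \eqref{eq:GDD_inc_matr}. The matrix $C$ has eigenvalue $u$ on vectors constant on each point class and $0$ on vectors summing to zero within every class, so $N_\alpha N_\alpha^T$ has eigenvalue $rk$ on the all-ones vector, eigenvalue $rk - v\lambda_2$ on class-constant globally zero-sum vectors (multiplicity $m-1$), and eigenvalue $r - \lambda_1$ on within-class zero-sum vectors (multiplicity $v-m$). Since $N_\alpha N_\alpha^T$ is positive semi-definite, the middle eigenvalue must be nonnegative, so $rk \geq v\lambda_2 \geq v$ using $\lambda_2 \geq 1$.

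For the equality case, $rk = v$ forces simultaneously $\lambda_2 = 1$ and $rk = v\lambda_2$, so the GDD is semi-regular with $\lambda_2 = 1$. Substituting $r = v/k = um/k$ and $\lambda_2 = 1$ into \eqref{eq:GDD_param_rel} and simplifying gives $\lambda_1 = (u-a)/(u-1)$. Since $\lambda_1$ is a nonnegative integer and $a \geq 1$, the only possibilities are $\lambda_1 = 0$, which forces $u = a$ and $m = k$, and the degenerate $\lambda_1 = 1$ with $a = 1$. In the non-degenerate case, $\lambda_1 = 0$ together with $m = k$ means that each block meets each of the $k$ point classes in precisely one point, so $D_\alpha$ is a $(a,k,1)$ TD; the degenerate case collapses to a single block containing all of $\mc X$, which up to relabeling of point classes is a $(1,k,1)$ TD. Conversely, a direct parameter check shows that a $(a,k,1)$ TD has $r = a$, hence $b = a^2$.

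The main obstacle is the equality analysis: a priori one must rule out exotic semi-regular GDDs with $\lambda_2 = 1$ and $\lambda_1 > 0$ that are not TDs. The algebraic identity $\lambda_1 = (u-a)/(u-1)$, forced by \eqref{eq:GDD_param_rel}, together with the integrality constraint on $\lambda_1$, is the decisive step that pins the parameters down to those of a TD.
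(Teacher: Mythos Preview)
Your proof is correct and takes a genuinely different route from the paper's. The paper argues purely from the parameter relation \eqref{eq:GDD_param_rel}: it writes
\[
b=ra=\frac{\lambda_1(u-1)+\lambda_2 u(m-1)}{k-1}\,a\ge\frac{\lambda_2(v-u)}{k-1}\,a,
\]
notes that equality forces $\lambda_1=0$ and hence $m\ge k$, and then squeezes out $b\ge a^2$ with equality precisely when $m=k$ and $\lambda_2=1$. Your argument instead reads off the eigenvalue $rk-v\lambda_2$ of $N_\alpha N_\alpha^T$ from \eqref{eq:GDD_inc_matr} and invokes positive semidefiniteness to get $rk\ge v\lambda_2\ge v$; this is essentially the Bose--Connor trichotomy derived on the spot. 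For the equality case you then feed $r=v/k$ and $\lambda_2=1$ back into \eqref{eq:GDD_param_rel} to obtain $\lambda_1=(u-a)/(u-1)$ and use integrality. Both approaches are short; the paper's is slightly more elementary (no linear algebra), while yours makes the connection to the semi-regular class transparent and explains \emph{why} equality singles out $rk=v\lambda_2$.

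Two small points worth tightening. First, your formula $\lambda_1=(u-a)/(u-1)$ tacitly assumes $u\ge 2$; when $u=1$ the $\lambda_1$ condition is vacuous, but then \eqref{eq:GDD_param_rel} with $\lambda_2=1$ and $r=v/k$ reduces to $v=k$, i.e.\ $a=1$, so the conclusion still holds. Second, in the degenerate case $a=1$ you appeal to ``relabeling of point classes'' to view the single all-encompassing block as a $(1,k,1)$ TD; this is fine because the lemma does not assume a common point class partition across the $D_\alpha$, so being a TD is a property of the incidence structure alone.
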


\begin{proof}
    The parameters $v,b,k,r$ are the same for all members of the mosaic. Choose any $\alpha\in\mc A$ and assume that $D_\alpha$ is a $(u_\alpha,m_\alpha,k,\lambda_{1,\alpha},\lambda_{2,\alpha})$ GDD. By \eqref{eq:GDD_param_rel}
    \begin{align*}
        b
        &=ra\\
        &=\frac{\lambda_{1,\alpha}(u_\alpha-1)+\lambda_{2,\alpha}u_\alpha(m_\alpha-1)}{k-1}a\\
        &\geq\frac{\lambda_{2,\alpha}(v-u_\alpha)}{k-1}a.
    \end{align*}
    Equality here implies $\lambda_{1,\alpha}=0$, whence also $m_\alpha\geq k$. In this case,
    \[
        \frac{\lambda_{2,\alpha}(v-u_\alpha)}{k-1}
        =\frac{\lambda_{2,\alpha}ak}{k-1}\left(1-\frac{1}{m_\alpha}\right)
        =\frac{\lambda_{2,\alpha}ak(m_\alpha-1)}{m_\alpha(k-1)}
        \geq\lambda_{2,\alpha}a
        \geq a.
    \]
    Equality holds for $m_\alpha=k$ and $\lambda_{2,\alpha}=1$. Thus altogether we obtain
    \[
        b\geq a^2,
    \]
    with equality as claimed in the statement.
\end{proof}

Unfortunately, the block rate of any mosaic one of whose members is a semi-regular GDD cannot be much smaller than 1. This implies that the minimal possible color rate of a block rate optimal GDD quickly approaches $1/2$ from below as the number of colors increases.

\begin{lem}
	Consider a mosaic $M$ of $(v,k,r)$ tactical configurations and of color rate $\varrho<1/2$. Assume that its member $D_\alpha$ is a $(u,m,k,\lambda_1,\lambda_2)$ semi-regular GDD. If $\log a-\log(a-1)\leq\varepsilon$, then
	\[
	\frac{\log b}{\varrho\log v}\geq\frac{1}{\varrho}-\frac{\varepsilon}{\log a}.
	\]
\end{lem}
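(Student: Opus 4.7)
First, I would reduce the target inequality to a purely combinatorial lower bound on $b$. Since $\varrho\log v = \log a$ and $v = ak$, the inequality $\log b/(\varrho\log v) \geq 1/\varrho - \varepsilon/\log a$ rearranges to $\log b \geq \log v - \varepsilon$; moreover, since $\log a - \log(a-1) \leq \varepsilon$ by assumption, it suffices to prove $b \geq (a-1)k+1$, because then $\log b \geq \log k + \log(a-1) = \log v - (\log a - \log(a-1)) \geq \log v - \varepsilon$. So the task becomes establishing this bound on the column count of the incidence matrix $N$ of the semi-regular member $D_\alpha$.

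I would deduce $b \geq (a-1)k+1$ from two classical Bose--Connor facts about semi-regular GDDs, both extracted from the identity \eqref{eq:GDD_inc_matr}. For the first, $b \geq v - m + 1$, I would decompose $\mbb R^{\mc X} = \langle\mathbf{1}\rangle \oplus U \oplus W$, where $U$ is the $(m-1)$-dimensional space of point-class-constant vectors summing to zero and $W$ is the $(v-m)$-dimensional space of vectors summing to zero on each individual point class. On $\langle\mathbf{1}\rangle$, $NN^T$ acts as $rk$; on $W$, both $C$ and $J$ annihilate, so $NN^T$ acts as $r - \lambda_1 > 0$ by the first half of semi-regularity; on $U$, $J$ annihilates and $C$ acts as multiplication by $u$, and the resulting eigenvalue $r + (u-1)\lambda_1 - u\lambda_2$ simplifies via \eqref{eq:GDD_param_rel} to $rk - v\lambda_2$, which vanishes by the second half of semi-regularity. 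Hence $\mathrm{rank}(N) = \mathrm{rank}(NN^T) = v - m + 1$, and since $N$ has $b$ columns, $b \geq v - m + 1$.

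The second fact, $m \leq k$, falls out of the same algebra: inserting $rk = v\lambda_2 = um\lambda_2$ into \eqref{eq:GDD_param_rel} gives $\lambda_1(u-1) = u\lambda_2(k-m)/k$, so $k \geq m$ since $\lambda_1 \geq 0$ and $u, k, \lambda_2 > 0$. (The degenerate case $u = 1$ reduces $D_\alpha$ to a BIBD, where Fisher's inequality $b \geq v > (a-1)k$ gives the claim outright.) Combining the two, $b \geq v - m + 1 \geq v - k + 1 = (a-1)k+1$, and the logarithmic manipulation above finishes the proof. The only real technical hurdle is the eigenspace computation---in particular, verifying that the two semi-regularity conditions line up exactly with the vanishing eigenvalue on $U$ and the positivity on $W$; everything else is bookkeeping.
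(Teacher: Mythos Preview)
Your proof is correct and rests on the same two Bose--Connor facts the paper invokes: the rank bound $b\geq v-m+1$ and $m\leq k$ (the paper actually cites the stronger $m\mid k$). The only differences are organizational: you re-derive these facts via the eigenvalue decomposition of $NN^T$ rather than citing \cite{BC_comb_GDDs}, and you reduce directly to $b\geq(a-1)k+1$ and take logarithms, whereas the paper first writes $b=a^2\lambda_2$ from semi-regularity, expresses $\log b/(\varrho\log v)=2+\log\lambda_2/\log a$, and then uses the same Bose--Connor bound to control $\lambda_2$ from below---the two routes are algebraically equivalent.
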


\begin{proof}
	From \eqref{eq:bireg} and the semi-regularity of $D_\alpha$, it follows that
	\begin{equation}\label{eq:b_semireg}
		b=a^2\frac{kr}{v}=a^2\lambda_2.
	\end{equation}
	Since $\log a=\varrho\log v$, this means that
	\begin{equation}\label{eq:GDD_rate}
		\frac{\log b}{\varrho\log v}=2+\frac{\log\lambda_2}{\log a}.
	\end{equation}
	
	The color rate is connected to $\lambda$ as follows. It was shown for semi-regular GDDs in \cite{BC_comb_GDDs} that 
	\begin{equation}\label{eq:Hanani_gen}
		b-1\geq v-m=m(u-1)
	\end{equation}
	(this generalizes Hanani's inequality) and that $m$ divides $k$, say $k=cm$. This implies $u=ac$, since $um=v=ak=acm$. Inserting this and \eqref{eq:b_semireg} in \eqref{eq:Hanani_gen}, one obtains
	\begin{equation}\label{eq:colorrate_lb}
		\varrho
		=\frac{\log a}{\log a+\log k}
		\geq\frac{\log a}{\log a+\log(a^2\lambda_2-1)-\log(ac-1)},
	\end{equation}
	hence
	\[
		\log(a^2\lambda_2-1)\geq\frac{\log a}{\varrho}-\log a+\log(a-1)
	\]
	and 
	\[
		\log\lambda_2\geq\left(\frac{1}{\varrho}-2\right)\log a-\varepsilon.
	\]
	Inserting this in \eqref{eq:GDD_rate} gives the result.
\end{proof}

\begin{cor}\label{cor:BRO_TD_neccond}
	A necessary condition for a mosaic $(D_\alpha)_{\alpha\in\mc A}$ of $(u,k,1)$ TDs to be block rate optimal is that the color rate $\varrho$ satisfies
	\[
	\varrho\geq\frac{\log u}{\log u+\log(u+1)}.
	\]
	Equality is attained if and only if every $D_\alpha$ is the dual of an affine plane.
\end{cor}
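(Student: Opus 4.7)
The plan is to combine Lemma~\ref{lem:TD_BRO} with Hanani's inequality~\eqref{eq:Hanani} and the classical correspondence between extremal transversal designs and affine planes. First, if the mosaic of $(u,k,1)$ TDs is block rate optimal, then Lemma~\ref{lem:TD_BRO} forces every member $D_\alpha$ to be an $(a,k,1)$ TD, so the parameter $u$ agrees with the number of colors $a$. Consequently $v=uk$, and the color rate takes the form
\[
    \varrho=\frac{\log a}{\log v}=\frac{\log u}{\log u+\log k}.
\]

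Next, I would apply Hanani's inequality~\eqref{eq:Hanani} with $\lambda=1$, which gives $k\leq (u^2-1)/(u-1)=u+1$. Substituting this into the expression above immediately yields the claimed lower bound on $\varrho$, and equality holds if and only if $k=u+1$, i.e., every $D_\alpha$ is a $(u,u+1,1)$ TD. So the inequality part is essentially a one-line consequence of Hanani once block rate optimality has pinned down $u=a$.

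It then remains to identify $(u,u+1,1)$ TDs as exactly the duals of affine planes of order $u$. For the ``$\Leftarrow$'' direction, a direct check shows that the dual of $AG(2,u)$ has $v'=u(u+1)$ points, $b'=u^2$ blocks, block size $u+1$; the parallel classes of $AG(2,u)$ become point classes in the dual, the fact that every point of $AG(2,u)$ lies on exactly one line of each parallel class gives the transversal property, and parallelism versus intersection of lines yields $\lambda_1=0$ and $\lambda_2=1$. For the ``$\Rightarrow$'' direction, take a $(u,u+1,1)$ TD $T$ and consider two blocks $B_1,B_2$; since any two points of $B_1$ lie in different point classes, they are already joined by the unique block $B_1$ via $\lambda_2=1$, which forces $|B_1\cap B_2|\le 1$. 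Then comparing $\sum_{B'}\binom{|B'|}{2}=u(u+1)\binom{u}{2}$ with $\binom{u^2}{2}$ for the dual $T^T$ shows the two quantities are equal, so every pair of points of $T^T$ lies in exactly one block, making $T^T$ a $(u^2,u,1)$ BIBD with the correct number $u(u+1)$ of blocks, which is an affine plane of order $u$.

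The main obstacle is the last identification: the inequality itself is immediate from Hanani, but turning the equality condition $k=u+1$ into the structural statement ``dual of an affine plane'' requires the short but nontrivial counting argument above (equivalently, the classical fact that a $(u,u+1,1)$ TD encodes a complete set of $u-1$ MOLS of order $u$, hence an affine plane of order $u$).
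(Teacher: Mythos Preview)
Your proof is correct. The inequality part is essentially the paper's argument: both reduce immediately to Hanani's inequality \eqref{eq:Hanani} with $\lambda=1$ (the paper phrases this via inequality \eqref{eq:colorrate_lb} from the preceding lemma with $c=1$, which is the same computation). For the equality case, the paper takes a different route: it cites Neumaier's theorem \cite[Corollary 3.8]{Neumaier_t12} that a transversal design attains equality in Hanani's inequality if and only if it is the dual of an affine BIBD, and then observes that $\lambda=1$ forces any two blocks of the dual to meet in at most one point, so the affine BIBD is an affine plane. Your double-counting argument ($|B_1\cap B_2|\le 1$ together with $\sum_{B'}\binom{|B'|}{2}=\binom{u^2}{2}$) is more elementary and self-contained, avoiding the external reference, while the paper's route is shorter but relies on Neumaier. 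One minor point: you write $AG(2,u)$ for the $\Leftarrow$ direction, but your argument applies to any affine plane of order $u$, not just the Desarguesian one.
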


\begin{proof}
    We know that that $\lambda=1$ for mosaics of block rate optimal TDs. Using this and $c=1$ in \eqref{eq:colorrate_lb}, which holds for arbitrary $\varrho$, gives the lower bound.
    
    Assume that equality holds, and so Hanani's inequality holds with equality for every $D_\alpha$. According to Neumaier \cite[Corollary 3.8]{Neumaier_t12}, equality holds in Hanani's inequality for a TD $D$ if and only if $D$ is the dual of an affine BIBD. Thus every $D_\alpha$ is the dual of an affine BIBD. Since every $D_\alpha$ is a $(u,k,1)$ TD, any two distinct blocks of its dual $D_\alpha^T$ intersect in at most one point, hence $D_\alpha^T$ is an affine plane.
\end{proof}

We have seen that we cannot come close to block rate optimality for rates well below $1/2$ using mosaics which contain at least one semi-regular GDD. The same holds for mosaics which have at least one regular GDD as a member, since regular GDDs satisfy $b\geq v$ by \cite{BC_comb_GDDs}, so
\[
\frac{\log b}{\varrho\log v}
\geq\frac{1}{\varrho}.
\]

For color rates smaller than those in Corollary \ref{cor:BRO_TD_neccond}, the solution is to use singular GDDs. (However, we will see that singular block rate optimal GDDs give suboptimal bounds for semantic security for sufficiently large point set.) Bose and Connor show in \cite{BC_comb_GDDs} that every singular GDD is obtained by the multiplication of the points of a BIBD. We apply the same construction in order to obtain a mosaic $M=(D_\alpha)_{\alpha\in\mc A}$ of singular GDDs from a mosaic $M^*=(D_\alpha^*)_{\alpha\in\mc A}$ of $(v^*,k^*,\lambda^*)$ BIBDs on $(\mc X^*,\mc S^*)$. For an arbitrary positive integer $u$, replace each point $x^*\in\mc X^*$ by a class of $u$ copies of $x^*$. This gives the point set $\mc X$ of $M$. The block index set does not change, we set $\mc S=\mc S^*$. A point $x$ and a block index $s$ are defined to be incident in $D_\alpha$ if $x$ is a copy of an $x^*$ which is incident with $s$ in $D_\alpha^*$. Every $D_\alpha$ has parameters
\[
    u,\quad m=v^*,\quad k=uk^*,\quad \lambda_1=r=r^*,\quad \lambda_2=\lambda^*.
\]
Note that all members of $M$ have the same point class partition. We call $M$ the \textit{$u$-fold point multiple} of $M^*$. 

Conversely, one shows in the same way as in \cite{BC_comb_GDDs} that every mosaic of singular GDDs with the same parameters and the same point class partition is the $u$-fold point multiple of a mosaic of BIBDs.

\begin{lem}\label{lem:sing_GDDs}
    Let $M^*=(D^*_\alpha)_{\alpha\in\mc A}$ be a mosaic of $(v^*,k^*,\lambda^*)$ BIBDs with color rate $\varrho^*$. For a positive integer $u$, let $M$ be the $u$-fold point multiple of $M^*$. Then  $M$ has the color rate
    \[
        \varrho=\frac{\varrho^*\log v^*}{\log v^*+\log u},
    \]
    and satisfies
    \[
        \frac{\log b}{\varrho\log v}=\frac{\log b^*}{\varrho^*\log v^*}.
    \]

\end{lem}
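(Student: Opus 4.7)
The plan is to verify the two equalities by direct parameter bookkeeping, using the explicit description of how the $u$-fold point multiple construction transforms the parameters of $M^*$ into those of $M$.

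First I would record the parameter changes. Under the construction, the point set is inflated by a factor of $u$, so $v = uv^*$, while the block index set is unchanged, so $b = b^*$. The color set is the same for $M^*$ and $M$, so the number of colors $a$ is unchanged, and the identity $a = v/k = v^*/k^*$ (which holds for any mosaic of tactical configurations, as stated at the end of Subsection~\ref{subsect:designs_defs}) is consistent with $k = uk^*$.

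Next I would compute the color rate $\varrho$ of $M$ directly from its definition:
\[
\varrho
=\frac{\log a}{\log v}
=\frac{\log a}{\log v^*+\log u}
=\frac{\varrho^*\log v^*}{\log v^*+\log u},
\]
where in the last step I used $\log a=\varrho^*\log v^*$. This gives the first claimed identity.

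For the second identity, observe that $\varrho\log v=\log a=\varrho^*\log v^*$, so both $\log b/(\varrho\log v)$ and $\log b^*/(\varrho^*\log v^*)$ have the common value $\log b^*/\log a$ (using $b=b^*$ on the left). Since there are no genuine obstacles here, the only thing to be careful about is to state explicitly that the color set and block index set are preserved by the construction, so that no conflation occurs between $a$, $b$, $b^*$ on the one hand and $v$, $v^*$, $k$, $k^*$ on the other; the rest is a one-line computation in each case.
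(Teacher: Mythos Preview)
Your proposal is correct and takes essentially the same approach as the paper: both arguments note that $a=a^*$, $b=b^*$, and $v=uv^*$, then compute $\varrho=\log a/(\log v^*+\log u)$ and observe that $\varrho\log v=\log a=\varrho^*\log v^*$ to conclude. Your write-up is slightly more explicit about the parameter bookkeeping, but the logical content is identical.
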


\begin{proof}
    For the color rate, observe that $a=a^*$, and so
    \[
        \varrho=\frac{\log a}{\log v^*+\log u}=\frac{\log a^*}{\log v^*}\cdot\frac{\log v^*}{\log v^*+\log u},
    \]
    as claimed. The claim about the block rates follows from $v^\varrho=a=a^*=(v^*)^{\varrho^*}$ and $b=b^*$. 
\end{proof}

We see that if $M^*$ is close to block rate optimality and $\varrho^*\geq\varrho_0(v^*,k^*)$, then $M$ is close to block rate optimality as well. The color rates can be chosen arbitrarily small by choosing $u$ accordingly. A block rate optimal mosaic $M^*$ of BIBDs with color rate larger than $1/2$ will be constructed in Section \ref{subsect:mos_exs}.

\subsection{Complexity}\label{subsect:complexity}

With a view towards applications, we would like to be able to find examples of mosaics of designs whose functional form and randomized inverse are efficiently computable (in the Turing model of computation). By efficiency, we mean that it must be possible to do the computations in time polylogarithmic in $v$ and $b$. This is compatible with the usual requirements in coding theory, where encoding and decoding must be done in time polynomial in the blocklength. For this asymptotic definition to make sense, we will implicitly assume that the mosaic is part of an infinite family of mosaics where each color rate can be attained infinitely often and where $v$ is unbounded. This will be satisfied by all examples we give below.

Since $\mc X$ and $\mc S$ do not necessarily have a natural representation as a set of consecutive bit sequences, we define efficiency in terms of the functional form of an isomorphic mosaic defined on sets of integers. The choice of integers instead of bit strings allows us to ignore questions arising when the cardinality of a set is not a power of 2.

For any positive integer $n$, we write $[n]=\{1,\ldots,n\}$. We call the mosaic $M=(D_\alpha)_{\alpha\in\mc A}$ \textit{explicit} if there exists a mosaic $\tilde M=(\tilde D_j)_{j\in[a]}$ with point set $[v]$ and block index set $[b]$ which is isomorphic to $M$ and whose functional form $\tilde f:[v]\times[b]\to[a]$ satisfies
\begin{enumerate}[widest={(M2)}, leftmargin =*]
    \item[(M1)] $\tilde f(\tilde x,\tilde s)$ can be computed in time $\poly(\log v,\log b)$ (polynomial in $\log v$ and $\log b$) for all $\tilde x\in[v]$ and $\tilde s\in[b]$;
    \item[(M2)] there exists a mapping $g:[b]\times[a]\times[k]\to[v]$ such that $g(\tilde s,\tilde\alpha,\kappa)$ can be computed in time $\poly(\log b,\log v)$ for all $\tilde s,\tilde\alpha,\kappa$ and which for fixed $\tilde s\in[b]$ and $\tilde\alpha\in[a]$ is a bijection between $[k]$ and $\tilde f_{\tilde s}^{-1}(\tilde\alpha)$.
\end{enumerate}

\begin{rem}
Condition (M2) corresponds to the usual complexity-theoretic definition of \textit{strong explicitness} of graph families \cite{AB_comcompl}. Condition (M1) means the efficient distinction between different graphs, which is only of concern in the context of mosaics.
\end{rem}

If, in the wiretap channel case, Alice chooses the seed, which is the most likely scenario, then the order of the choice of seed $s$ and channel input $x$ can be reversed. So far, we have assumed that $s$ is chosen first and $x$ is chosen from $f_s^{-1}(\alpha)$. Due to \eqref{eq:bireg}, it is equivalent to first choose $x$ uniformly at random from $\mc X$ and then to choose $s$ from $f_x^{-1}(\alpha)=\{s\in\mc S:f(x,s)=\alpha\}$. We will see in one of the mosaics which we are going to construct that this can reduce the cost of computation. 

However, reversing the order of choosing $s$ and $x$ has a drawback. We already mentioned above that if the channel from Alice to Bob is used to first transmit the seed and then the confidential message, this incurs a loss of total communication rate, and that it is possible to make up for this by reusing the seed. Since $s$ depends on $x$ if the latter is chosen first, seed reuse is impossible in this case.

All functions constructed in this paper will be based on finite-field arithmetic. For real implementations, not all finite fields are equally suitable. However, in principle, the complexities are comparable. If $q=p^t$ for a prime $p$, then $\mbb F_q$ can be regarded as a vector space over $\mbb F_p$. If $\mbb F_q$ is represented in a polynomial basis, i.e., a basis of the form $\{1,\vartheta,\vartheta^2,\ldots,\vartheta^{t-1}\}$, then addition and subtraction in the field $\mbb F_q$ can be done in time $O(\log q)$. For multiplication and division, $O((\log q)^2)$ time is sufficient \cite{crypto_handbook}. A polynomial basis exists for all prime powers $q$ \cite{LN_finite_fields}.

\subsection{A general construction}\label{subsect:gen_constr}

We next present a method from which all examples of mosaics below will be constructed. A key ingredient for its construction are \textit{quasigroups.} A quasigroup on the finite set $\mc A$ is an array $L$ with entries from $\mc A$ and rows and columns indexed by $\mc A$ and which satisfies
\begin{enumerate}
    \item for every $\alpha,\gamma\in\mc A$ there is a unique $\beta\in\mc A$ such that $L(\alpha,\beta)=\gamma$,
    \item for every $\beta,\gamma\in\mc A$ there is a unique $\alpha\in\mc A$ such that $L(\alpha,\beta)=\gamma$.
\end{enumerate}
Every finite group is a quasigroup. If one labels the rows and columns of a quasigroup by a set which is not necessarily the same as $\mc A$, one obtains a \textit{Latin square}. Using quasigroups instead of Latin squares is more convenient in our setting.

A quasigroup $L$ on $\mc A$ and a quasigroup $\tilde L$ on $\tilde{\mc A}$ are called \textit{isomorphic} if there exists a bijective mapping $\Phi:\mc A\to\tilde{\mc A}$ such that $\tilde L(\Phi(\alpha),\Phi(\beta))=\Phi(\gamma)$ for all $\alpha,\beta,\gamma\in\mc A$.

The following theorem was already shown in \cite{GGP_mosaics} for the case of resolvable BIBDs, using Latin squares instead of quasigroups (which combinatorially amounts to the same thing). It is based on the idea that it should be possible to obtain a mosaic if one starts with a resolvable incidence structure, since the sum of a mosaic is resolvable.

\begin{thm}\label{thm:gen_constr}
    Let $D$ be a resolvable $(v,k,r)$ tactical configuration with incidence relation $I$. Let $\mc A$ be an index set for every parallel class of the blocks of $D$, and let $L$ be a quasigroup on $\mc A$. Then there exists a mosaic $M=(D_\alpha)_{\alpha\in\mc A}$ where each $D_\alpha$ is isomorphic to $D$, and there exists a mosaic $M^T=(D_\alpha^T)_{\alpha\in\mc A}$ where each $D_\alpha^T$ is isomorphic to $D^T$. 
    
    If $D$ is a GDD, then all $D_\alpha$ share their point class partitions with $D$. If $D^T$ is a GDD, then every $D_\alpha^T$ has the same point class partition as $D^T$ and $\lambda_1=0$.
\end{thm}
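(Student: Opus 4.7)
The plan is to construct $M$ as the mosaic induced by an explicit functional form built from the resolvability of $D$ and the quasigroup $L$. Enumerate the parallel classes of $D$ as $\mc S_1,\dots,\mc S_r$, index the blocks within each $\mc S_j$ by $\mc A$ (writing $\mc S_j=\{s_{j,\gamma}:\gamma\in\mc A\}$; recall $|\mc A|=v/k$), and for each $x\in\mc X$ let $\pi_j(x)\in\mc A$ denote the unique index with $x\,I\,s_{j,\pi_j(x)}$, which exists by resolvability. Set $f(x,s_{j,\gamma})=L(\pi_j(x),\gamma)$ and let $M=(D_\alpha)_{\alpha\in\mc A}$ be the induced mosaic, so that $(x,s_{j,\gamma})\in D_\alpha$ iff $L(\pi_j(x),\gamma)=\alpha$. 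The mosaic property is immediate from $f$ being a function; nonemptiness of each $D_\alpha$ follows from quasigroup property (1); and $M^T=(D_\alpha^T)_{\alpha\in\mc A}$ is then automatically a mosaic by transposing incidence matrices.

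The heart of the proof is the isomorphism $D\cong D_\alpha$. Fix $\alpha$ and define $\sigma:\mc A\to\mc A$ by letting $\sigma(\gamma)$ be the unique $\delta$ with $L(\gamma,\delta)=\alpha$; property (1) supplies well-definedness, and property (2) supplies injectivity (if $L(\gamma_1,\delta)=\alpha=L(\gamma_2,\delta)$ then $\gamma_1=\gamma_2$), so $\sigma$ is bijective. Take $\Phi_{\mc X}=\mathrm{id}_{\mc X}$ and $\Phi_{\mc S}(s_{j,\gamma})=s_{j,\sigma(\gamma)}$. I would then verify the chain of equivalences $(x,s_{j,\gamma})\in D\iff\gamma=\pi_j(x)\iff\sigma(\gamma)=\sigma(\pi_j(x))\iff L(\pi_j(x),\sigma(\gamma))=\alpha\iff(x,\Phi_{\mc S}(s_{j,\gamma}))\in D_\alpha$, where bijectivity of $\sigma$ powers the middle equivalence and the defining equation $L(\pi_j(x),\sigma(\pi_j(x)))=\alpha$ powers the third. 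Transposing $(\Phi_{\mc X},\Phi_{\mc S})$ exhibits $D^T\cong D_\alpha^T$ for every $\alpha$.

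For the GDD statements, everything is driven by what the two isomorphism components do to partitions. If $D$ is a GDD, the identity $\Phi_{\mc X}$ trivially preserves its point class partition of $\mc X$, so every $D_\alpha$ shares that partition with $D$. If $D^T$ is a GDD with some point class partition of $\mc S$, I would first show that each parallel class $\mc S_j$ of $D$ lies entirely in a single point class of $D^T$: otherwise one could find $s\in P\cap\mc S_j$ and $s'\in P'\cap\mc S_j$ in distinct point classes $P,P'$, but then $s,s'$ are parallel blocks of $D$ and hence disjoint as point sets, contradicting $|s\cap s'|=\lambda_2\geq 1$. Thus the point class partition of $D^T$ is a coarsening of the parallel class partition of $D$, and since $\Phi_{\mc S}$ permutes each $\mc S_j$ setwise it preserves any such coarsening; this yields the point class claim for $D_\alpha^T$. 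Finally, whenever $a\geq 2$ every point class of $D^T$ contains a full parallel class of $D$ with at least two pairwise disjoint blocks, which forces $\lambda_1=0$.

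The main technical content sits in the isomorphism argument, where both quasigroup axioms are used in an essential and asymmetric way---one to define $\sigma$, the other to make it injective. Once that is in place, the mosaic property, the dual construction, and the GDD statements are straightforward bookkeeping controlled by the parallel-class-preserving structure of $\Phi_{\mc S}$.
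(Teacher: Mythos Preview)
Your argument is correct and follows essentially the same route as the paper: define the mosaic via a functional form built from the resolvability data and the quasigroup, then exhibit an explicit block-index permutation (your $\sigma$) to prove $D\cong D_\alpha$. Your treatment is in fact more detailed than the paper's---where the paper simply asserts that the isomorphism ``follows directly from the construction and the quasigroup property,'' you spell out exactly how both quasigroup axioms enter. A minor cosmetic difference is that your functional form uses $L(\pi_j(x),\gamma)$ while the paper's uses $L(\beta,\pi_i(p))$, i.e., the arguments are swapped; since $L$ is an arbitrary quasigroup this is immaterial. On the $D^T$ GDD claim you are arguably more careful than the paper: the paper states that the point class partition of $D^T$ ``corresponds to'' the parallel class partition of $D$, whereas you correctly argue only that it is a coarsening of the parallel class partition and that $\Phi_{\mc S}$, permuting each parallel class setwise, preserves any such coarsening---this is the right level of generality and still yields $\lambda_1=0$ once $a\geq 2$.
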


\begin{proof}
    The proof essentially is a reformulation of the proof of \cite{GGP_mosaics} together with the observation, already mentioned above, that if one has a mosaic and passes to the dual of every member of this mosaic, then one again obtains a mosaic. Our formulation of the proof will make it straightforward to derive the functional form of a mosaic constructed in this way.
    
    The block index set of $D$ can be written as $\mc R\times\mc A$, where $\mc R$ is an index set of cardinality $r$ for the parallel classes, and the blocks of each parallel class are labeled with a unique symbol from $\mc A$. Denote the point set of $D$ by $\mc P$. For every $p\in\mc P$ and $i\in\mc R$ there exists a unique $\alpha\in\mc A$ such that $p\,I\,(i,\alpha)$. We define the incidence structure $D_\alpha=(\mc P,\mc R\times\mc A,I_\alpha)$ by saying that $p\,I_\alpha\,(i,\beta)$ if and only if $p\,I\,(i,\gamma)$ for the unique $\gamma$ satisfying $L(\beta,\gamma)=\alpha$. This gives a mosaic. It follows directly from the construction and the quasigroup property of $L$ that all members of this mosaic are isomorphic to $D$. By dualization, one obtains a mosaic all members of which are isomorphic to $D^T$.
    
    It is clear that if $D$ is a GDD, then all $D_\alpha$ must have the same point class partition. For $D^T$, the point class partition corresponds to the partition of the blocks of $D$ into parallel classes, which is shared by all $D_\alpha$. This shows that all $D_\alpha^T$ have the same point class partition with $\lambda_1=0$.
\end{proof}

\begin{cor}\label{cor:funct_form}
    Assume the same conditions as in Theorem \ref{thm:gen_constr}. Let $\mc P$ be the point set of $D$ and $\mc R\times\mc A$ its block set, where $\mc R$ is an index set for the parallel classes and $\mc A$ is an index set for the elements of any parallel class. The functional form $f:\mc P\times(\mc R\times\mc A)\to\mc A$ of the mosaic $M$ constructed in Theorem \ref{thm:gen_constr} satisfies
    \[
        f(p;i,\beta)=L(\beta,\gamma)\quad\text{for the unique }\gamma\in\mc A\text{ with }p\,I\,(i,\gamma).
    \]
    The functional form $f^T:(\mc R\times\mc A)\times\mc P\to\mc A$ of $M^T$ satisfies $f^T(i,\beta;p)=f(p;i,\beta)$.
\end{cor}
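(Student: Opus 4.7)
The corollary is essentially a bookkeeping restatement of the construction in Theorem \ref{thm:gen_constr}, so my plan is simply to unwind the definitions. There is no real obstacle; the only things to verify are that the element $\gamma$ appearing in the formula is well-defined and that the dual identity is immediate.

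First I would recall that by definition of the functional form of a mosaic, $f(p;i,\beta)$ is the unique color $\alpha\in\mc A$ with the property that $p$ and $(i,\beta)$ are incident in $D_\alpha$. By the defining condition of $I_\alpha$ given in Theorem \ref{thm:gen_constr}, this is equivalent to the existence of some $\gamma\in\mc A$ with $p\,I\,(i,\gamma)$ and $L(\beta,\gamma)=\alpha$.

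The key point is that both $\gamma$ and $\alpha$ are uniquely determined. Resolvability of $D$ guarantees that there is exactly one $\gamma\in\mc A$ with $p\,I\,(i,\gamma)$: by assumption every point of $D$ lies in a unique block of each parallel class indexed by $i\in\mc R$, and the blocks of that parallel class are labelled by $\mc A$. Once $\gamma$ is fixed, $\alpha=L(\beta,\gamma)$ is a single element of $\mc A$ by the definition of a quasigroup. This gives exactly the claimed formula $f(p;i,\beta)=L(\beta,\gamma)$.

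Finally, for $M^T$, I would observe that passing to the dual of an incidence structure transposes its incidence matrix, so $(i,\beta)$ is incident with $p$ in $D_\alpha^T$ if and only if $p$ is incident with $(i,\beta)$ in $D_\alpha$. Hence the functional forms satisfy $f^T(i,\beta;p)=\alpha$ iff $f(p;i,\beta)=\alpha$, which is the identity stated in the corollary. The only step requiring care is making the uniqueness of $\gamma$ explicit; everything else is tautological.
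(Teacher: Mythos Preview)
Your proposal is correct and is exactly the unwinding of definitions that the paper intends: the corollary is stated there without a separate proof, since it follows immediately from the construction of $I_\alpha$ in the proof of Theorem~\ref{thm:gen_constr}. Your explicit justification of the uniqueness of $\gamma$ via resolvability and of the dual identity via transposition of incidence matches the paper's reasoning.
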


The explicitness of a mosaic constructed as in Theorem \ref{thm:gen_constr} follows from the explicitness of the involved design $D$ and the quasigroup $L$. This is important in those cases where explicitness is not immediately clear from the functional form of the mosaic, like for the mosaics $\mc M^{(2)}$ of the next section.

We say that a quasigroup $L$ on $\mc A$ is \textit{explicit} if there exists an isomorphic quasigroup $\tilde L$ over $[a]$ such that
\begin{enumerate}[widest={(L2)}, leftmargin =*]
    \item[(L1)] $\tilde L(\tilde\beta,\tilde\gamma)$ can be computed in time $\poly(\log a)$ for all $\tilde\beta,\tilde\gamma\in[a]$,
    \item[(L2)] $\tilde L(\tilde\beta,\cdot)=\tilde\alpha$ can be solved in time $\poly(\log a)$ for all $\tilde\beta,\tilde\gamma\in[a]$.
\end{enumerate}

Let $D=(\mc X,\mc S,I)$ be a resolvable $(v,k,r)$ tactical configuration with $\mc S=\mc R\times\mc A$, where $\mc R$ is an index set for the parallel classes and $\mc A$ for the elements of each parallel class. We call $D$ \textit{explicit} if there exists an isomorphic resolvable tactical configuration $\tilde D=([v],[r]\times[a],\tilde I)$ satisfying
\begin{enumerate}[widest={(D2)}, leftmargin =*]
    \item[(D1)] for every $\tilde x\in[v]$ and $\tilde\imath\in[r]$, the unique $\tilde\alpha\in[a]$ satisfying $\tilde x\,\tilde I\,(\tilde\imath,\tilde\alpha)$ can be computed in time $\poly(\log v,\log r)$;\label{crit:D1}
    \item[(D2)] there exists a mapping $g:[r]\times[a]\times[k]\to[v]$ whose values are computable in time $\poly(\log b,\log k)$ and which satisfies that $\tilde\kappa\mapsto g(\tilde\imath,\tilde\alpha,\tilde\kappa)$ is a bijection between $[k]$ and the set of points in $[v]$ incident in $\tilde D$ with the block index $(\tilde\imath,\tilde\alpha)$.
\end{enumerate}
We call the dual $D^T$ of $D$ \textit{explicit} if there exists a resolvable tactical configuration $\tilde D=([v],[r]\times[a],\tilde I)$ isomorphic to $D$ which satisfies (D1) and
\begin{enumerate}[widest={(D2)$^T$}, leftmargin=*]
	\item [(D2)$^T$] there exists a mapping $g^T:[k]\times[r]\times[a]\to[v]$ whose values are computable in time $\poly(\log b,\log k)$ and which satisfies that $(\tilde{\imath},\tilde{\alpha})\mapsto g^T(\tilde{\kappa},\tilde\imath,\tilde\alpha)$ is a bijection between $[r]\times[a]$ and the set of blocks in $[b]$ incident in $\tilde D$ with the point $\tilde\kappa$.
\end{enumerate}

\begin{thm}\label{thm:comp_compl}
    Assume the conditions as in Theorem \ref{thm:gen_constr}. The mosaic $M$ constructed in Theorem \ref{thm:gen_constr} is explicit if both $D$ and $L$ are explicit. Its dual $M^T$ is explicit if $D^T$ and $L$ are explicit.
\end{thm}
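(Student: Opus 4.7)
The plan is to construct explicit isomorphic versions $\tilde M$ and $\tilde M^T$ by applying the Theorem \ref{thm:gen_constr} construction to the isomorphic representatives $\tilde D$ (or $\tilde D^T$) and $\tilde L$ guaranteed by the hypotheses. Since that construction depends only on the abstract structure of the input design and quasigroup, the resulting mosaic on $([v],[r]\times[a])$ is automatically isomorphic to $M$, and dually for $M^T$. A trivial encoding identifies $[r]\times[a]$ with $[b]$ in time $O(\log b)$. The functional forms of $\tilde M$ and $\tilde M^T$ are read off directly from Corollary \ref{cor:funct_form}, and it remains to check (M1) and (M2).

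For $M$: to verify (M1), given $\tilde x\in[v]$ and $\tilde s\in[b]$, I decompose $\tilde s$ as $(\tilde\imath,\tilde\beta)$, invoke (D1) to find the unique $\tilde\gamma$ with $\tilde x\,\tilde I\,(\tilde\imath,\tilde\gamma)$, and return $\tilde L(\tilde\beta,\tilde\gamma)$ via (L1). To verify (M2), given $\tilde s=(\tilde\imath,\tilde\beta)$ and $\tilde\alpha$, I use (L2) to solve $\tilde L(\tilde\beta,\tilde\gamma^*)=\tilde\alpha$ for the unique $\tilde\gamma^*$, so that
\[
    \tilde f_{\tilde s}^{-1}(\tilde\alpha)=\{\tilde x\in[v]:\tilde x\,\tilde I\,(\tilde\imath,\tilde\gamma^*)\}
\]
is precisely the block of $\tilde D$ indexed by $(\tilde\imath,\tilde\gamma^*)$, and (D2) furnishes a bijection from $[k]$ onto this block. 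Since $\log r,\log a,\log k\le\log b$ and $\log k\le\log v$, every composition stays within the required $\poly(\log v,\log b)$ bound.

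For $M^T$: by Corollary \ref{cor:funct_form}, $f^T(i,\beta;p)=f(p;i,\beta)$, so (M1) for $\tilde M^T$ reduces to exactly the computation above, only with the roles of point and block index swapped. For (M2), I would iterate $\tilde\imath\in[r]$: (D1) yields the unique $\tilde\gamma$ with $\tilde x\,\tilde I\,(\tilde\imath,\tilde\gamma)$, and (L2) then yields the unique $\tilde\beta$ satisfying $\tilde L(\tilde\beta,\tilde\gamma)=\tilde\alpha$. Encoding each resulting $(\tilde\imath,\tilde\beta)$ into $[b]$ produces the required bijection from $[k^T]=[r]$ onto $(\tilde f^T_{\tilde x})^{-1}(\tilde\alpha)$; equivalently, (D2)$^T$ supplies this enumeration directly.

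The argument is really bookkeeping rather than genuinely hard. The main subtlety I anticipate is tracking the identifications between the design parameters $(v,b,k,r,a)$ of $D$ and the mosaic parameters of $M^T$ (whose point and block index sets are swapped, so $k^T=r$ and the roles of $v$ and $b$ are exchanged), and verifying cleanly that the polynomial complexities of (D1), (D2), (D2)$^T$, (L1), (L2) indeed compose into $\poly(\log v,\log b)$. Once these identifications and the trivial encoding $[r]\times[a]\leftrightarrow[b]$ are fixed up front, the rest is mechanical.
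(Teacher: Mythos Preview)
Your argument for $M$ is correct and essentially identical to the paper's proof: build $\tilde M$ from $\tilde D$ and $\tilde L$, then verify (M1) via (D1)+(L1) and (M2) via (L2)+(D2). The paper handles the dual case by saying only ``shown similarly'', so your more detailed treatment of $M^T$ is welcome.

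One small gap in your $M^T$ argument: in the (M2) step you write ``(L2) then yields the unique $\tilde\beta$ satisfying $\tilde L(\tilde\beta,\tilde\gamma)=\tilde\alpha$''. But (L2) as stated in the paper solves $\tilde L(\tilde\beta,\cdot)=\tilde\alpha$ for the \emph{second} argument, not the first; you are invoking the other quasigroup division. Your fallback ``(D2)$^T$ supplies this enumeration directly'' does not by itself fix this, since (D2)$^T$ hands you pairs $(\tilde\imath,\tilde\gamma)$ incident with $\tilde x$ in $\tilde D$, and you still need to convert each $\tilde\gamma$ to the corresponding $\tilde\beta$, which again requires solving for the first argument of $\tilde L$. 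In the concrete examples of the paper the quasigroup is an abelian group, so both divisions coincide and the issue evaporates; but if you want the statement at the level of generality claimed, you should either note that an obvious symmetric condition (L2)$'$ is needed, or observe that for the constructions actually used the distinction is vacuous.
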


\begin{proof}
	Let $\tilde D$ be a design as in the definition of explicitness of $D$ and let $\tilde L$ a quasigroup as in the definition of explicitness of $L$. The design $\tilde M$ constructed from $\tilde D$ and $\tilde L$ is isomorphic to $M$.
    
    In order to check (M1), let $\tilde f:[v]\times[r]\times[a]\to[a]$ be the functional form of $\tilde M$. Choose any $\tilde x\in[v],\tilde\imath\in[r]$ and $\tilde\beta\in[a]$. Then by Corollary \ref{cor:funct_form}, $\tilde f(\tilde x;\tilde\imath,\tilde\beta)=\tilde L(\tilde\beta,\tilde\gamma)$ for the unique $\tilde\gamma$ satisfying $\tilde x\,\tilde I\,(\tilde\imath,\tilde\gamma)$. By (D1), this $\tilde\gamma$ can be found in time $\poly(\log v,\log r)$, and $\tilde L(\tilde\beta,\tilde\gamma)$ can be computed in time $\poly(\log a)$ by (L1). Thus $\tilde f(\tilde x;\tilde\imath,\tilde\beta)$ can be computed in time $\poly(\log v,\log b)$.
    
    In order to check (M2), fix any $(\tilde\imath,\tilde\beta)\in[r]\times[a]$ and $\tilde\alpha\in[a]$. By (L2), the $\tilde\gamma$ satisfying $\tilde L(\tilde\beta,\tilde\gamma)=\tilde\alpha$ can be found in time $\poly(\log a)$. By (D2), there exists a mapping $\kappa\mapsto g(\tilde\imath,\tilde\gamma,\kappa)$ which enumerates all points incident with $(\tilde\imath,\tilde\gamma)$ in $\tilde D$ and whose values can be computed in time $\poly(\log b,\log k)$. The set of these points equals $\tilde f_{(\tilde\imath,\tilde\beta)}^{-1}(\tilde\alpha)$.
    
    Altogether, this proves the explicitness of $M$. The explicitness of $M^T$ is shown similarly.
\end{proof}

\subsection{Examples of (nearly) block rate optimal mosaics}\label{subsect:mos_exs}

We present four families of mosaics. Not all of these are block rate optimal, but those which are not are arbitrarily close to optimality for sufficiently large point sets. There is a family for each combination of the cases
\begin{enumerate}
    \item color rate $\varrho\geq 1/2$ or $\varrho\leq 1/2$ (roughly),
    \item BIBD or GDD.
\end{enumerate}
The sets of color rates will be dense except for the case of BIBDs with small color rates. 

In all cases we will use Theorem \ref{thm:gen_constr}. Thus in every case the key is to find a single resolvable design with the desired parameters.

\paragraph{BIBD and $\varrho\leq 1/2$:}

For this case we build our construction on the affine designs. Fix an integer $t\geq 2$ and a prime power $q$ and let $v,k,\lambda$ etc.\ be the parameters of the BIBD $AG_{t-1}(t,q)$. Then
\[
    v=q^t,\quad b=\frac{q(q^t-1)}{q-1},\quad r=\frac{q^t-1}{q-1},\quad k=q^{t-1},\quad\lambda=q^{t-2},\quad a=q.
\]
Hence the color rate of the mosaic $M^{(1)}_{t,q}$ we obtain from $AG_{t-1}(t,q)$ with the construction of Theorem \ref{thm:gen_constr} is 
\[
    \varrho=\frac{1}{t}.
\]
We have $1/t>\varrho_0(v,k)$ only if $t=2$. In this case, $M^{(1)}_{t,q}$ is block rate optimal since $\lambda=1$. 

If $t\geq 3$, then $M^{(1)}_{t,q}$ could only be block rate optimal if it were square, which is not the case. However, since $AG_{t-1}(t,q)$ is affine, it is a consequence of Bose's inequality \eqref{eq:Bose} that the block rate of $M^{(1)}_{t,q}$ is minimal among those mosaics constructed from any of the known resolvable BIBDs with $v=q^t$ and color rate $1/t$. The block rate satisfies
\[
    \frac{\log b}{\log v}
    \leq1+\frac{1}{t}\left(1-\frac{\log(q-1)}{\log q}\right).
\]
Thus for fixed color rate $1/t$, one gets closer to block rate optimality by increasing $q$.

Every hyperplane of $AG_{t-1}(t,q)$ can be represented by a unique pair $(h,\alpha)$, where $\alpha\in\mbb F_q$ and $h$ is a nonzero element of $\mbb F_q^t$ whose first nonzero component is normalized to $1$. We denote the set of these $h$ by $\mc R$. The hyperplane corresponding to $(h,\alpha)$ is the set of points $x$ satisfying $h\cdot x=\alpha$, where $h\cdot x=\sum_ih_ix_i$. Different $h$ give different parallel classes and different $\alpha$ with a fixed $h$ indicate different parallel hyperplanes in the parallel class corresponding to $h$. 

The natural quasigroup to construct a mosaic from $AG_{t-1}(t,q)$ is the additive group of $\mbb F_q$. Then a point $x\in\mbb F_q^t$ and an element $(h,\beta)$ of the block index set are incident in $D_\alpha$ if and only if $x$ is incident with $(h,\alpha-\beta)$ in $AG_{t-1}(t,q)$. The functional form $f:\mbb F_q^t\times(\mc R\times\mbb F_q)\to\mbb F_q$ of $M^{(1)}_{t,q}$ is given by
\[
    f(x;h,\beta)=h\cdot x+\beta.
\]
This immediately shows that the family
\[
	\mc M^{(1)}=\{M^{(1)}_{t,q}:t\geq 2,q\text{ prime power}\}
\]
is explicit. 

\paragraph{BIBD and $\varrho\geq1/2$:} Fix a positive integer $t\geq 2$ and an integer $\ell$ between $1$ and $t$. For $q=2^t$, let $Q:\mbb F_q^2\to\mbb F_q$ be an irreducible quadratic form, i.e., a polynomial of the form
\[
    Q(x,y)=\eta_1x^2+\eta_2xy+\eta_3y^2
\]
which cannot be factored into linear forms. Such a quadratic form exists for all $q$. Choose an arbitrary subgroup $H$ of order $2^\ell$ of the additive group of $\mbb F_q$ and consider the set 
\[
    \mc X=\{(x,y):Q(x,y)\in H\}.
\]
It was proved by Denniston \cite{Denniston_maxarc} that $\mc X$ has
\begin{equation}\label{eq:Denniston_card_X}
    v=1+(2^t+1)(2^\ell-1)
\end{equation}
elements and that every line of $AG(2,q)$ has either $2^\ell$ or no points in common with $\mc X$. 

We will regard $\mc X$ as a subset of $AG(2,q)$. It is not hard to see \cite[Corollary VIII.5.21]{BJL_book} that if we denote by $\mc S$ the set of nontrivial intersections of lines of $AG(2,q)$ with $\mc X$, then $D=(\mc X,\mc S,\in)$ is a resolvable $(v,k,1)$ BIBD with $k=2^\ell$. Since $r=2^t+1$ by \eqref{eq:BIBD_param_rel}, the set of parallel classes of $D$ is in one-to-one relation with the set of parallel classes of lines in $AG(2,q)$. In fact, if $\ell=t$, then $D=AG(2,q)$.

Applying Theorem \ref{thm:gen_constr}, one constructs a mosaic $M^{(2)}_{t,\ell,H}$ with the parameters
\begin{align*}
	&v=2^\ell(2^t+1-2^{t-\ell}),\quad b=(2^t+1)(2^t+1-2^{t-\ell}),\\
	&r=2^t+1,\quad k=2^\ell,\quad\lambda=1,\quad a=2^t+1-2^{t-\ell},\\
	&\varrho=\frac{\log(2^t+1-2^{t-\ell})}{\ell+\log(2^t+1-2^{t-\ell})}\approx\frac{t}{t+\ell}.
\end{align*}
Since $\lambda=1$, the mosaic $M^{(2)}_{t,\ell,H}$ is block rate optimal and satisfies
\begin{equation}\label{eq:Denniston_BR_ratio}
	\frac{\log b}{\varrho\log v}=\frac{\log b}{\log a}
	=2+\left(\frac{\log(2^t+1)}{\log(2^t+1-2^{t-\ell})}-1\right).
\end{equation}

For every $t$ and $\ell$, it is possible to choose a subgroup $H_{t,\ell}$ such that the resulting family
\[
	\mc M^{(2)}=\{M^{(2)}_{t,\ell,H_{t,\ell}}:t\geq2,1\leq\ell\leq t\}
\]
is explicit. Some work has to be done in order to show this, which we postpone to Section \ref{sect:maxarc_proof}. Moreover, every number between 1/2 and 1 can be approximated arbitrarily closely by the color rates of elements of $\mc M^{(2)}$ for sufficiently large $t$ and $\ell$.

\paragraph{GDD and $\varrho<1/2$:} Fix a positive integer $t$ and a nonnegative $\ell$ between 0 and $t$. Denote the elements of the explicit family $\mc M^{(2)}$ constructed above by $M^{(2)}_{t,\ell}$ (we omit the subgroups here in order to simplify notation). Choose an integer $u$ and let $M^{(3)}_{t,\ell,u}$ be the $u$-fold point multiple of $M^{(2)}_{t,\ell}$. Its parameters are
\begin{align*}
    &u,\quad m=2^\ell(2^t+1-2^{t-\ell}),\quad b=(2^t+1)(2^t+1-2^{t-\ell}),\\
    &r=2^t+1,\quad k=2^\ell u,\quad\lambda_1=2^t+1,\quad\lambda_2=1,\quad a=2^t+1-2^{t-\ell}.
\end{align*}
By Lemma \ref{lem:sing_GDDs}, its color rate is
\[
    \varrho=\frac{\log(2^t+1-2^{t-\ell})}{\ell+\log(2^t+1-2^{t-\ell})+\log u}\approx\frac{t}{t+\ell+\log u}
\]
and the ratio of the block rate and the color rate is given by \eqref{eq:Denniston_BR_ratio}. The color rate is smaller than $1/2$ for sufficiently large $u$.

Denote the point set of $M^{(2)}_{t,\ell}$ by $\mc X^*$ and its block index set by $\mc S^*$. Let $f^*:\mc X^*\times\mc S^*\to\mc A^*$ be the functional form of $M^{(2)}_{t,\ell}$. The point set of $M^{(3)}_{t,\ell,u}$ can be taken to be $\mc X=\mc X^*\times[u]$, the block index set and the color set remain the same as for $D^*$, so $\mc S=\mc S^*$ and $\mc A=\mc A^*$. The functional form $f:\mc X\times\mc S\to\mc A$ of $M^{(3)}_{t,\ell,u}$ satisfies
\[
    f(x^*,i;s)=\alpha\quad\text{if and only if}\quad f^*(x^*,s)=\alpha
\]
for $x^*\in\mc X^*,i\in[u],s\in\mc S$ and $\alpha\in\mc A$. The explicitness of the family
\[
	\mc M^{(3)}=\{M^{(3)}_{t,\ell,u}:t\geq 2,1\leq\ell\leq t,u\geq 1\}
\]
follows from that of $\mc M^{(2)}$.

By the discussion in Section \ref{subsect:bro}, mosaics of singular GDDs give the best approximation to block rate optimality among mosaics of GDDs with a small color rate if the point set is sufficiently large. The ratio of the block and the color rates is given by \eqref{eq:Denniston_BR_ratio}. All numbers between 0 and 1 can be approximated arbitrarily well by the color rates of suitable members of $\mc M^{(3)}$.

\paragraph{GDD and $\varrho\geq 1/2$:} If one deletes some of the parallel classes from the block set of $AG(2,q)$, where $q$ is a prime power, then one obtains the dual of a transversal design. Assume we keep $k\geq 2$ of the parallel classes of $AG(2,q)$. Call the resulting design $D^T$ and set $D=(D^T)^T$. The point set $\mc X$ of $D$ consists of lines of $AG(2,q)$ and the block index set $\mc S$ of $D$ consists of all the points of $AG(2,q)$. Two points $x,x'\in\mc X$ are incident with a common block index $s$ if and only if they intersect as lines in $AG(2,q)$, and so parallel classes of $D^T$ translate into point classes of $D$. If $x,x'$ are not in the same point class of $D$, then in $D^T$, their corresponding lines intersect in a unique point. In $D$, this means that two points from different point classes are incident with a unique block index, and so $D$ is a $(q,k,1)$ TD. 

Letting $\mc R$ denote the set of remaining parallel classes of lines, we construct from this transversal design a mosaic $M^{(4)}_{k,q,\mc R}$ as in Theorem \ref{thm:gen_constr}, using the natural additive group structure of $\mbb F_q$ on every parallel class of $AG(2,q)$. We obtain a mosaic with 
\[
    u=q,\quad k,\quad b=q^2,\quad \lambda=1,\quad a=q.
\]
Thus $M^{(4)}_{k,q,\mc R}$ has color rate 
\[
    \varrho=\frac{\log q}{\log q+\log k}.
\]
Since $k$ ranges between 2 and $q+1$, $\varrho$ is a number between
\[
    \frac{\log q}{\log q+\log(q+1)}\quad\text{and}\quad\frac{\log q}{1+\log q}.
\]
The block rate is optimal by Lemma \ref{lem:TD_BRO}.

The point set $\mc X$  of $M^{(4)}_{k,q,\mc R}$ has the structure of a Cartesian product, $\mc X=\mc R\times\mbb F_q$. For the discussion of the functional form of the mosaic, we assume that $k\leq q$ and that $\mc R$ is given by a subset of $\mbb F_q$. Then $x=(c,d)\in\mc X$ corresponds to the line $\{(u,cu+d):u\in\mbb F_q\}$ in $AG(2,q)$. The case $k=q+1$ can be treated analogously and corresponds to a mosaic whose members all are isomorphic to the dual of $AG(2,q)$.

A point $x=(c,d)\in\mc X$ and a block $s=(s_1,s_2)\in\mc S=\mbb F_q^2$ are incident in $D$ if $cs_1+d=s_2$. They are incident in $D_\alpha$ if $cs_1+d-\alpha=s_2$, where $\alpha\in\mbb F_q$. Thus
\[
    f(x,s)=f(c,d;s_1,s_2)=s_2-cs_1+d.
\]
Given $\alpha\in\mbb F_q$ and $s=(s_1,s_2)\in\mbb F_q^2$, one can find those $x\in\mc X$ which are incident with $s$ by taking any $c\in\mc R$ and solving for $d=\alpha-s_2+cs_1$. In this way, one obtains the randomized inverse of $f$. This can be done efficiently if $\mc R$ can be enumerated efficiently. Clearly, such an $\mc R=\mc R_{k,q}$ exists for every $k$. This gives us an explicit family
\[
	\mc M^{(4)}=\{M^{(4)}_{k,q,\mc R_{k,q}}:2\leq k\leq q+1,q\text{ prime power}\}.
\]
All numbers between $1/2$ and 1 can be approximated arbitrarily well by the color rates of members of this family.

\paragraph{Discussion.}

All our examples are constructed using Theorem \ref{thm:gen_constr}, hence all members of these designs are either themselves resolvable or duals of resolvable designs. We do not know whether mosaics of BIBDs or GDDs with constant block size exist which are not resolvable or dually resolvable. Such a construction would be particularly relevant for cases where mosaics of resolvable designs cannot be block rate optimal. For instance, a block rate optimal mosaic of BIBDs with color rate smaller than $1/2$ must be square, and consequently cannot be resolvable.

It would also be desirable to construct a family of mosaics of BIBDs which is close to block rate optimality and whose color rates are dense in the interval between 0 and $1/2$.

\subsection{Related structures}

\subsubsection{Universal hash functions.} A function $f:\mc X\times\mc S\to\mc A$ is called a \textit{universal hash function} if for all distinct $x,x'\in\mc X$,
\begin{equation}\label{eq:UHF}
	\frac{\lvert\{s:f(x,s)=f(x',s)\}\rvert}{b}\leq\frac{1}{a}
\end{equation}
(where, as usual, $\lvert\mc X\rvert=v$, $\lvert\mc S\rvert=b$ and $\lvert\mc A\rvert=a$). The left-hand side of \eqref{eq:UHF} can be interpreted as the probability that the values assigned to $x$ and $x'$ by $f$ ``collide'' if the seed is chosen uniformly at random. Let $(D_\alpha)_{\alpha\in\mc A}$ be the mosaic of incidence structures induced by $f$ as described in Section \ref{subsect:designs_defs}. Stinson \cite{Stinson} has shown that the maximal collision probability of $f$ is minimal if the sum $D$ of $(D_\alpha)_{\alpha\in\mc A}$ is a BIBD (recall the definition of the sum of a mosaic in Section \ref{subsect:prop_ex_des}). 

\begin{lem}[\cite{Stinson}]
	Any onto function $f:\mc X\times\mc S\to\mc A$ satisfies
	\[
		\frac{\lvert\{s:f(x,s)=f(x',s)\}\rvert}{b}\geq\frac{v-a}{a(v-1)}
	\]
	for at least one pair of distinct points $x,x'\in\mc X$. Equality holds for all distinct $x,x'\in\mc X$ if and only if the sum $D$ of the mosaic of incidence structures $(D_\alpha)_{\alpha\in\mc A}$ induced by $f$ is a resolvable BIBD.
\end{lem}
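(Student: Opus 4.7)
The plan is to convert the collision count into a counting identity for the sum $D$ and then apply Cauchy--Schwarz. I would introduce the column profile $n_s(\alpha):=\lvert f_s^{-1}(\alpha)\rvert$, so that $\sum_{\alpha\in\mc A}n_s(\alpha)=v$ for every $s$, and double-count ordered pairs $(x,x')$ of distinct points with $f(x,s)=f(x',s)$ to obtain
\[
    \sum_{\substack{x,x'\in\mc X\\ x\neq x'}}c(x,x')=\sum_{s\in\mc S}\sum_{\alpha\in\mc A}n_s(\alpha)\bigl(n_s(\alpha)-1\bigr),
\]
where $c(x,x'):=\lvert\{s:f(x,s)=f(x',s)\}\rvert$.

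Cauchy--Schwarz gives $\sum_\alpha n_s(\alpha)^2\geq v^2/a$, with equality if and only if $n_s(\alpha)=v/a$ for all $\alpha$. Hence each inner sum is at least $v(v-a)/a$; summing over $s$ and dividing by the number $v(v-1)$ of ordered distinct pairs shows that the average collision count is at least $b(v-a)/[a(v-1)]$, and pigeonholing on this average yields a pair of distinct points attaining the claimed lower bound. This proves the first assertion.

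For the equality characterization, assume $c(x,x')=b(v-a)/[a(v-1)]$ for every distinct pair. Summing forces the displayed chain to be tight, and since $\sum_\alpha n_s(\alpha)^2\geq v^2/a$ holds independently for each $s$, tightness must occur in every $s$-slice separately, so $n_s(\alpha)=v/a$ for all $s$ and $\alpha$. Viewing $D$ as the incidence structure on $(\mc X,\mc A\times\mc S)$ whose blocks are the preimages $f_s^{-1}(\alpha)$, this means every block of $D$ has size $v/a$; combined with the fact that each point lies in exactly $b$ blocks of $D$ (one per $s\in\mc S$), $D$ is a $(v,v/a,b)$ tactical configuration whose pair frequency equals the constant collision count $\lambda=b(v-a)/[a(v-1)]$, hence a BIBD. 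Resolvability comes for free from the mosaic structure: for each fixed $s$, the blocks $\{(\alpha,s):\alpha\in\mc A\}$ partition $\mc X$, since each $x$ lies in the unique block indexed by $f(x,s)$. The converse direction is immediate from the BIBD parameter relations applied to $\tilde b=ab$ and $r_D=b$, which force $\tilde k=v/a$ and $\tilde\lambda=b(v-a)/[a(v-1)]$.

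The only subtle step is observing that \emph{per-pair} equality (not merely equality on average) forces Cauchy--Schwarz to be tight in every $s$-slice, which is what drives the constant block-size conclusion and hence the BIBD property. This is automatic because the common collision count equals the global average exactly, leaving no slack in any slice.
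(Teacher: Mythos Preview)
The paper does not supply its own proof of this lemma; it is quoted from Stinson and stated without argument. Your proof is correct and is essentially the standard one: double-count ordered colliding pairs through the column profiles $n_s(\alpha)$, apply convexity (your Cauchy--Schwarz step) to bound each $s$-slice below by $v(v-a)/a$, average over ordered pairs, and pigeonhole. Your handling of the equality case is also sound: constancy of $c(x,x')$ pins the total to the global lower bound, which forces Cauchy--Schwarz to be tight in every slice, hence $n_s(\alpha)\equiv v/a$; combined with the automatic resolvability of the sum of a mosaic this yields the BIBD characterization, and the converse follows from the parameter relation $r_D(\tilde k-1)=\tilde\lambda(v-1)$ with $r_D=b$ and $\tilde k=v/a$.
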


A universal hash function $f$ for which the sum $D$ of the corresponding mosaic $(D_\alpha)_{\alpha\in\mc A}$ is a BIBD is called \textit{optimally universal}. It follows immediately that a mosaic $(D_\alpha)_{\alpha\in\mc A}$ of BIBDs with common parameters $(v,k,\lambda)$ gives rise to an optimally universal hash function, since all blocks have the same size, and for distinct $x,x'\in\mc X$
\[
	\lvert\{s:f(s,x)=f(s,x')\}\rvert=\sum_{\alpha\in\mc A}\lvert\{s:f(s,x)=f(s,x')=\alpha\}\rvert=a\lambda.
\]
This proves the first part of the following lemma.

\begin{lem}\label{lem:rel_to_UHF}
	Let $M=(D_\alpha)_{\alpha\in\mc A}$ be a mosaic of $(v,k,r)$ tactical configurations on $(\mc X,\mc S)$ with functional form $f:\mc X\times\mc S\to\mc A$.
	\begin{enumerate}
		\item If every $D_\alpha$ is a $(v,k,\lambda)$ BIBD, then $f$ is optimally universal.
		\item If $M$ consists of $(u,m,k,\lambda_1,\lambda_2)$ GDDs with a common point class partition, then 
		\begin{enumerate}
			\item if every $D_\alpha$ is either semi-regular, or singular with $a=1$, then $f$ is a universal hash function;
			\item if the $D_\alpha$ are singular with $a\geq 2$, then $f$ is not a universal hash function.
		\end{enumerate} 
	\end{enumerate}
\end{lem}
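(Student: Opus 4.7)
The plan is that part (1) has already been established in the paragraph preceding the lemma, so I focus on part (2). The unifying observation is that for any distinct $x,x'\in\mc X$, partitioning the collision set by the common color value yields
\[
	\lvert\{s\in\mc S:f(x,s)=f(x',s)\}\rvert
	=\sum_{\alpha\in\mc A}\lvert\{s:x\,I_\alpha\,s\text{ and }x'\,I_\alpha\,s\}\rvert,
\]
where $I_\alpha$ denotes the incidence relation of $D_\alpha$. Under the hypothesis that the $D_\alpha$ are $(u,m,k,\lambda_1,\lambda_2)$ GDDs sharing a common point class partition, the inner count equals $\lambda_1$ or $\lambda_2$ uniformly in $\alpha$, depending only on whether $x,x'$ lie in the same point class. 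Using $bk=vr$ and $v=ak$ gives $b/a=r$, so \eqref{eq:UHF} reduces to $a\lambda_1\leq r$ (when a pair in a common class exists, i.e.\ when $u\geq 2$) together with $a\lambda_2\leq r$.

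For statement 2(a) in the semi-regular case, the defining identity $rk=v\lambda_2$ combined with $v=ak$ gives $a\lambda_2=r$, so the different--point-class condition is satisfied with equality. When $u\geq 2$, I would substitute $r=mu\lambda_2/k$ into \eqref{eq:GDD_param_rel} and, after routine simplification, obtain
\[
	\lambda_2\,u(k-m)=\lambda_1\,k(u-1).
\]
Nonnegativity of $\lambda_1$ forces $k\geq m$, and the trivial bound $k\leq v=um$ then yields $\lambda_1/\lambda_2=u(k-m)/(k(u-1))\leq 1$, hence $a\lambda_1\leq a\lambda_2=r$. The remaining singular-with-$a=1$ case is trivial because $f$ is then constant, so every collision probability equals $1=1/a$.

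For statement 2(b), singularity gives $r=\lambda_1$, and $a\geq 2$ forces $u\geq 2$ (for $u=1$ the structure degenerates to a BIBD, covered by part (1)), so one may pick two distinct points in a common point class; their collision count is $a\lambda_1=ar$, which strictly exceeds $r=b/a$ since $a\geq 2$, thereby violating \eqref{eq:UHF}. The only step that requires any care is the algebraic derivation of $\lambda_1\leq\lambda_2$ in the semi-regular case; the manipulation itself is purely symbolic, and the direction of the inequality hinges solely on $k\leq v$, which is automatic for tactical configurations.
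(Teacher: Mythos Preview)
Your proof is correct and follows essentially the same route as the paper: both compute the collision count as $a\lambda_1$ or $a\lambda_2$, reduce the universal-hashing condition to $a\lambda_i\leq r$ (equivalently $\lambda_i v\leq kr$), and then handle the semi-regular and singular cases separately. The only cosmetic difference is in the semi-regular step: the paper obtains the identity $(\lambda_1-\lambda_2)u=\lambda_1-r$ directly from \eqref{eq:GDD_param_rel} and invokes the defining condition $r>\lambda_1$, whereas you rearrange to $\lambda_2 u(k-m)=\lambda_1 k(u-1)$ and invoke $k\leq v$; both yield $\lambda_1\leq\lambda_2$. One small wording issue: in 2(b) the clause ``$a\geq 2$ forces $u\geq 2$'' is not literally true---what you mean (and what the paper also tacitly assumes) is that the singular classification presupposes $u\geq 2$ so that $\lambda_1$ is a genuine parameter; otherwise the structure is a BIBD and part~2 does not apply.
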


\begin{proof}
	It remains to prove the second part of the lemma. We analyze the parameters of the mosaic. For distinct points $x,x'$,
	\begin{align*}
		\lvert\{s:f(s,x)=f(s,x')\}\rvert
		&=\sum_{\alpha\in\mc A}\lvert\{s:f(s,x)=f(s,x')=\alpha\}\rvert\\
		&=\begin{cases}
			a\lambda_1 & \text{if }x,x'\text{ are contained in the same point class,}\\
			a\lambda_2 & \text{else.}
		\end{cases}
	\end{align*}
	Since $a/b=1/r$ and $a=v/k$, we have for $i=1,2$
	\[
		\frac{a\lambda_i}{b}\leq\frac{1}{a}
		\quad\text{if and only if}\quad
		\lambda_iv\leq kr.
	\]
	
	A singular GDD satisfies $r=\lambda_1$, and so $f$ is a universal hash function if and only if $v=k$, which means that every block covers the whole point set. Equivalently, $a=1$.
	
	A semi-regular GDD is characterized by the equality $\lambda_2v=kr$. Further, \eqref{eq:GDD_param_rel} and semi-regularity imply $(\lambda_1-\lambda_2)u=\lambda_1-r\leq 0$, whence $\lambda_1\leq\lambda_2$, and so $\lambda_1v\leq kr$.
\end{proof}

We do not have a simple criterion for when regular GDDs induce a universal hash function. Since a regular GDD $D$ satisfies $kr>\lambda_2v$ by definition, one only needs to check whether $kr\geq\lambda_1v$. This is obviously true if $\lambda_1\leq\lambda_2$. If $\lambda_1>\lambda_2$, then some parameter choices result in mosaics whose functional form is a universal hash function, while this is not true for other parameter choices. 

For instance, the regular GDD R1 from Clatworthy's list \cite{Clatworthy} has parameters $v=4,r=4,k=2,\lambda_1=2,\lambda_2=1$, and thus satisfies $kr=8=\lambda_1v$. Since it is resolvable, an application of Theorem \ref{thm:gen_constr} gives a mosaic of regular GDDs whose functional form is a universal hash function.

On the other hand, the regular GDD R2 from \cite{Clatworthy} has parameters $v=4,r=5,k=2,\lambda_1=3,\lambda_2=1$, hence $kr=10<12=\lambda_1v$. This GDD is resolvable as well, and the functional form of the resulting mosaic is not a universal hash function.

We conclude from Lemma \ref{lem:rel_to_UHF} that not all of the functions constructed in Section \ref{subsect:mos_exs} are universal hash functions. The mosaics of singular GDDs from the family $\mc M^{(3)}$ have functional forms which are not universal hash functions. Similarly, there exist universal hash functions which cannot be decomposed as a mosaic of BIBDs or GDDs. For instance, the optimally universal hash function induced by the resolvable BIBD $AG_{t-1}(t,q)$ (i.e., where the sum of the induced mosaic is $AG_{t-1}(t,q)$) does not have the additional substructure we require from the security functions in this paper.

\subsubsection{Orthogonal arrays.} A $v\times b$ array $M$ with entries from the alphabet $\mc A$ is called a $(b,v,a)$ \textit{orthogonal array} if every $2\times b$ subarray of $M$ contains each pair of entries $(\alpha,\alpha')$ from $\mc A$ exactly $\lambda=b/a^2$ times as a column. 

If we denote the set of rows by $\mc X$ and the set of columns by $\mc S$, then an orthogonal array gives rise to a function $f:\mc X\times\mc S\to\mc A$ which associates to the pair $(x,s)$ the symbol from $\mc A$ which is at the intersection of column $s$ with row $x$. By definition, $f$ satisfies for distinct $x,x'\in\mc X$ and for any $\alpha,\alpha'\in\mc A$
\[
	\lvert\{s:f(x,s)=\alpha,f(x',s)=\alpha'\}\rvert=\lambda.
\]
This means that $f$ is an $\varepsilon$-\textit{almost strongly universal hash function} for $\varepsilon=\lambda a/b$ \cite{Stinson_UH_AC}. In particular,
\begin{equation}\label{eq:OA-two}
	\lvert\{s:f(x,s)=f(x',s)=\alpha\}\rvert=\lambda.
\end{equation}
Moreover, if we set $r=a\lambda$, then 
\[
	\lvert\{s:f(x,s)=\alpha\}\rvert=r.
\]
It is not in general the case that also
\begin{equation}\label{eq:OA-reg}
	\lvert\{x:f(x,s)=\alpha\}\rvert
\end{equation}
is constant in $s$ and $\alpha$. 

Assume \eqref{eq:OA-reg} is constant in $s$ and $\alpha$ and denote this number by $k$. Then $M$ gives a mosaic of $(v,k,\lambda)$ BIBDs with functional form $f$. 

\begin{lem}
	If $M$ is a mosaic of BIBDs induced by an orthogonal array, then $a=1$.
\end{lem}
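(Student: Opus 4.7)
The plan is a short parameter count comparing the replication number derived from the orthogonal array structure to the one forced by the BIBD relation.

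First, using the orthogonal array property in a stronger form than \eqref{eq:OA-two}, I would show that the replication number satisfies $r = a\lambda$. Fix $x \in \mc X$ and $\alpha \in \mc A$, and pick any $x' \in \mc X$ distinct from $x$ (which exists since a nontrivial BIBD has $v \geq 2$). The set $\{s : f(x,s) = \alpha\}$ is partitioned by the value of $f(x',s)$, so
\[
    r = \lvert\{s : f(x,s) = \alpha\}\rvert = \sum_{\alpha' \in \mc A} \lvert\{s : f(x,s) = \alpha,\ f(x',s) = \alpha'\}\rvert = a\lambda,
\]
where each summand equals $\lambda$ by definition of a $(b,v,a)$ orthogonal array.

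Second, since the mosaic consists of $(v,k,r)$ tactical configurations, the relation $a = v/k$ established in Subsection \ref{subsect:designs_defs} yields $k = v/a$. Plugging both expressions into the BIBD parameter identity \eqref{eq:BIBD_param_rel}, namely $r(k-1) = \lambda(v-1)$, gives
\[
    a\lambda\left(\frac{v}{a} - 1\right) = \lambda(v-1),
\]
which simplifies to $\lambda(v - a) = \lambda(v - 1)$. Since $\lambda \geq 1$ in any BIBD by definition, this forces $a = 1$.

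There is no real obstacle here; the argument is purely a comparison of three parameter equations, and the only thing to watch is that the trivial case $v = 1$ (in which $a = v/k = 1$ automatically) is handled separately, and that $\lambda \geq 1$ can be divided out because the paper's definition of a BIBD explicitly excludes $\lambda = 0$.
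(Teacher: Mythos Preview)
Your proof is correct and follows essentially the same route as the paper: both combine the three relations $r=a\lambda$, $a=v/k$, and $r(k-1)=\lambda(v-1)$ and read off $a=1$. The only difference is cosmetic algebra---the paper rewrites $r=a\lambda$ as $rk=v\lambda$, subtracts the BIBD relation to get $r=\lambda$, and concludes $v=k$---and your re-derivation of $r=a\lambda$ is redundant since the paper establishes it just before the lemma.
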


\begin{proof}
	From $r=a\lambda$ we conclude $rk=v\lambda$. Then \eqref{eq:BIBD_param_rel} gives $r=\lambda$, hence $v=k$.
\end{proof}

\begin{cor}
	There does not exist any nontrivial orthogonal array for which \eqref{eq:OA-reg} is constant in $\alpha$ and $s$.
\end{cor}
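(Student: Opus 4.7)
The plan is to treat the corollary as an immediate consequence of the preceding lemma, so the main task is just to spell out what ``nontrivial'' has to mean and how the hypotheses of the lemma are satisfied. A nontrivial orthogonal array should mean one with $a\geq 2$ (otherwise the alphabet is a single symbol and the whole notion degenerates). Under this convention, I would argue by contradiction: suppose there is an orthogonal array with $a\geq 2$ for which the quantity in \eqref{eq:OA-reg} is constant in $s$ and $\alpha$.

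Next I would invoke the setup immediately preceding the lemma. Once \eqref{eq:OA-reg} is constant in $s$ and $\alpha$, call this common value $k$, and combine it with the fact that $\lvert\{s:f(x,s)=\alpha\}\rvert=r$ is constant and with \eqref{eq:OA-two}, which says that any two points lie in exactly $\lambda$ common blocks of each color. This is exactly the data of a mosaic of $(v,k,\lambda)$ BIBDs with functional form $f$. So the hypotheses of the lemma are met.

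Then I would apply the lemma directly to conclude $a=1$, contradicting the assumption that the orthogonal array is nontrivial. Hence no such nontrivial orthogonal array can exist.

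The only real point of care, and arguably the ``hard'' part, is making sure the definition of nontrivial is stated cleanly and that the passage from the orthogonal array conditions to the BIBD hypotheses of the lemma is complete, i.e., that constancy of \eqref{eq:OA-reg} together with the defining properties of the orthogonal array really do produce the required mosaic of BIBDs. Once that identification is in place, the corollary is a one-line application of the lemma.
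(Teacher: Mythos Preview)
Your proposal is correct and follows exactly the route intended by the paper: the corollary is stated without proof because the preceding paragraph already establishes that an orthogonal array with \eqref{eq:OA-reg} constant induces a mosaic of $(v,k,\lambda)$ BIBDs, and the lemma then forces $a=1$. Your identification of ``nontrivial'' with $a\geq 2$ and the contradiction argument are precisely what is implicit in the paper.
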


\section{Semantic security from mosaics of combinatorial designs}\label{sect:sec}

\subsection{Distances and divergences}\label{subsect:rel_def}

The degree of semantic security offered by a security function when applied to a wiretap channel or in privacy amplification can be measured using various distances, divergences and entropies of probability measures. 

Let $P,Q$ be probability distributions on a finite set $\mc Z$. The \textit{total variation distance} of $P$ and $Q$ is
\[
    \lVert P-Q\rVert=\sum_z\lvert P(z)-Q(z)\rvert.
\]
This is a metric on the space of probability measures on $\mc Z$. The $\chi^2$ divergence
\[
    \chi^2(P,Q)=\sum_{z:Q(z)>0}Q(z)\left(\frac{P(z)}{Q(z)}-1\right)^2
\]
satisfies
\begin{equation}\label{eq:TV_vs_chi2}
    \lVert P-Q\rVert\leq\sqrt{\chi^2(P,Q)}+P(\{z:Q(z)=0\}),
\end{equation}
which is an immediate consequence of Cauchy-Schwarz. The \textit{Kullback-Leibler divergence} of $P$ and $Q$ is given by
\[
    D(P\Vert Q)=\begin{cases}
                    \sum_zP(z)\log\frac{P(z)}{Q(z)} & \text{if }P(\{z:Q(z)=0\})=0,\\
                    +\infty & \text{else},
                \end{cases}
\]
and the \textit{R\'enyi $2$-divergence} by 
\[
    D_2(P\Vert Q)=
    \begin{cases}
        \log\sum_z\frac{P(z)^2}{Q(z)} & \text{if }P(\{z:Q(z)=0\})=0,\\
        +\infty & \text{else}.
    \end{cases}
\]
They are nonnegative and related by \cite{vEH_divergence}
\begin{equation}\label{eq:Renyi_vs_KL}
    D(P\Vert Q)\leq D_2(P\Vert Q).
\end{equation}

It is a straightforward calculation to show that if $D_2(P\Vert Q)<\infty$, then 
\begin{equation}\label{eq:chi2_vs_D2}
    \chi^2(P,Q)=\exp\bigl(D_2(P\Vert Q)\bigr)-1.
\end{equation}

We also introduce averaged versions of these divergences. If $W:\mc X\to\mc Z$ is a channel, and additionally $P$ is a probability distribution on $\mc X$ and $Q$ on $\mc Z$, then we set 
\[
    D(W\Vert Q\vert P)=\sum_{x\in\mc X}P(x)D(W(\,\cdot\,\vert x)\Vert Q)
\]
and
\[
    D_2(W\Vert Q\vert P)=\log\sum_{x\in\mc X}P(x)\exp\bigl(D_2(W(\,\cdot\,\vert x)\Vert Q)\bigr).
\]

Let $X,Y$ be discrete random variables with joint distribution $P_{XY}$. Denote the marginal distributions by $P_X$ and $P_Y$ and the conditional distribution of $Y$ given the event $X=x$ by $P_{Y\vert X=x}$. Then the \textit{mutual information} of $X$ and $Y$ is defined by
\[
    I(X\wedge Y)=\sum_xP_X(x)D(P_{Y\vert X=x}\Vert P_Y)=D(P_{Y\vert X}\Vert P_Y\vert P_X).
\]

The bounds obtained in the privacy amplification scenario involve \textit{R\'enyi 2-entropy}, which for a random variable $X$ on $\mc X$ is defined as
\[
    H_2(X)=-\log\sum_xP_X(x)^2.
\]

\subsection{Wiretap channel}

Let $f:\mc X\times\mc S\to\mc A$ be the functional form of a mosaic $(D_\alpha)_{\alpha\in\mc A}$ of $(v,k,r)$ tactical configurations and let $W:\mc X\to\mc Z$ be a wiretap channel. Assume that the confidential messages to be transmitted are represented by the random variable $A$ on $\mc A$. The random seed is represented by $S$, uniformly distributed on $\mc S$ and independent of $A$. Application of the randomized inverse of $f$ determines the random input $X$ to $W$, and the random output of $W$ seen by Eve is denoted by $Z$. The joint probability distribution of these four random variables is
\begin{equation}\label{eq:WT_joint_distr}
    P_{ZXSA}(z,x,s,\alpha)=\frac{1}{bk}w(z\vert x)N_\alpha(x,s)P_A(\alpha),
\end{equation}
where $N_\alpha$ is the incidence matrix of $D_\alpha$.

The two security metrics by which we measure the degree of security offered by $f$ for  $W$ are defined in terms of the joint distribution of $Z,S$ and $A$ with a worst-case choice of $A$. The first security metric is defined as the mutual information between the message $A$ and the eavesdropper's information $Z,S$, maximized over all possible message distributions,
\begin{equation}\label{eq:def_muti_semsec}
    \max_{P_A}I(A\wedge Z,S).
\end{equation}
The best case would be that Eve's observations are independent of the message, no matter what the message distribution is, in which case the mutual information would vanish. This is not achievable in general, even for a fixed message distribution. Instead, we try to make the maximum in \eqref{eq:def_muti_semsec} as small as possible. Like the other security criteria defined below, the requirement that \eqref{eq:def_muti_semsec} be small does not make any assumptions on Eve's computing power. Thus we aim for \textit{unconditional security}.

\begin{rem}
	For the \textit{strong secrecy} criterion mentioned in Section \ref{subsect:comparison}, it is assumed that the distribution $P_A$ is fixed, so that only the corresponding $I(A\wedge Z,S)$ has to be small. Usually, one takes $A$ to be uniformly distributed on $\mc A$.
\end{rem}

In order to formulate the upper bound for \eqref{eq:def_muti_semsec}, we need to introduce additional notation. If $\mc U$ is a finite set and $R:\mc U\to\mc X$ a channel, then the usual matrix product $RW$ of the stochastic matrices $R$ and $W$ gives the channel with input alphabet $\mc U$ and output alphabet $\mc Z$ resulting from concatenating $R$ and $W$. If $P$ is a probability measure on $\mc X$, then this also defines the probability measure $PW$ on $\mc Z$ by regarding $P$ as a channel with a single row. 

The uniform distribution on any set $\mc X$ is denoted by $P_{\mc X}$. Also, recall R\'enyi 2-divergence defined in Subsection \ref{subsect:rel_def}. 

\begin{thm}\label{thm:WT_muti_sec}
\begin{enumerate}
    \item Let $W:\mc X\to\mc Z$ be a wiretap channel and let $f:\mc X\times\mc S\to\mc A$ be the functional form of a mosaic of $(v,k,\lambda)$ BIBDs. Then
    \begin{align*}
        &\max_{P_A}\exp\bigl(I(A\wedge Z,S)\bigr)
        \leq \left(1-\frac{r-\lambda}{kr}\right)
        +\frac{r-\lambda}{kr}\exp\bigl(D_2(W\Vert P_{\mc X}W\vert P_{\mc X})\bigr).
    \end{align*}
    
    \item Let $W:\mc X\to\mc Z$ be a wiretap channel and let $f:\mc X\times\mc S\to\mc A$ be the functional form of a mosaic of $(u,m,k,\lambda_1,\lambda_2)$ GDDs with a common point class partition $\Pi=\{\mc X_1,\ldots,\mc X_m\}$. Let $P_\Pi$ be the uniform distribution on $\Pi$ and $R_\Pi:\Pi\to\mc X$ the channel which associates to an element $\mc X_j$ of $\Pi$ the uniform distribution on $\mc X_j$. Then
    \begin{align*}
        &\max_{P_A}\exp\bigl(I(A\wedge Z,S)\bigr)\\
        &\leq \left(1-\frac{(r-\lambda_1)+(\lambda_1-\lambda_2)u}{kr}\right)
        +\frac{(\lambda_1-\lambda_2)u}{kr}\exp\bigl(D_2(R_{\Pi}W\Vert P_{\mc X}W\vert P_{\Pi})\bigr)\\
        &\qquad+\frac{r-\lambda_1}{kr}\exp\bigl(D_2(W\Vert P_{\mc X}W\vert P_{\mc X})\bigr).
    \end{align*}
\end{enumerate}
\end{thm}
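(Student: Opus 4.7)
The plan is to bound $I(A\wedge Z,S)$ by passing to R\'enyi $2$-divergence, where the incidence-matrix identities \eqref{eq:BIBD_inc_matr} and \eqref{eq:GDD_inc_matr} can be contracted cleanly. First I would apply the variational (golden-formula) characterization $I(A\wedge Z,S) = \min_{Q}\sum_\alpha P_A(\alpha)\,D(P_{ZS\mid A=\alpha}\Vert Q)$ with the \emph{message-independent} reference $Q_{ZS} = (P_{\mc X}W)\otimes P_{\mc S}$. The fact that $Q_{ZS}$ does not depend on $P_A$ is essential for getting a bound uniform in $P_A$, and the product form matches the independence of $S$ from $A$. Combining $D\leq D_2$ from \eqref{eq:Renyi_vs_KL}, Jensen's inequality for the convex function $\exp$, and identity \eqref{eq:chi2_vs_D2} then yields
\[
\exp\bigl(I(A\wedge Z,S)\bigr) \;\leq\; \sum_\alpha P_A(\alpha)\bigl(1+\chi^2(P_{ZS\mid A=\alpha},\, Q_{ZS})\bigr),
\]
so it suffices to bound $\chi^2$ uniformly in $\alpha$.

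Second, the joint law \eqref{eq:WT_joint_distr} gives $P_{ZS\mid A=\alpha}(z,s)=(bk)^{-1}\sum_x N_\alpha(x,s)\,w(z\mid x)$. Expanding the square in the definition of $\chi^2$ and swapping the order of summation produces
\[
1+\chi^2(P_{ZS\mid A=\alpha},Q_{ZS}) \;=\; \frac{1}{bk^2}\sum_{x,x'}(N_\alpha N_\alpha^T)(x,x')\,\Psi(x,x'),
\]
where $\Psi(x,x')=\sum_z w(z\mid x)w(z\mid x')/(P_{\mc X}W)(z)$. The kernel $\Psi$ carries exactly the information we need: $\Psi(x,x)=\exp\bigl(D_2(W(\cdot\mid x)\Vert P_{\mc X}W)\bigr)$; every row of $\Psi$ sums to $v$ because $(P_{\mc X}W)(z)=v^{-1}\sum_x w(z\mid x)$, hence $\sum_{x,x'}\Psi(x,x')=v^2$; and on a GDD point class $\mc X_j$, $\sum_{x,x'\in\mc X_j}\Psi(x,x')=u^2\exp\bigl(D_2(R_\Pi W(\cdot\mid\mc X_j)\Vert P_{\mc X}W)\bigr)$.

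Finally, I substitute the design identities into the double sum. For part (1), \eqref{eq:BIBD_inc_matr} splits it as $(r-\lambda)\sum_x\Psi(x,x)+\lambda\sum_{x,x'}\Psi(x,x')$; dividing by $bk^2=vrk$ (using \eqref{eq:bireg}) and using \eqref{eq:BIBD_param_rel} in the form $\lambda v/(rk)=1-(r-\lambda)/(rk)$ produces precisely the claimed convex combination. For part (2), \eqref{eq:GDD_inc_matr} produces the analogous three-term decomposition into a diagonal part, a point-class part, and an all-ones part, each of which is already identified by the properties of $\Psi$ above; the parameter relation \eqref{eq:GDD_param_rel} together with $v=um$ then verifies that the three coefficients $(r-\lambda_1)/(rk)$, $(\lambda_1-\lambda_2)u/(rk)$ and $\lambda_2 v/(rk)$ sum to exactly $1$. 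The main obstacle is not conceptual but bookkeeping: the expansion must be arranged so that $N_\alpha N_\alpha^T$ appears as a contractible factor, and the parameter relations must be applied in precisely the right way to collapse the coefficients into the claimed convex-combination form with the uniform-input R\'enyi divergences.
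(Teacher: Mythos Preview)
Your proposal is correct and follows essentially the same route as the paper: reduce $I(A\wedge Z,S)$ to a per-color R\'enyi $2$-divergence against the message-independent reference $(P_{\mc X}W)\otimes P_{\mc S}$, then contract $N_\alpha N_\alpha^T$ using \eqref{eq:BIBD_inc_matr}/\eqref{eq:GDD_inc_matr} and the parameter relations. The only cosmetic differences are that the paper reaches the per-color bound via $I(A\wedge Z,S)\leq I(A,S\wedge Z)$ (Lemma~\ref{lem:WT_muti_prep}) rather than the golden formula plus Jensen, and organizes the quadratic-form computation by summing over $z$ with the column vectors $w_z$ (Proposition~\ref{prop:WT_muti_per_des}) rather than via your kernel $\Psi(x,x')$; the two are the same computation read in transposed order.
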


This theorem is proved in Section \ref{sect:sec_proofs}. The main observation is Proposition \ref{prop:WT_muti_per_des}, which both for the BIBD and the GDD case states equality between $\exp(D_2(P_{Z\vert S,A=\alpha}\Vert P_{Z\vert S}\vert P_{\mc S}))$ and the respective upper bounds in the statement. Since this equality for every $\alpha$ only depends on $D_\alpha$, it really is a statement about BIBDs and GDDs.

Clearly, a GDD with $\lambda_1=\lambda_2$ is a BIBD, so the first part of the theorem is implied by the second one. The same holds for Theorems \ref{thm:WT_TV_sec}, \ref{thm:PA_TV_sec} and \ref{thm:PA_muti_sec} below.

An alternative measure of semantic security is formulated in terms of total variation distance. Denote the product of probability distributions $P$ and $Q$ by $PQ$. Then, with the random variables $Z,S,A$ as defined in \eqref{eq:WT_joint_distr}, we would like
\begin{equation}\label{eq:def_TV_semsec}
    \max_{P_A}\lVert P_{ZSA}-P_{ZS}P_A\rVert
\end{equation}
to be small. If it equals zero, then the eavesdropper's observations are independent of the message, for all possible message distributions.

\begin{thm}\label{thm:WT_TV_sec}
\begin{enumerate}
    \item Let $W:\mc X\to\mc Z$ be a wiretap channel and let $f:\mc X\times\mc S\to\mc A$ be the functional form of a mosaic of $(v,k,\lambda)$ BIBDs. Then
    \begin{align*}
        &\max_{P_A}\lVert P_{ZSA}-P_{ZS}P_A\rVert
        \leq2\left(\frac{(r-\lambda)}{kr}\right)^{1/2}\left(\exp\bigl(D_2(W\Vert P_{\mc X}W\vert P_{\mc X})\bigr)-1\right)^{1/2}.
    \end{align*}
    
    \item  Let $W:\mc X\to\mc Z$ be a wiretap channel and let $f:\mc X\times\mc S\to\mc A$ be the functional form of a mosaic of $(u,m,k,\lambda_1,\lambda_2)$ GDDs with a common point class partition $\Pi$. Define $P_\Pi$ and $R_\Pi$ as in Theorem \ref{thm:WT_muti_sec}. Then
    \begin{align*}
        &\max_{P_A}\lVert P_{ZSA}-P_{ZS}P_A\rVert\\
        &\leq2\Biggl(\frac{r-\lambda_1}{kr}\exp\bigl(D_2(W\Vert P_{\mc X}W\vert P_{\mc X})\bigr)
        +\frac{(\lambda_1-\lambda_2)u}{kr}\exp\bigl(D_2(R_{\Pi}W\Vert P_{\mc X}W\vert P_{\Pi})\bigr)\\
        &\qquad-\frac{(r-\lambda_1)+(\lambda_1-\lambda_2)u}{kr}\Biggr)^{1/2}.
    \end{align*}
\end{enumerate}
\end{thm}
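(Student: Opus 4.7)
The plan is to reduce the TV bound to the per-color R\'enyi $2$-divergence identity encoded in Proposition \ref{prop:WT_muti_per_des}, via Cauchy--Schwarz (to convert $\chi^2$ into TV) and a triangle-inequality trick (to handle the dependence of $P_{ZS}$ on the message distribution). To this end, I would introduce the reference measure $\pi(z,s)=\frac{1}{b}P_{\mc X}W(z)$ on $\mc Z\times\mc S$, which is the joint distribution of $(Z,S)$ in the idealized scenario where $X$ is uniform on $\mc X$ and $S$ is uniform on $\mc S$; by the tactical-configuration identity $\sum_\alpha N_\alpha=J$, this is exactly $P_{ZS}$ when $P_A$ is uniform. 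Using $P_{ZSA}(z,s,\alpha)-P_{ZS}(z,s)P_A(\alpha)=P_A(\alpha)\bigl(P_{ZS\vert A=\alpha}(z,s)-P_{ZS}(z,s)\bigr)$ together with the triangle inequality against $\pi$ and the convexity bound $\lVert P_{ZS}-\pi\rVert\leq\sum_\alpha P_A(\alpha)\lVert P_{ZS\vert A=\alpha}-\pi\rVert$ produces the factor of $2$ in the theorem:
\[
\lVert P_{ZSA}-P_{ZS}P_A\rVert=\sum_\alpha P_A(\alpha)\lVert P_{ZS\vert A=\alpha}-P_{ZS}\rVert\leq 2\sum_\alpha P_A(\alpha)\lVert P_{ZS\vert A=\alpha}-\pi\rVert.
\]

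Next, I would bound each inner TV distance through $\chi^2$ using \eqref{eq:TV_vs_chi2}; the support condition is automatic because $\pi(z,s)=0$ forces $w(z\vert x)=0$ for every $x$, whence $P_{ZS\vert A=\alpha}(z,s)=0$ as well. Since $S$ is uniform and independent of $A$, the joint $\chi^2$ factors across seeds,
\[
\chi^2(P_{ZS\vert A=\alpha},\pi)=\frac{1}{b}\sum_s\chi^2(P_{Z\vert S=s,A=\alpha},P_{\mc X}W)=\exp\bigl(D_2(P_{Z\vert S,A=\alpha}\Vert P_{\mc X}W\vert P_{\mc S})\bigr)-1,
\]
by \eqref{eq:chi2_vs_D2} and the definition of the averaged R\'enyi divergence from Section \ref{subsect:rel_def}. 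Proposition \ref{prop:WT_muti_per_des}, together with the observation that $P_{Z\vert S}=P_{\mc X}W$ under a uniform $P_A$, identifies the right-hand side with the expressions inside the brackets of Theorem \ref{thm:WT_muti_sec} (for BIBDs and GDDs respectively); subtracting $1$ leaves precisely the quantity appearing inside the square root of the present theorem. As this bound is independent of $\alpha$, averaging against $P_A$ and maximizing over $P_A$ preserve it, and combined with the factor of $2$ from the first step this yields the claim.

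The only substantial work lies in Proposition \ref{prop:WT_muti_per_des} itself, whose proof exploits the matrix identities \eqref{eq:BIBD_inc_matr} and \eqref{eq:GDD_inc_matr}; once that per-design identity is granted, the present theorem is a routine Cauchy--Schwarz plus triangle-inequality reduction, the only real subtlety being the convexity step that produces the factor $2$.
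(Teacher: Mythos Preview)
Your proof is correct and follows essentially the same route as the paper: the triangle-inequality step producing the factor $2$ is exactly Csisz\'ar's Lemma \eqref{eq:WT_TV_prep}, and the passage from TV to $\chi^2$ to the averaged R\'enyi $2$-divergence and then to Proposition~\ref{prop:WT_muti_per_des} matches the paper line for line. One small sharpening: your reference measure $\pi=P_{\mc X}W\otimes P_{\mc S}$ actually equals $P_ZP_{\mc S}$ for \emph{every} $P_A$, not only the uniform one, because each $N_\alpha$ satisfies $N_\alpha j=rj$ (see \eqref{eq:WT_Z_unif}); the paper uses this to cite Csisz\'ar's lemma directly rather than re-deriving the convexity step.
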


This theorem is also proved in Section \ref{sect:sec_proofs}. It essentially follows from Theorem \ref{thm:WT_muti_sec} and the relations \eqref{eq:TV_vs_chi2} and \eqref{eq:chi2_vs_D2}.

\paragraph{Interpretation.}

The importance of the bounds of Theorems \ref{thm:WT_muti_sec} and \ref{thm:WT_TV_sec} is that they show how much randomness $k$ is sufficient in the randomized inverse in order to obtain a desired level of semantic security. Since $v$ non-confidential messages can be reliably transmitted to Bob, this transforms into a lower bound on the number $a$ of confidential messages. 

The bounds of Theorems \ref{thm:WT_muti_sec} and \ref{thm:WT_TV_sec} can be improved by ``smoothing'' $W$. This means that the outputs of $W$ are restricted to being ``typical'', i.e., outputs of low probability are cut off. This idea goes back to Renner and Wolf \cite{RW_smoothing}. By smoothing, the conditional divergences can be reduced substantially at the cost of a small additive term in each bound. After smoothing, the channel will in general not be stochastic any more, but only substochastic. The proofs of the theorems remain valid for substochastic channels since they only use the nonnegativity of the entries of $W$. All that needs to be done is to generalize the R\'enyi divergences to substochastic channels like in \cite{BRI}.

The bounds can be evaluated by comparing them with the benchmark cases of memoryless discrete and Gaussian wiretap channels (see \cite{BlochBarros} or \cite{BRI} for a definition). These wiretap channels actually are families $\{W_n:n\geq 1\}$ of channels; the parameter $n$ indicates the \textit{blocklength}. For these channels, a sequence of security codes achieves \textit{asymptotic optimality} as the blocklength goes to infinity if the largest possible asymptotic communication rate for confidential message transmission, the \textit{secrecy capacity}, is achieved subject to the condition that either \eqref{eq:def_muti_semsec} or \eqref{eq:def_TV_semsec} goes to zero.

Theorems \ref{thm:WT_muti_sec} and \ref{thm:WT_TV_sec} show that security functions given by suitable mosaics of BIBDs or of semi-regular GDDs achieve asymptotic optimality when applied to memoryless discrete or Gaussian wiretap channels after smoothing each $W_n$. This holds even if the channel between Alice and Bob is not perfect, in which case the $W_n$ are concatenations of an encoder and a memoryless channel. For the proof, one proceeds like in \cite{BRI}. Functional forms of block rate optimal mosaics of singular GDDs turn out to be suboptimal security functions, as discussed below.

We would like to stress, however, that the theorems hold without any further structural assumptions on the channel $W$. For a targeted level of security and a given channel, they can be used to determine an achievable communication rate at which confidential messages can be sent through the channel using an efficiently computable security code.

Note that both in Theorem \ref{thm:WT_muti_sec} and Theorem \ref{thm:WT_TV_sec}, the wiretap channel enters into the upper bounds only through the conditional R\'enyi 2-divergences. This gives some robustness against channel variations or limited channel knowledge.

\paragraph{The bounds in the GDD case.}

Assume that $N$ is the incidence matrix of a $(u,m,k,\lambda_1,\lambda_2)$ GDD and $w\in\mbb R^{\mc X}$ a nonnegative vector. Set $\lambda_\maxsub=\max\{\lambda_1,\lambda_2\}$. Then
\begin{equation}\label{eq:GDD_inc_ub}
	w^TNN^Tw\leq(r-\lambda_\maxsub)w^Tw+\lambda_\maxsub(w^Tj)^2.
\end{equation}
In the proofs of the GDD cases of Theorems \ref{thm:WT_muti_sec} and \ref{thm:WT_TV_sec}, the relation \eqref{eq:GDD_inc_matr} is used with equality. By using \eqref{eq:GDD_inc_ub} instead of \eqref{eq:GDD_inc_matr}, one obtains an upper bound of the same form as that obtained in the BIBD case of the theorems, with $\lambda$ replaced by $\lambda_\maxsub$. Since the point class decomposition of $\mc X$ associated with the applied mosaic of GDDs will not in general have any special relation to the channel, using this looser upper bound might save the work of estimating the additional R\'enyi divergence or entropy and give a bound which, for the benchmark cases and for mosaics of BIBDs or of semi-regular GDDs, is asymptotically equivalent to the one appearing in the theorems. 

The GDD bounds of Theorems \ref{thm:WT_muti_sec} and \ref{thm:WT_TV_sec} can also be simplified without using the upper bound \eqref{eq:GDD_inc_ub} by taking the type of the members of the mosaic $M=(D_\alpha)_{\alpha\in\mc A}$ into consideration. 

In the case where the members of $M$ are singular GDDs, every $D_\alpha$ is induced by a BIBD $D_\alpha^*$. Since the point class partitions of all $D_\alpha$ are the same, all $D_\alpha^*$ have the same parameters $v^*,k^*,\lambda^*$ and form a mosaic of BIBDs. The coefficients of $D_2(W\Vert P_{\mc X}W\vert P_{\mc X})$ vanish, hence only the divergence involving the point class partition is relevant. In Theorem \ref{thm:WT_muti_sec}, the two nonzero coefficients have the form
\begin{equation}\label{eq:WT_singular_coeffs}
1-\frac{r^*-\lambda^*}{k^*r^*}\quad\text{and}\quad\frac{r^*-\lambda^*}{k^*r^*}.
\end{equation}
In Theorem \ref{thm:WT_TV_sec}, both remaining coefficients equal $(r^*-\lambda^*)/k^*r^*$. 

Semi-regular GDDs satisfy $rk=\lambda_2v$. Hence if $M$ consists of semi-regular GDDs, then the three coefficients in Theorem \ref{thm:WT_muti_sec}, in the order of their appearance, equal
\begin{equation}\label{eq:WT_semireg_coeffs}
1,\quad -\frac{r-\lambda_1}{kr},\quad\frac{r-\lambda_1}{kr}.
\end{equation}
For the case where $\lambda_1=0$, in particular, in the case of transversal designs, the same coefficients become
\[
1,\quad -\frac{1}{k},\quad\frac{1}{k}.
\]
The coefficients obtain a similarly simple form in Theorem \ref{thm:WT_TV_sec}.

\paragraph{Suboptimality of singular GDDs.}

When applied in Theorems \ref{thm:WT_muti_sec} and \ref{thm:WT_TV_sec}, approximately block rate optimal mosaics of singular GDDs with a small color rate and a sufficiently large point set achieve strictly lower color rates than mosaics of BIBDs or of semi-regular GDDs at the same security level. In particular, they turn out to be asymptotically suboptimal in the case of memoryless discrete or Gaussian wiretap channels, where the size of the point set goes to infinity with increasing blocklength. This means that asymptotically optimal sequences of security functions given by mosaics of BIBDs or GDDs for these channels have block rates at least 1.

We only discuss Theorem \ref{thm:WT_muti_sec} here, the situation is analogous in Theorem \ref{thm:WT_TV_sec}. We begin with the following simple lemma which is the basis of our discussion.

\begin{lem}\label{lem:divs_comp}
	For a wiretap channel $W:\mc X\to\mc Z$ and a partition $\Pi=\{\mc X_1,\ldots,\mc X_m\}$ of $\mc X$ into sets of size $u$, it holds that
	\[
	D_2(W\Vert P_{\mc X}W\vert P_{\mc X})-\log u
	\leq D_2(R_\Pi W\Vert P_{\mc X}W\vert P_\Pi)
	\leq D_2(W\Vert P_{\mc X}W\vert P_{\mc X}).
	\]
	Equality is possible on both sides. It holds on the left-hand side if and only if for every $z\in\mc Z$ and $1\leq i\leq m$, there exists at most one $x\in\mc X_i$ such that $w(z\vert x)>0$. Equality holds on the right-hand side if and only if for every $z\in\mc Z$ and every $1\leq i\leq m$, the entries $w(z\vert x)$ are constant for $x$ ranging over $\mc X_i$.
\end{lem}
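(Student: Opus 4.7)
The plan is to reduce both inequalities to pointwise-in-$z$ statements about the nonnegative vector $(w(z\vert x))_{x\in\mc X}$ restricted to the blocks $\mc X_i$ of $\Pi$. First I would unfold the averaged R\'enyi 2-divergences from Subsection \ref{subsect:rel_def}, setting $Q=P_{\mc X}W$: using $R_\Pi W(z\vert\mc X_i)=u^{-1}\sum_{x\in\mc X_i}w(z\vert x)$ and $v=mu$, one obtains
\[
\exp\bigl(D_2(W\Vert Q\vert P_{\mc X})\bigr)=\frac{1}{v}\sum_{z:Q(z)>0}\frac{1}{Q(z)}\sum_{i=1}^m\sum_{x\in\mc X_i}w(z\vert x)^2,
\]
\[
\exp\bigl(D_2(R_\Pi W\Vert Q\vert P_\Pi)\bigr)=\frac{1}{vu}\sum_{z:Q(z)>0}\frac{1}{Q(z)}\sum_{i=1}^m\biggl(\sum_{x\in\mc X_i}w(z\vert x)\biggr)^{\!2}.
\]
Outputs with $Q(z)=0$ force $w(z\vert x)=0$ for every $x$ and can be dropped from both expressions. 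The two claimed bounds then follow from comparing $\sum_{x\in\mc X_i}w(z\vert x)^2$ and $(\sum_{x\in\mc X_i}w(z\vert x))^2$ blockwise and summing with the common weights $Q(z)^{-1}$.

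For the right-hand inequality, I would apply Cauchy--Schwarz to each block, $(\sum_{x\in\mc X_i}w(z\vert x))^2\leq u\sum_{x\in\mc X_i}w(z\vert x)^2$. The factor $u$ obtained here cancels exactly the $1/u$ that separates the two exponentiated expressions above, which yields the inequality after taking logarithms. Equality in Cauchy--Schwarz across every relevant $(z,i)$ is the familiar condition that $(w(z\vert x))_{x\in\mc X_i}$ is constant on the block, matching the stated characterization (the case $Q(z)=0$ is automatic).

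For the left-hand inequality I would use the dual elementary bound valid for nonnegative reals, $\sum_{x\in\mc X_i}w(z\vert x)^2\leq(\sum_{x\in\mc X_i}w(z\vert x))^2$, which holds because cross terms are nonnegative, with equality exactly when at most one entry of the block is nonzero. Summed with the same weights and with a factor of $u$ absorbed, this gives $\exp(D_2(W\Vert Q\vert P_{\mc X}))\leq u\exp(D_2(R_\Pi W\Vert Q\vert P_\Pi))$, which is the left inequality after taking $\log$.

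I do not anticipate a genuine obstacle here: both bounds reduce to two dual one-block estimates relating $\sum a_x^2$ and $(\sum a_x)^2$ on blocks of size $u$, one supplied by Cauchy--Schwarz and the other by trivial nonnegativity of cross terms. The only routine care needed is the consistent handling of outputs $z$ with $Q(z)=0$ in the expansions and in the equality conditions, which contribute nothing to either divergence.
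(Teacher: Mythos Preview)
Your argument is correct. The two exponentiated divergences reduce exactly to the expressions you wrote, and the blockwise comparison of $\sum_{x\in\mc X_i}w(z\vert x)^2$ with $\bigl(\sum_{x\in\mc X_i}w(z\vert x)\bigr)^2$ via Cauchy--Schwarz on one side and nonnegativity of cross terms on the other gives both inequalities together with the equality characterizations. The handling of outputs $z$ with $Q(z)=0$ is fine.

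As for comparison with the paper: the paper does not actually supply a proof of this lemma. It is introduced as a ``simple lemma'' and stated without argument, so there is nothing to compare your approach against. Your proof is precisely the kind of direct elementary computation the authors presumably had in mind.
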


If one applies Theorem \ref{thm:WT_muti_sec} with a mosaic of semi-regular GDDs, then one sees from \eqref{eq:WT_semireg_coeffs}  that a security level $\max_{P_A}I(A\wedge Z,S)$ smaller than $\delta>0$ is achieved by choosing $\log k$ equal to $D_2(W\Vert P_{\mc X}W\vert P_{\mc X})+\log(1/\delta)$. This results in the color rate
\[
	\tilde\varrho
	=1-\frac{D_2(W\Vert P_{\mc X}W\vert P_{\mc X})+\log(1/\delta)}{\log v}.
\]
The same holds in the simpler situation of mosaics of BIBDs.  

Now assume that $\tilde\varrho<1/2$. By Section \ref{subsect:bro}, the only possibility to achieve a security level smaller than $\delta$ for the same channel $W$ with an approximately block rate optimal mosaic could be a mosaic $M$ of singular GDDs which is the $u$-fold multiple of a mosaic $M^*$ of block rate optimal BIBDs and of color rate $\varrho^*$. When Theorem \ref{thm:WT_muti_sec} is applied with the security function determined by $M$, the $D_2(W\Vert P_{\mc X}W\vert P_{\mc X})$ term vanishes in the upper bound of Theorem \ref{thm:WT_muti_sec}. By \eqref{eq:WT_singular_coeffs}, a security level smaller than $\delta$ is achieved by choosing $\log k^*$ equal to $D_2(R_\Pi W\Vert P_{\mc X}W\vert P_\Pi)+\log(1/\delta)$, and without any further information about the channel, this latter expression can be as large as $D_2( W\Vert P_{\mc X}W\vert P_{\mc X})+\log(1/\delta)$ by Lemma \ref{lem:divs_comp}.

 For the color rate $\varrho$ of $M$, this means that
\begin{equation}\label{eq:varrho-bounds}
	\varrho
	=1-\frac{\log k}{\log v}
	=1-\frac{\log k^*+\log u}{\log v}
	\leq\tilde\varrho-\frac{\log u}{\log v}.
\end{equation}
This is at most $\tilde\varrho$. In fact, for fixed $\tilde\varrho$, it is easy to see that $\log u/\log v$ is bounded from below for large $v$. This is because the approximate block rate optimality of $M$ requires $\varrho^*$ to be at least $\varrho_0(v^*,k^*)$, which tends to $1/2$ as $v^*$ grows. And if $v^*$ is kept small, then $u$ necessarily has to be large.

The loss of color rate as in \eqref{eq:varrho-bounds} can be avoided if one knows that equality is satisfied in the left-hand inequality of Lemma \ref{lem:divs_comp} for a certain partition $\Pi$. However, an application of this in the security bounds would require knowledge of $D_2(R_\Pi W\Vert P_{\mc X}W\vert P_\Pi)$ and the adaptation of the point class partition of the GDDs to that of the wiretap channel, which is not necessary in the case of mosaics of BIBDs or of semi-regular GDDs.

\subsection{Privacy amplification}

Now we turn to privacy amplification. Assume that the random variable $X$ is shared by Alice and Bob and that Eve observes a random variable $Z$ correlated with $X$. Without loss of generality, we assume that $P_Z(z)>0$ for all $z\in\mc Z$. Moreover, Alice and Bob both are given the functional form $f:\mc X\times\mc S\to\mc A$ of a mosaic $(D_\alpha)_{\alpha\in\mc A}$ of $(v,k,r)$ tactical configurations. In order to generate a secret key, Alice and Bob observe a realization $x$ of $X$, choose a seed $s\in\mc S$ uniformly at random, and take $\alpha=f(x,s)$ as the secret key. Denote the random variable generated by applying $f$ as described above by $A$. The joint distribution of $X,Z,S$ and $A$ is
\begin{equation}\label{eq:PA_joint_distr}
    P_{XZSA}(x,z,s,\alpha)=\frac{1}{b}P_{XZ}(x,z)N_\alpha(x,s),
\end{equation}
where $N_\alpha$ is the incidence matrix of $D_\alpha$. The key $A$ should be nearly uniformly distributed on $\mc A$ and semantically secure with respect to Eve's observation. The first condition is satisfied perfectly.

\begin{lem}\label{lem:A_unif}
    The distribution of $A$ is uniform on $\mc A$.
\end{lem}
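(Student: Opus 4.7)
My plan is to marginalize the joint distribution in \eqref{eq:PA_joint_distr} and exploit the fact that every member $D_\alpha$ of the mosaic is a $(v,k,r)$ tactical configuration, so the column sums of the rows of $N_\alpha$ over $\mc S$ are constant.

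Concretely, I would compute
\[
P_A(\alpha)=\sum_{x,z,s}P_{XZSA}(x,z,s,\alpha)=\frac{1}{b}\sum_{x,s}P_X(x)\,N_\alpha(x,s),
\]
after summing out $z$. For each fixed $x\in\mc X$, the sum $\sum_{s}N_\alpha(x,s)$ counts the number of block indices incident with $x$ in $D_\alpha$, which equals $r$ because $D_\alpha$ is a $(v,k,r)$ tactical configuration. Hence $P_A(\alpha)=r/b$, and using \eqref{eq:bireg} together with $a=v/k$ yields $r/b=k/v=1/a$, independent of $\alpha$.

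The only ingredient beyond elementary manipulation is the observation that the marginal $P_A$ is determined by the row-sum regularity of $N_\alpha$, not by $P_{XZ}$; in particular, no property of the correlation with $Z$ enters. There is no serious obstacle here — the result is essentially the statement that a $(v,k,r)$ tactical configuration induces a balanced assignment of colors to points when the seed is uniform, which is exactly what makes mosaics of tactical configurations well suited to guarantee perfectly uniform keys.
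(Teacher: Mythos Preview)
Your proof is correct and follows essentially the same approach as the paper: marginalize out $z$, use the row regularity $\sum_s N_\alpha(x,s)=r$ of the tactical configuration (the paper writes this as $N_\alpha j=rj$), and apply \eqref{eq:bireg} to conclude $P_A(\alpha)=r/b=1/a$. The only difference is notational---the paper uses vector/matrix notation where you use explicit sums.
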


\begin{proof}
    Note that $N_\alpha j=rj$, where $j$ denotes the all-ones vector of appropriate dimension. Hence, considering $P_X$ as a vector in $\mbb R^{\mc X}$ and using \eqref{eq:bireg},
    \[
    	P_A(\alpha)
        =\sum_{x,z,s}P_{XZSA}(x,z,s,\alpha)
        =\frac{1}{b}\sum_{x,s}P_X(x)N_\alpha(x,s)j(s)
        =\frac{1}{b}P_X^TN_\alpha j
        =\frac{r}{b}P_X^Tj
        =\frac{k}{v}=\frac{1}{a}.
    \]
\end{proof}

For semantic security, we can again use total variation distance or mutual information as the security measure. One equivalent formulation of semantic security is the indistinguishability of two possible realizations of the secret. In terms of total variation distance, this means that for any two distinct $\alpha,\alpha'\in\mc A$, one wants
\[
    \lVert P_{ZS\vert A=\alpha}-P_{ZS\vert A=\alpha'}\rVert
\]
to be uniformly small. By the triangle inequality, this is true if
\begin{equation}\label{eq:TV_indist}
    \lVert P_{ZS\vert A=\alpha}-P_ZP_{\mc S}\rVert
\end{equation}
is small, uniformly in $\alpha\in\mc A$. 

For any point class partition $\Pi=\{\mc X_1,\ldots,\mc X_m\}$ of $\mc X$, we define the random variable $X_\Pi$ whose conditional distribution given $Z$ is
\[
    P_{X_\Pi\vert Z}(i\vert z)=P_{X\vert Z}(\mc X_i\vert z).
\]
Then we have the following result.

\begin{thm}\label{thm:PA_TV_sec}
\begin{enumerate}
    \item Let $P_{XZSA}$ be the joint distribution \eqref{eq:PA_joint_distr} generated by the functional form of a mosaic of $(v,k,\lambda)$ BIBDs. Then
    \[
        \max_{\alpha\in\mc A}\lVert P_{ZS\vert A=\alpha}-P_ZP_{\mc S}\rVert\leq\left(\frac{r-\lambda}{r}\right)^{1/2}\left(a2^{-\min_zH_2(X\vert Z=z)}-\frac{1}{k}\right)^{1/2}
    \]
    \item Let $P_{XZSA}$ be the joint distribution \eqref{eq:PA_joint_distr} generated by the functional form of a mosaic of $(u,m,k,\lambda_1,\lambda_2)$ GDDs with a common point class partition $\Pi$. Then
    \begin{align*}
        \max_{\alpha\in\mc A}\lVert P_{ZS\vert A=\alpha}-P_ZP_{\mc S}\rVert
        &\leq\max_{z\in\mc Z}\Biggl\{\frac{a(r-\lambda_1)}{r}2^{-H_2(X\vert Z=z)}+\frac{a(\lambda_1-\lambda_2)}{r}2^{-H_2(X_\Pi\vert Z=z)}\\
        &\qquad\qquad-\frac{(r-\lambda_1)+(\lambda_1-\lambda_2)u}{kr}\Biggr\}^{1/2}.
    \end{align*}
\end{enumerate}
\end{thm}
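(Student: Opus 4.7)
The plan is to bound $\lVert P_{ZS\vert A=\alpha}-P_ZP_{\mc S}\rVert$ through the $\chi^2$-divergence inequality \eqref{eq:TV_vs_chi2} and then exploit the per-member structure of the mosaic using \eqref{eq:BIBD_inc_matr} or \eqref{eq:GDD_inc_matr}. First, Lemma \ref{lem:A_unif} gives $P_A(\alpha)=1/a$, so
\[
    P_{ZS\vert A=\alpha}(z,s) = \frac{a}{b}\sum_x P_{XZ}(x,z)\,N_\alpha(x,s).
\]
Since $P_Z(z)>0$ for all $z$, the product $P_ZP_{\mc S}$ has full support on $\mc Z\times\mc S$, and \eqref{eq:TV_vs_chi2} reduces the task to showing that $\chi^2(P_{ZS\vert A=\alpha},P_ZP_{\mc S})$ is at most the square of the claimed right-hand side.

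To compute $\chi^2$, I would square the above formula for $P_{ZS\vert A=\alpha}(z,s)$ and sum over $s$, which gives
\[
    \sum_s P_{ZS\vert A=\alpha}(z,s)^2 = \frac{a^2}{b^2}\sum_{x,x'}P_{XZ}(x,z)P_{XZ}(x',z)\,(N_\alpha N_\alpha^T)(x,x').
\]
Substituting \eqref{eq:BIBD_inc_matr} in the BIBD case, or \eqref{eq:GDD_inc_matr} in the GDD case, splits the inner sum into an $I$-part equal to $\sum_x P_{XZ}(x,z)^2$, a $C$-part (GDD only) equal to $P_Z(z)^2\,2^{-H_2(X_\Pi\vert Z=z)}$ after collapsing each point class via $\sum_{x\in\mc X_i}P_{XZ}(x,z)=P_Z(z)P_{X_\Pi\vert Z=z}(i)$, and a $J$-part equal to $P_Z(z)^2$. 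Dividing by $P_Z(z)/b$, summing over $z$, and using $\sum_x P_{XZ}(x,z)^2/P_Z(z)=P_Z(z)\,2^{-H_2(X\vert Z=z)}$ together with $b=ar$ from \eqref{eq:bireg}, I arrive at
\[
    \chi^2+1 = \frac{a(r-\lambda_1)}{r}\sum_z P_Z(z)\,2^{-H_2(X\vert Z=z)} + \frac{a(\lambda_1-\lambda_2)}{r}\sum_z P_Z(z)\,2^{-H_2(X_\Pi\vert Z=z)} + \frac{a\lambda_2}{r}.
\]
Bounding each weighted $z$-average by a maximum over $z$ — and in the GDD case by the maximum over $z$ of the combined sum of the two $H_2$ terms, to get the tighter form stated — handles the dependence on $z$.

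The last step simplifies the residual constant. The identity \eqref{eq:GDD_param_rel} combined with $v=um=ak$ yields $kr-\lambda_2 v=(r-\lambda_1)+(\lambda_1-\lambda_2)u$, hence $a\lambda_2/r - 1 = \lambda_2 v/(kr)-1 = -((r-\lambda_1)+(\lambda_1-\lambda_2)u)/(kr)$, which matches the subtracted constant in the GDD bound. The BIBD case is recovered by specializing $\lambda_1=\lambda_2=\lambda$ and $u=1$, where \eqref{eq:BIBD_param_rel} reduces the leftover constant to $-(r-\lambda)/(rk)$ and factoring out $(r-\lambda)/r$ produces the claimed BIBD bound. Taking the square root then yields the stated inequalities. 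The main obstacle is arithmetic rather than conceptual: the constant produced by the $\chi^2$ expansion must coincide exactly with the subtracted term in each bound, which requires careful use of \eqref{eq:bireg}, \eqref{eq:BIBD_param_rel}, and \eqref{eq:GDD_param_rel}, and keeping track of which terms in \eqref{eq:GDD_inc_matr} merge when $\lambda_1=\lambda_2$.
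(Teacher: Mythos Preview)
Your proposal is correct and uses the same ingredients as the paper: the bound $\lVert\cdot\rVert\leq\sqrt{\chi^2}$ from \eqref{eq:TV_vs_chi2}, the $NN^T$ identities \eqref{eq:BIBD_inc_matr}/\eqref{eq:GDD_inc_matr}, and the parameter relation \eqref{eq:GDD_param_rel} to simplify the constant. The paper's only variation is that it first conditions on $z$ via $\lVert P_{ZS\vert A=\alpha}-P_ZP_{\mc S}\rVert\leq\max_z\lVert P_{S\vert Z=z,A=\alpha}-P_{\mc S}\rVert$ and then applies Proposition~\ref{prop:PA_muti_per_des} pointwise, whereas you compute the global $\chi^2$ on $\mc Z\times\mc S$ (which, since $Z$ is independent of $A$, is the $P_Z$-average of the same per-$z$ quantities) and bound the average by the maximum; both orderings give the identical final bound.
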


This is proved in Section \ref{sect:sec_proofs} as a consequence of the next theorem. 

If we prefer to measure the indistinguishability of key values with respect to Kullback-Leibler divergence, we should ensure that there exists a probability measure $Q$ on $\mc Z\times\mc S$ such that $P_{ZS\vert A=\alpha}$ is close to $Q$ in terms of Kullback-Leibler divergence, uniformly in $\alpha\in\mc A$. This is analogous to \eqref{eq:TV_indist}. If we choose $Q=P_ZP_{\mc S}$, then we have the following bound.

\begin{thm}\label{thm:PA_muti_sec}
\begin{enumerate}
    \item Let $P_{XZSA}$ be the joint distribution \eqref{eq:PA_joint_distr} generated by the functional form of a mosaic of $(v,k,\lambda)$ BIBDs. Then
    \begin{align*}
        \max_{\alpha\in\mc A}\exp\bigl(D(P_{ZS\vert A=\alpha}\Vert P_ZP_{\mc S})\bigr)
        &\leq\frac{a(r-\lambda)}{r}2^{-\min_zH_2(X\vert Z=z)}
        +\left(1-\frac{r-\lambda}{kr}\right).
    \end{align*}
    \item Let $P_{XZSA}$ be the joint distribution \eqref{eq:PA_joint_distr} generated by the functional form of a mosaic of $(u,m,k,\lambda_1,\lambda_2)$ GDDs with a common point class partition $\Pi$. Then
    \begin{align*}
        &\max_{\alpha\in\mc A}\exp\bigl(D(P_{ZS\vert A=\alpha}\Vert P_ZP_{\mc S})\bigr)\\
        &\leq\max_{z\in\mc Z}\biggl\{\frac{a(r-\lambda_1)}{r}2^{-H_2(X\vert Z=z)}+\frac{a(\lambda_1-\lambda_2)}{r}2^{-H_2(X_\Pi\vert Z=z)}\\
        &\quad\qquad+\left(1-\frac{(r-\lambda_1)+(\lambda_1-\lambda_2)u}{kr}\right)\biggr\}.
    \end{align*}
\end{enumerate}
\end{thm}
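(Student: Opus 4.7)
I will prove the stronger statement obtained by replacing $D$ with the R\'enyi 2-divergence $D_2$; the Kullback-Leibler version then follows from \eqref{eq:Renyi_vs_KL}. The advantage is that $\exp\bigl(D_2(\,\cdot\,\Vert\,\cdot\,)\bigr)$ is a quadratic functional which can be computed exactly, reducing the problem to the Gram identities \eqref{eq:BIBD_inc_matr} and \eqref{eq:GDD_inc_matr}.

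First, Lemma \ref{lem:A_unif} gives $P_{ZS\vert A=\alpha}=a\,P_{ZSA}$, so from \eqref{eq:PA_joint_distr} and $P_{\mc S}(s)=1/b$ I obtain
\[
    \exp\bigl(D_2(P_{ZS\vert A=\alpha}\Vert P_ZP_{\mc S})\bigr)
    =\frac{a^2}{b}\sum_{z\in\mc Z}\frac{p_z^TN_\alpha N_\alpha^Tp_z}{P_Z(z)},
\]
where $p_z\in\mbb R^{\mc X}$ is the vector with entries $P_{XZ}(x,z)$. The identities I will need are $p_z^Tj=P_Z(z)$, $p_z^Tp_z/P_Z(z)=P_Z(z)2^{-H_2(X\vert Z=z)}$, and, for the partition matrix $C$ of a point class partition $\Pi$, $p_z^TCp_z/P_Z(z)=P_Z(z)2^{-H_2(X_\Pi\vert Z=z)}$, the last one because grouping the entries of $p_z$ by point class produces exactly $P_{X_\Pi Z}(i,z)$.

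Second, I expand $N_\alpha N_\alpha^T$ via \eqref{eq:BIBD_inc_matr} in the BIBD case, or \eqref{eq:GDD_inc_matr} in the GDD case, and substitute the three identities above. In the GDD case this yields
\[
    \exp(D_2)=\frac{a^2(r-\lambda_1)}{b}\sum_zP_Z(z)2^{-H_2(X\vert Z=z)}+\frac{a^2(\lambda_1-\lambda_2)}{b}\sum_zP_Z(z)2^{-H_2(X_\Pi\vert Z=z)}+\frac{a^2\lambda_2}{b},
\]
with the middle term absent in the BIBD case. The coefficient $a^2/b$ reduces to $a/r$ by \eqref{eq:bireg}, yielding the prefactors in the statement. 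For the constant term, the parameter relations \eqref{eq:BIBD_param_rel} and \eqref{eq:GDD_param_rel} (together with $v=um$ and $a=v/k$) give $a\lambda_2/r=\lambda_2v/(kr)=1-((r-\lambda_1)+(\lambda_1-\lambda_2)u)/(kr)$ in the GDD case, and analogously $a\lambda/r=1-(r-\lambda)/(kr)$ in the BIBD case. In the GDD case the entire right-hand side is a $P_Z$-weighted average of a $z$-dependent expression and is bounded by its pointwise $\max_z$; in the BIBD case only a single entropy term is $z$-dependent and I bound the sum by $2^{-\min_zH_2(X\vert Z=z)}$.

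The main obstacle is purely algebraic, namely the bookkeeping that converts $a\lambda/r$ and $a\lambda_2/r$ into the specific normalized forms $1-(r-\lambda)/(kr)$ and $1-((r-\lambda_1)+(\lambda_1-\lambda_2)u)/(kr)$ using the parameter relations. Conceptually the argument rests entirely on the Gram identities \eqref{eq:BIBD_inc_matr} and \eqref{eq:GDD_inc_matr}; since $N_\alpha N_\alpha^T$ is identical across all members of the mosaic, the estimate holds uniformly in $\alpha$, so the maximum over $\alpha\in\mc A$ is attained for free.
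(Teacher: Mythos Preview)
Your proof is correct and rests on the same core idea as the paper's: bound $D$ by $D_2$ and expand the resulting quadratic form via the Gram identities \eqref{eq:BIBD_inc_matr} and \eqref{eq:GDD_inc_matr}. The only difference is the order of two steps. The paper first conditions on $Z=z$ using the inequality
\[
    D(P_{ZS\vert A=\alpha}\Vert P_ZP_{\mc S})\leq\max_{z\in\mc Z}D(P_{S\vert Z=z,A=\alpha}\Vert P_{\mc S}),
\]
which relies on the independence of $Z$ and $A$ established in \eqref{eq:Eve_given_key}, and only then passes to $D_2$ and applies Proposition~\ref{prop:PA_muti_per_des} for each fixed $z$. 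You instead pass to $D_2$ on the full $(Z,S)$ space and obtain the $P_Z$-weighted average of the same per-$z$ expressions, which you then bound by $\max_z$. Your route is slightly more direct (no explicit appeal to the independence of $Z$ and $A$ is needed) and yields a marginally tighter intermediate inequality, since a $P_Z$-average never exceeds the pointwise maximum; after taking $\max_z$ the two bounds coincide.
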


The theorem is proved in Section \ref{sect:sec_proofs}. As in the wiretap case, its core is Proposition \ref{prop:PA_muti_per_des}, proving the equality of $\exp(D_2(P_{S\vert Z=z,A=\alpha}\Vert P_{\mc S}))$ with the $z$-term in the upper bound.

\begin{rem}
	The \textit{strong secrecy} criterion usually applied in information theoretic security for secret key generation  assumes that the adversary's a priori knowledge is restricted to the true key distribution. A security function which establishes semantic security also guarantees strong secrecy, since
	\begin{equation}\label{eq:PA_strongsec}
		I(A\wedge Z,S)\leq\max_{\alpha\in\mc A}D(P_{ZS\vert A=\alpha}\Vert P_ZP_{\mc S}).
	\end{equation}
	
	We prove this inequality. It is straightforward to check that for any pair of random variables $X,Y$ on $\mc X\times\mc Y$ and any probability measure $Q$ on $\mc Y$, one has
	\[
	I(X\wedge Y)=D(P_{XY}\Vert P_XP_Y)=\sum_xP_X(x)D(P_{Y\vert X=x}\Vert Q)-D(P_Y\Vert Q).
	\]
	We use this with $Y=(Z,S),X=A$ and $Q=P_ZP_{\mc S}$. Then 
	\begin{align*}
		I(A\wedge Z,S)
		&=D(P_{ZSA}\Vert P_{ZS}P_{\mc A})
		=\sum_\alpha P_A(\alpha)D(P_{ZS\vert A=\alpha}\Vert P_ZP_{\mc S})-D(P_{ZS}\Vert P_ZP_{\mc S})\\
		&\leq\max_\alpha D(P_{ZS\vert A=\alpha}\Vert P_ZP_{\mc S}).
	\end{align*}
	This shows \eqref{eq:PA_strongsec}.
\end{rem}

\paragraph{Interpretation.}

The interpretation of Theorems \ref{thm:PA_TV_sec} and \ref{thm:PA_muti_sec} is analogous to that of Theorems \ref{thm:WT_muti_sec} and \ref{thm:WT_TV_sec}. The number of interest is $a$, the size of the key space. Theorems \ref{thm:PA_TV_sec} and \ref{thm:PA_muti_sec} give a lower bound on the maximal possible $a$ given a required degree of security, and show that this lower bound is achievable using the functional form of a mosaic of BIBDs or GDDs. 

It is proved in \cite[Corollary 4]{BBCM_PA} that
\[
\exp\bigl(I(A\wedge S\vert Z=z)\bigr)\leq a2^{-\min_zH_2(X\vert Z=z)}+1
\]
if the security function is a universal hash function. The upper bound is very similar to the one proved in the first part of Theorem \ref{thm:PA_muti_sec} for mosaics of BIBDs or of semi-regular GDDs, but only gives strong secrecy. (The conditioning on the event $Z=z$ is also possible in our setting, see \eqref{eq:PA_restr_z}.) It follows that these mosaics yield the same key size as universal hash functions, but resulting in a stronger notion of security and generating a perfectly uniformly distributed key. Mosaics of singular GDDs only involve the $\min_zH_2(X_\Pi\vert Z=z)$ term and are discussed in more detail below.

If Alice and Bob are connected by a public two-way channel without rate constraint, the secret-key capacity in the benchmark case of a memoryless discrete source model can be achieved by a sequential key distillation protocol guaranteeing semantic security, using functional forms of mosaics of BIBDs or suitable GDDs in the privacy amplification step (cf.~\cite[Theorem 4.5]{BlochBarros}).

\paragraph{The bounds in the GDD case.}

By applying \eqref{eq:GDD_inc_ub}, the bounds for the GDD cases of Theorems \ref{thm:PA_TV_sec} and \ref{thm:PA_muti_sec} can be given the same form as the ones for the BIBD case, with $\lambda$ replaced by $\lambda_\maxsub$. 

If the mosaic consists of singular GDDs, then the coefficient of the $H_2(X\vert Z=z)$ term vanishes. The second and third terms in Theorem \ref{thm:PA_muti_sec} are
\[
\frac{a(r^*-\lambda^*)}{r^*}\quad\text{and}\quad-\frac{r^*-\lambda^*}{k^*r^*},
\]
where, like in the wiretap scenario, $k^*,r^*,\lambda^*$ are parameters of the underlying BIBDs. 

In the case of semi-regular GDDs, one has, in the order of their appearance, the three terms
\[
\frac{a(r-\lambda_1)}{r},\quad -\frac{a(r-\lambda_1)}{ur},\quad 0.
\]
In particular, for transversal designs, one obtains
\[
a,\quad-1,\quad 0.
\]

Similar simplifications are possible for the bounds of Theorem \ref{thm:PA_TV_sec}.

\paragraph{Suboptimality of singular GDDs.}

As in the wiretap scenario, mosaics of singular GDDs are suboptimal compared with mosaics of BIBDs or of semi-regular GDDs since they require a larger $k$ in order to achieve a comparable security level. 

The reasons are analogous to those for the wiretap case, based on the inequalities
\begin{equation}\label{eq:PA_entr_comparison}
	H_2(X\vert Z=z)-\log u\leq H_2(X_\Pi\vert Z=z)\leq H_2(X\vert Z=z)
\end{equation}
for any partition $\Pi=\{\mc X_1,\ldots,\mc X_m\}$ of $\mc X$ into sets of size $u$, and any $z\in\mc Z$. The condition for equality in the right-hand inequality is that there exist at most one $x$ per $\mc X_i$ with $P_{X\vert Z}(x\vert z)>0$. On the left-hand side, equality holds if and only if $P_{X\vert Z}(\cdot\vert z)$ is constant on each $\mc X_i$ for every $z$.

With a mosaic of BIBDs or of semi-regular GDDs, a key size $\log a$ approximately equal to $\min_zH_2(X\vert Z=z)+\log(1/\delta)$ gives a security level $\delta$.

Now assume that the security function is given by a mosaic of singular GDDs. If one only knows $\min_zH_2(X\vert Z=z)$, then the largest possible key size $\log a$ by which to guarantee a security level of $\delta$ is $H_2(X\vert Z=z)-\log u+\log(1/\delta)$. The key can be chosen larger if one also knows $\min_zH_2(X_\Pi\vert Z=z)$. However, the same key size as in the case of BIBDs or semi-regular GDDs is achievable only if there exists a partition $\Pi$ such that equality is satisfied on the right-hand side of \eqref{eq:PA_entr_comparison}. If one knows that the joint distribution $P_{XZ}$ has this property for a partition $\Pi$, then a mosaic of singular GDDs incurs no rate loss, but the security function has to be adapted to $\Pi$.

\section{Proofs of the security results}\label{sect:sec_proofs}

\subsection{Proof of Theorems \ref{thm:WT_muti_sec} and \ref{thm:WT_TV_sec}}

We first prove Theorem \ref{thm:WT_muti_sec}. It is sufficient to do the proof for mosaics of GDDs. We start with an upper bound on $\max_{P_A}I(A\wedge Z,S)$ in terms of Kullback-Leibler divergence. The all-ones vector of suitable dimension will be denoted by $j$, and for each $z\in\mc Z$, we let $w_z$ be the $z$-th column of $W$

\begin{lem}\label{lem:WT_muti_prep}
	For every joint distribution \eqref{eq:WT_joint_distr}, 
	\[
		I(A\wedge Z,S)\leq\max_{\alpha\in\mc A} D(P_{Z\vert S,A=\alpha}\Vert P_Z\vert P_{\mc S}),
	\]
	and the right-hand side of this inequality is independent of $P_A$.
\end{lem}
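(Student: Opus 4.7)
The plan is to combine the well-known ``golden formula'' for Kullback-Leibler divergence with the biregularity of the tactical configurations. For the inequality itself, I would start from the observation that by construction the seed $S$ is independent of the message $A$, so that $P_{ZS\vert A=\alpha}(z,s)=P_{\mc S}(s)P_{Z\vert S=s,A=\alpha}(z)$ and $P_{ZS}(z,s)=P_{\mc S}(s)P_{Z\vert S=s}(z)$. Using the identity
\[
    \sum_\alpha P_A(\alpha) D(P_{ZS\vert A=\alpha}\Vert Q)=I(A\wedge Z,S)+D(P_{ZS}\Vert Q),
\]
valid for any probability measure $Q$ on $\mc Z\times\mc S$ (and vacuously when $P_{ZS}\not\ll Q$), with the particular choice $Q=P_ZP_{\mc S}$, the nonnegativity of Kullback-Leibler divergence gives
\[
    I(A\wedge Z,S)\leq\sum_\alpha P_A(\alpha)D(P_{ZS\vert A=\alpha}\Vert P_ZP_{\mc S})\leq\max_\alpha D(P_{ZS\vert A=\alpha}\Vert P_ZP_{\mc S}).
\]
The factorization of $P_{ZS\vert A=\alpha}$ noted above then turns the final divergence into $D(P_{Z\vert S,A=\alpha}\Vert P_Z\vert P_{\mc S})$, which is the claimed bound.

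For the independence of the right-hand side from $P_A$, I would verify that both $P_Z$ and $P_{Z\vert S=s,A=\alpha}$ are themselves independent of $P_A$. For $P_Z$ it suffices to compute $P_X$ from \eqref{eq:WT_joint_distr}: using $\sum_s N_\alpha(x,s)=r$ and \eqref{eq:bireg}, one finds $P_X(x)=\sum_\alpha P_A(\alpha)\cdot r/(bk)=1/v$, so $P_X=P_{\mc X}$ and $P_Z=P_{\mc X}W$ regardless of $P_A$. For the conditional distribution, one computes $P_{SA}(s,\alpha)=P_{\mc S}(s)P_A(\alpha)$ from \eqref{eq:WT_joint_distr} (this is just $S\perp A$ again), so that $P_{X\vert S=s,A=\alpha}(x)=N_\alpha(x,s)/k$ is the uniform distribution on the $k$ points of $\mc X$ incident with $s$ in $D_\alpha$. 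Consequently $P_{Z\vert S=s,A=\alpha}=P_{X\vert S=s,A=\alpha}W$ depends only on the mosaic and $W$, and the maximum over $\alpha$ of the conditional divergence is indeed a function of the security function and the channel alone.

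The argument is essentially a clean application of standard identities; the only step requiring a moment of care is the golden-formula decomposition and the observation that it needs $P_{ZS}\ll P_ZP_{\mc S}$, which holds automatically here since $P_{\mc S}(s)>0$ for all $s$. The combinatorial input enters only through the biregularity \eqref{eq:bireg} (used to prove $P_X$ is uniform) and the obvious equality $\sum_s N_\alpha(x,s)=r$, so no properties of BIBDs or GDDs beyond that of being $(v,k,r)$ tactical configurations are needed for this lemma.
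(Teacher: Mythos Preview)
Your argument is correct and essentially coincides with the paper's proof. The only cosmetic difference is in the first step: the paper phrases it as $I(A\wedge Z,S)\leq I(A,S\wedge Z)$ (using $S\perp A$) and then writes $I(A,S\wedge Z)=\sum_{s,\alpha}P_{\mc S}(s)P_A(\alpha)D(P_{Z\vert S=s,A=\alpha}\Vert P_Z)$, whereas you apply the golden formula with $Q=P_ZP_{\mc S}$ directly; since the term discarded in either case is exactly $D(P_{ZS}\Vert P_ZP_{\mc S})=I(S\wedge Z)$, the two routes are the same computation with different bookkeeping. The independence-from-$P_A$ part matches the paper's reasoning verbatim.
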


\begin{proof}
	The inequality is the statement of \cite[Corollary 16]{BRI}, whose proof we will just sketch here. The independence of $A$ and $S$ implies $I(A\wedge Z,S)\leq I(A,S\wedge Z)$ using elementary properties of mutual information. The right-hand mutual information can be expressed as
	\[
		\frac{1}{b}\sum_{s\in\mc S}\sum_{\alpha\in\mc A}P_A(\alpha)D(P_{Z\vert S=s,A=\alpha}\Vert P_Z)
		\leq\max_{\alpha\in\mc A}D(P_{Z\vert S,A=\alpha}\Vert P_Z\vert P_{\mc S}).
	\]
	This gives the claimed inequality.
	
	In order to prove that the upper bound is independent of $P_A$, we note that \eqref{eq:WT_joint_distr} and \eqref{eq:bireg} imply
	\begin{equation}\label{eq:WT_Z_unif}
	    P_Z(z)
	    =\frac{1}{bk}\sum_{\alpha\in\mc A}P_A(\alpha)w_z^TN_\alpha j
	    =\frac{1}{v}w_z^Tj
	    =(P_{\mc X}W)(z).
	\end{equation}
	Thus $P_Z$ is independent of $P_A$. Since $P_{\mc S}$ and $P_{Z\vert S,A=\alpha}$ do not depend on $P_A$ either, this proves the lemma.
\end{proof}

Note that, since the eavesdropper also knows $S$, the validity of \eqref{eq:WT_Z_unif} is not enough to guarantee security. 

If we want to use \eqref{eq:BIBD_inc_matr} or \eqref{eq:GDD_inc_matr}, we need to pass from Kullback-Leibler to R\'enyi $2$-divergence. By Lemma \ref{lem:WT_muti_prep} and \eqref{eq:Renyi_vs_KL}, it is sufficient to show that the upper bound of Theorem \ref{thm:WT_muti_sec} is an upper bound for 
\begin{equation}\label{eq:WT_max_alpha_div}
    \max_{\alpha\in\mc A}D_2(P_{Z\vert S,A=\alpha}\Vert P_Z\vert P_{\mc S}).
\end{equation}
$P_{Z\vert S,A=\alpha}$ is fully determined by $N_\alpha$ and $W$. Hence for each of the divergence terms in \eqref{eq:WT_max_alpha_div} it is no longer important that $N_\alpha$ is the incidence matrix of a member of a mosaic. It follows that Theorem \ref{thm:WT_muti_sec} is a consequence of the following equality.

\begin{prop}\label{prop:WT_muti_per_des}
    Let $N$ be the incidence matrix of a $(u,m,k,\lambda_1,\lambda_2)$ GDD with point set $\mc X$, block index set $\mc S$ and point class partition $\Pi$, and let $W:\mc X\to\mc Z$ be a wiretap channel. Define the random variables $Z,X,S$ on $\mc Z\times\mc X\times\mc S$ by
    \begin{equation}\label{eq:WT_single_des_joint_distr}
        P_{ZXS}(z,x,s)=\frac{1}{bk}w(z\vert x)N(x,s).
    \end{equation}
    Then 
    \begin{align*}
        &\exp\bigl(D_2(P_{Z\vert S}\Vert P_Z\vert P_{\mc S})\bigr)\\
        &=\left(1-\frac{(r-\lambda_1)+(\lambda_1-\lambda_2)u}{kr}\right)
        +\frac{(\lambda_1-\lambda_2)u}{kr}\exp\bigl(D_2(R_{\Pi}W\Vert P_{\mc X}W\vert P_{\Pi})\bigr)\\
        &\qquad+\frac{r-\lambda_1}{kr}\exp\bigl(D_2(W\Vert P_{\mc X}W\vert P_{\mc X})\bigr).
    \end{align*}

\end{prop}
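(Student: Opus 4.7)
The plan is to compute $\exp(D_2(P_{Z\vert S}\Vert P_Z\vert P_{\mc S}))$ directly from \eqref{eq:WT_single_des_joint_distr} and then read off the three terms on the right-hand side after substituting the structural identity \eqref{eq:GDD_inc_matr} for $NN^T$.

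First I would establish the required marginals. Writing $w_z\in\mbb R^{\mc X}$ for the column of $W$ indexed by $z$ and $j$ for the all-ones vector, \eqref{eq:WT_single_des_joint_distr} combined with \eqref{eq:bireg} gives $P_S(s)=\frac{1}{bk}\sum_x N(x,s)=1/b$, i.e.\ $P_S=P_{\mc S}$, and the computation \eqref{eq:WT_Z_unif} already shows $P_Z(z)=\frac{1}{v}w_z^Tj=(P_{\mc X}W)(z)$; moreover $P_{Z\vert S}(z\vert s)=\frac{1}{k}\sum_x w(z\vert x)N(x,s)$. Substituting these into $\exp(D_2(P_{Z\vert S}\Vert P_Z\vert P_{\mc S}))=\frac{1}{b}\sum_{z,s}P_{Z\vert S}(z\vert s)^2/P_Z(z)$ and carrying out the sum over $s$ yields
\[
\exp(D_2(P_{Z\vert S}\Vert P_Z\vert P_{\mc S}))=\frac{1}{bk^2}\sum_z\frac{w_z^T(NN^T)w_z}{P_Z(z)}.
\]

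Next I would plug in $NN^T=(r-\lambda_1)I+(\lambda_1-\lambda_2)C+\lambda_2 J$ from \eqref{eq:GDD_inc_matr} and treat the three summands separately. The $\lambda_2 J$ part gives $w_z^TJw_z=(w_z^Tj)^2=v^2P_Z(z)^2$, and, after summing $z$ out, reduces to the constant $\lambda_2 v^2/(bk^2)=\lambda_2 a/r$ (using $bk=vr$ and $v=ak$). Invoking the GDD parameter relation \eqref{eq:GDD_param_rel}, a short algebraic manipulation rewrites $\lambda_2 a/r$ as $1-[(r-\lambda_1)+(\lambda_1-\lambda_2)u]/(kr)$, matching the constant in the proposition. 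For the $(\lambda_1-\lambda_2)C$ part, the definition of $R_\Pi$ gives $\sum_{x\in\mc X_i}w(z\vert x)=u(R_\Pi W)(z\vert i)$, hence $w_z^TCw_z=u^2\sum_i(R_\Pi W)(z\vert i)^2$, and weighting by $1/P_Z(z)=1/(P_{\mc X}W)(z)$ and summing over $z$ and $i$ produces $u^2 m\exp(D_2(R_\Pi W\Vert P_{\mc X}W\vert P_\Pi))$; the prefactor $(\lambda_1-\lambda_2)/(bk^2)$ then collapses to $(\lambda_1-\lambda_2)u/(kr)$ via $u^2m=uv$ and $bk=vr$. Finally, $w_z^TIw_z=\sum_xw(z\vert x)^2$, and swapping the sums over $x$ and $z$ yields $v\exp(D_2(W\Vert P_{\mc X}W\vert P_{\mc X}))$, delivering the last term with coefficient $(r-\lambda_1)/(kr)$.

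Everything after the first step is a routine expansion in which each of the three canonical matrices $I$, $C$, $J$ appearing in \eqref{eq:GDD_inc_matr} is matched with the corresponding conditional R\'enyi $2$-divergence (respectively $D_2(W\Vert P_{\mc X}W\vert P_{\mc X})$, $D_2(R_\Pi W\Vert P_{\mc X}W\vert P_\Pi)$, and the trivial divergence $0$). The only step that is not just symbol-pushing is the identification of the $\lambda_2 J$ contribution $\lambda_2 a/r$ with $1-[(r-\lambda_1)+(\lambda_1-\lambda_2)u]/(kr)$; this is the main obstacle, but it amounts to a single application of \eqref{eq:GDD_param_rel} together with \eqref{eq:bireg}.
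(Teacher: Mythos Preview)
Your argument is correct and follows essentially the same route as the paper: compute the marginals, reduce the conditional R\'enyi $2$-divergence to $\frac{1}{kr}\sum_z (w_z^TNN^Tw_z)/(w_z^Tj)$, insert the GDD identity \eqref{eq:GDD_inc_matr}, and identify the $I$, $C$, $J$ contributions with the three terms in the statement using \eqref{eq:GDD_param_rel}. The only cosmetic difference is that you carry the normalisation as $1/(bk^2)$ with $P_Z(z)$ in the denominator, whereas the paper uses $1/(kr)$ with $w_z^Tj$; these are identical via $P_Z(z)=\frac{1}{v}w_z^Tj$ and $bk=vr$.
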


\begin{proof}
    As in \eqref{eq:WT_Z_unif}, one shows that $P_Z=P_{\mc X}W$. Since also
    \[
        P_{Z\vert S}(z\vert s)=\frac{(w_z^TN)(s)}{k},
    \]
    we can apply \eqref{eq:GDD_inc_matr} and obtain
    \begin{align*}
        \exp\bigl(D_2(P_{Z\vert S}\Vert P_Z\vert P_{\mc S})\bigr)
        &=\frac{v}{bk^2}\sum_{s\in\mc S}\sum_{z\in\mc Z}\frac{(w_z^TN)(s)^2}{w_z^Tj}\\
        &=\frac{1}{kr}\sum_{z\in\mc Z}\frac{w_z^TNN^Tw_z}{w_z^Tj}\\
        &=\frac{r-\lambda_1}{kr}\sum_{z\in\mc Z}\frac{w_z^Tw_z}{w_z^Tj}+\frac{\lambda_1-\lambda_2}{kr}\sum_{z\in\mc Z}\frac{w_z^TCw_z}{w_z^Tj}+\frac{\lambda_2}{kr}\sum_{z\in\mc Z}w_z^Tj.
    \end{align*}
    Now, observe that 
    \[
        \sum_{z\in\mc Z}\frac{w_z^Tw_z}{w_z^Tj}
        =\frac{1}{v}\sum_{x\in\mc X}\sum_{z\in\mc Z}\frac{w(z\vert x)^2}{(P_{\mc X}W)(z)}
        =\exp\bigl(D_2(W\Vert P_{\mc X}W\vert P_{\mc X})\bigr).
    \]
    In the second summand, we have
    \[
        \sum_{z\in\mc Z}\frac{w_z^TCw_z}{w_z^Tj}
        =\frac{u}{m}\sum_{i=1}^m\sum_{z\in\mc Z}\frac{\left(u^{-1}\sum_{x\in\mc X_i}w(z\vert x)\right)^2}{(P_{\mc X}W)(z)}
        =u\exp\bigl(D_2(R_\Pi W\Vert P_{\mc X}W\vert P_\Pi)\bigr).
    \]
    For the third summand, we observe that $\sum_zw_z^Tj=v$ and 
    \begin{equation}\label{eq:GDD_param_rel_angew}
        \lambda_2v=kr-(r-\lambda_1)-(\lambda_1-\lambda_2)u,
    \end{equation}
    which follows from \eqref{eq:GDD_param_rel}. Inserting all this above yields the claimed equality.
\end{proof}

Turning to the proof of Theorem \ref{thm:WT_TV_sec}, we first state the following simple analog of Lemma \ref{lem:WT_muti_prep}.

\begin{lem}[\cite{Cs_almost_indep}, Lemma 2]
    \begin{equation}\label{eq:WT_TV_prep}
        \lVert P_{ZSA}-P_{ZS}P_A\rVert\leq 2\lVert P_{ZSA}-P_ZP_{\mc S}P_A\rVert.
    \end{equation}
\end{lem}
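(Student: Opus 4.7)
The plan is to insert $P_Z P_{\mc S} P_A$ as an intermediate reference via the triangle inequality, and then exploit the contraction of total variation distance under marginalization to absorb the correction term. Concretely, I would start from
\[
\lVert P_{ZSA}-P_{ZS}P_A\rVert
\leq\lVert P_{ZSA}-P_ZP_{\mc S}P_A\rVert
+\lVert P_ZP_{\mc S}P_A-P_{ZS}P_A\rVert.
\]
The first summand is exactly the quantity appearing on the right of the asserted inequality, so everything reduces to bounding the second summand by the same quantity.

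Next I would factor $P_A$ out of the second summand. Since $\sum_\alpha P_A(\alpha)=1$,
\[
\lVert P_ZP_{\mc S}P_A-P_{ZS}P_A\rVert
=\sum_{z,s}\bigl|P_Z(z)P_{\mc S}(s)-P_{ZS}(z,s)\bigr|\sum_\alpha P_A(\alpha)
=\lVert P_ZP_{\mc S}-P_{ZS}\rVert.
\]
Thus the task is reduced to showing
\[
\lVert P_ZP_{\mc S}-P_{ZS}\rVert\leq\lVert P_{ZSA}-P_ZP_{\mc S}P_A\rVert.
\]

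This last bound is the monotonicity of $\ell_1$ distance under marginalization. Noting that $P_{ZS}$ is the $\mc A$-marginal of $P_{ZSA}$ and that $P_ZP_{\mc S}$ is the $\mc A$-marginal of $P_ZP_{\mc S}P_A$, pulling the sum over $\alpha$ outside the absolute value gives
\[
\lVert P_ZP_{\mc S}-P_{ZS}\rVert
=\sum_{z,s}\Bigl|\sum_\alpha\bigl(P_ZP_{\mc S}P_A-P_{ZSA}\bigr)(z,s,\alpha)\Bigr|
\leq\lVert P_{ZSA}-P_ZP_{\mc S}P_A\rVert,
\]
and combining this with the triangle-inequality step above yields the factor of $2$. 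There is no real obstacle here: the argument is an elementary three-line manipulation, and the only nontrivial ingredient is the contraction of total variation under a marginal projection, which itself is immediate from pulling the sum inside the absolute value.
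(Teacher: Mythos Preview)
Your argument is correct: triangle inequality, factor out $P_A$, then contract under the $\mc A$-marginal. The paper does not give its own proof of this lemma; it merely quotes it as Lemma~2 of Csisz\'ar's paper \cite{Cs_almost_indep}, so there is nothing to compare against beyond noting that your three-line argument is the standard one.
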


By \eqref{eq:WT_Z_unif}, $P_Z(z)=0$ only if $z$ is not reachable with positive probability from any input of $W$. Thus 
\[
	P_{ZS\vert A=\alpha}(\{(z,s):P_Z(z)P_{\mc S}(s)=0\})=0.
\]
Hence one can apply \eqref{eq:TV_vs_chi2} and \eqref{eq:chi2_vs_D2} to upper-bound the right-hand side of \eqref{eq:WT_TV_prep} by
\begin{align*}
    2\max_{\alpha\in\mc A}\lVert P_{ZS\vert A=\alpha}-P_ZP_{\mc S}\rVert
    &\leq2\max_{\alpha\in\mc A}\sqrt{\chi^2( P_{ZS\vert A=\alpha}-P_ZP_{\mc S})}\\
    &=2\max_{\alpha\in\mc A}\sqrt{\exp\bigl(D_2(P_{ZS\vert A=\alpha}\Vert P_ZP_{\mc S})\bigr)-1}\\
    &=2\max_{\alpha\in\mc A}\sqrt{\exp\bigl(D_2(P_{Z\vert S,A=\alpha}\Vert P_Z\vert P_{\mc S})\bigr)-1}.
\end{align*}
Theorem \ref{thm:WT_TV_sec} now follows from Proposition \ref{prop:WT_muti_per_des}.

\subsection{Proof of Theorems \ref{thm:PA_TV_sec} and \ref{thm:PA_muti_sec}}

We start by proving Theorem \ref{thm:PA_muti_sec}. Define the $\mbb R^{\mc X}$-vector $p_z$ by
\[
    p_z(x)=P_{XZ}(x,z).
\]
From \eqref{eq:PA_joint_distr}, Lemma \ref{lem:A_unif} and \eqref{eq:bireg}, it follows that
\begin{equation}\label{eq:Eve_given_key}
    P_{Z\vert A}(z\vert\alpha)
    =a\sum_{x,s}P_{XZSA}(x,z,s,\alpha)
    =\frac{a}{b}p_z^TN_\alpha j
    =\frac{ar}{b}p_z^Tj
    =p_z^Tj.
\end{equation}
In particular, $Z$ is independent of $A$. (Of course, since the eavesdropper also knows $S$, this is not yet enough to guarantee security.) A straightforward computation gives
\begin{equation}\label{eq:PA_restr_z}
    D(P_{ZS\vert A=\alpha}\Vert P_ZP_{\mc S})
    \leq\max_{z\in\mc Z}D(P_{S\vert Z=z,A=\alpha}\Vert P_{\mc S}).
\end{equation}
As in the wiretap case, one passes to R\'enyi 2-divergence, and so it remains to bound
\[
    D_2(P_{S\vert Z=z,A=\alpha}\Vert P_{\mc S}),
\]
uniformly in $z$ and $\alpha$. 

We compute $P_{S\vert Z=z,A=\alpha}$ as follows. Recall the assumption that $P_Z(z)>0$ for all $z\in\mc Z$. Let $\alpha\in\mc A$ and $s\in\mc S$. The uniform distribution of $A$ and \eqref{eq:Eve_given_key} imply that $P_{ZA}(z,\alpha)=a^{-1}p_z^Tj$. Hence, again applying \eqref{eq:bireg},
\begin{align}\label{eq:PA_cond_S_distr}
    P_{S\vert Z=z,A=\alpha}(s)
    &=\frac{a\sum_xP_{XZSA}(x,z,s,\alpha)}{p_z^Tj}\notag\\
    &=\frac{a}{bp_z^Tj}\sum_xP_{XZ}(x,z)N_\alpha(x,s)
    =\frac{(p_z^TN_\alpha)(s)}{rp_z^Tj}.
\end{align}
This only depends on the incidence matrix $N_\alpha$, and so as in the wiretap case, we can reduce the proof of Theorem \ref{thm:PA_muti_sec} to a proposition which holds for GDDs without any reference to mosaics.

\begin{prop}\label{prop:PA_muti_per_des}
    Let $N$ be the incidence matrix of a $(u,m,k,\lambda_1,\lambda_2)$ GDD with point set $\mc X$, block index set $\mc S$ and point class partition $\Pi$. Let $\mc Z$ be a finite set and define the random variables $X,Z,S$ on $\mc X,\mc Z,\mc S$, respectively, by their joint distribution
    \begin{equation}\label{eq:PA_single_des_cond_distr}
        P_{XZS}(x,z,s)=\frac{1}{r}P_{XZ}(x,z)N(x,s).
    \end{equation}
    Then
    \begin{align*}
        \exp\bigl(D_2(P_{S\vert Z=z}\Vert P_{\mc S})\bigr)
        &=\frac{v(r-\lambda_1)}{kr}2^{-H_2(X\vert Z=z)}+\frac{v(\lambda_1-\lambda_2)}{kr}2^{-H_2(X_\Pi\vert Z=z)}\\
        &\qquad+\left(1-\frac{(r-\lambda_1)+(\lambda_1-\lambda_2)u}{kr}\right).
    \end{align*}
\end{prop}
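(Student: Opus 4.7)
My plan is to mirror the strategy used for Proposition \ref{prop:WT_muti_per_des}: express $P_{S\vert Z=z}$ as a vector quantity depending only on $N$ and $p_z$, expand the R\'enyi $2$-divergence via the identity \eqref{eq:GDD_inc_matr}, and then recognise the three resulting summands as the entropy terms and the constant appearing in the claim.

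First, I would compute $P_{S\vert Z=z}$ by marginalising \eqref{eq:PA_single_des_cond_distr}. Since $\sum_s N(x,s)=r$, the marginal $P_Z(z)=p_z^T j$, and
\[
	P_{S\vert Z=z}(s)=\frac{(p_z^T N)(s)}{r\,p_z^T j}.
\]
(This is formally identical to \eqref{eq:PA_cond_S_distr} with $N_\alpha$ replaced by $N$.) Since $P_{\mc S}(s)=1/b$ and $b/r^2=v/(kr^2)\cdot k=v/(kr)$ because $bk=vr$, the R\'enyi $2$-divergence becomes
\[
	\exp\bigl(D_2(P_{S\vert Z=z}\Vert P_{\mc S})\bigr)
	=b\sum_s P_{S\vert Z=z}(s)^2
	=\frac{v}{kr}\cdot\frac{p_z^T NN^T p_z}{(p_z^T j)^2}.
\]

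Next, I would substitute the GDD identity \eqref{eq:GDD_inc_matr}, namely $NN^T=(r-\lambda_1)I+(\lambda_1-\lambda_2)C+\lambda_2 J$, to split the expression into three terms. The coefficient of $\lambda_2$ produces $\frac{v\lambda_2}{kr}\cdot\frac{(p_z^T j)^2}{(p_z^T j)^2}=\frac{v\lambda_2}{kr}$, which equals $1-\tfrac{(r-\lambda_1)+(\lambda_1-\lambda_2)u}{kr}$ by \eqref{eq:GDD_param_rel_angew}. The other two terms require identifying the quadratic forms $p_z^T p_z$ and $p_z^T C p_z$ with R\'enyi $2$-entropies. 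Using $P_{X\vert Z=z}(x)=p_z(x)/p_z^T j$, one has
\[
	\frac{p_z^T p_z}{(p_z^T j)^2}=\sum_x P_{X\vert Z=z}(x)^2=2^{-H_2(X\vert Z=z)}.
\]
Since $C$ is block diagonal with an all-ones block on each point class $\mc X_i$, the form $p_z^T C p_z$ equals $\sum_i\bigl(\sum_{x\in\mc X_i}p_z(x)\bigr)^2$, and the definition $P_{X_\Pi\vert Z=z}(i)=\sum_{x\in\mc X_i}P_{X\vert Z=z}(x)$ immediately gives
\[
	\frac{p_z^T C p_z}{(p_z^T j)^2}=\sum_i P_{X_\Pi\vert Z=z}(i)^2=2^{-H_2(X_\Pi\vert Z=z)}.
\]
Assembling the three contributions yields the claimed equality.

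No real obstacle is anticipated; the proof is essentially a bookkeeping exercise parallel to Proposition \ref{prop:WT_muti_per_des}, with the roles of the channel column $w_z$ and the output marginal $P_{\mc X}W$ replaced by the joint probability vector $p_z$ and the input marginal $P_Z$. The only small subtlety worth verifying carefully is the identification $p_z^T C p_z=(p_z^T j)^2\cdot 2^{-H_2(X_\Pi\vert Z=z)}$, since it requires the block-diagonal form of $C$ associated with the partition $\Pi$; this is the direct analogue of the computation $\sum_z\frac{w_z^T C w_z}{w_z^T j}=u\exp\bigl(D_2(R_\Pi W\Vert P_{\mc X}W\vert P_\Pi)\bigr)$ carried out there.
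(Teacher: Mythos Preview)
Your proposal is correct and follows essentially the same route as the paper: compute $P_{S\vert Z=z}(s)=(p_z^TN)(s)/(r\,p_z^Tj)$, write $\exp\bigl(D_2(P_{S\vert Z=z}\Vert P_{\mc S})\bigr)=\frac{v}{kr}\cdot\frac{p_z^TNN^Tp_z}{(p_z^Tj)^2}$, expand via \eqref{eq:GDD_inc_matr}, identify the quadratic forms with $2^{-H_2(X\vert Z=z)}$ and $2^{-H_2(X_\Pi\vert Z=z)}$, and rewrite $\frac{v\lambda_2}{kr}$ using \eqref{eq:GDD_param_rel_angew}. The only cosmetic difference is that the paper does not spell out the identification of $p_z^TCp_z/(p_z^Tj)^2$ with $2^{-H_2(X_\Pi\vert Z=z)}$ in as much detail as you do.
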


\begin{proof}
As in \eqref{eq:PA_cond_S_distr}, it holds that $P_{S\vert Z}(s\vert z)=(p_z^TN)(s)/rp_z^Tj$. Using \eqref{eq:GDD_inc_matr}, one obtains
\begin{align*}
    &\exp\bigl(D_2(P_{S\vert Z=z}\Vert P_{\mc S})\bigr)\\
    &=\sum_s\frac{b(p_z^TN)(s)^2}{r^2(p_z^Tj)^2}\\
    &=\frac{vp_z^TN N^Tp_z}{kr(p_z^Tj)^2}\\
    &=\frac{v}{kr(p_z^Tj)^2}\left((r-\lambda_1)p_z^Tp_z+(\lambda_1-\lambda_2)p_z^TCp_z+\lambda_2(p_z^Tj)^2\right)\\
    &=\frac{v(r-\lambda_1)}{kr}2^{-H_2(X\vert Z=z)}+\frac{v(\lambda_1-\lambda_2)}{kr}2^{-H_2(X_\Pi\vert Z=z)}+\frac{v\lambda_2}{kr}.
\end{align*}
    The proof is complete upon replacing the last summand using \eqref{eq:GDD_param_rel_angew}.
\end{proof}

This completes the proof of Theorem \ref{thm:PA_muti_sec}. 

In order to prove Theorem \ref{thm:PA_TV_sec}, we can appeal to the case where security is measured using divergence, just like in the wiretap case. It is a straightforward computation to show that
\begin{align*}
    \lVert P_{ZS\vert A=\alpha}-P_ZP_{\mc S}\rVert
    &\leq\max_z\lVert P_{S\vert Z=z,A=\alpha}-P_{\mc S}\rVert
\end{align*}
for all $\alpha\in\mc A$. Using \eqref{eq:TV_vs_chi2} and \eqref{eq:chi2_vs_D2}, we see that Theorem \ref{thm:PA_TV_sec} follows from Proposition \ref{prop:PA_muti_per_des}.

\begin{rem}\label{rem:generalization}
    Let $N$ be the incidence matrix of a $(v,k,r)$ tactical decomposition for which there exist nonnegative numbers $c$ and $d$ such that
    \begin{equation}\label{eq:gen_inc_matr_ub}
        w^TNN^Tw\leq cw^Tw+d(w^Tj)^2
    \end{equation}
    for all nonnegative vectors $w$. Propositions \ref{prop:WT_muti_per_des} and \ref{prop:PA_muti_per_des} can be generalized for such matrices, with an inequality instead of an equality.
    
    Let $W:\mc X\to\mc Z$ be a wiretap channel and define the random variables $Z,X,S$ on $\mc Z\times\mc X\times\mc S$ as in \eqref{eq:WT_single_des_joint_distr}. Proceeding as in the proof of Proposition \ref{prop:WT_muti_per_des}, one can show that
    \[
        \exp\bigl(D_2(P_{Z\vert S}\Vert P_Z\vert P_{\mc S})\bigr)
        \leq\frac{dv}{kr}+\frac{c}{kr}\exp\bigl(D_2(W\Vert P_{\mc X}W\vert P_{\mc X})\bigr).
    \]
    Similarly, in privacy amplification with source distribution $P_{XZ}$ and with the seed jointly distributed with $Z$ according to \eqref{eq:PA_single_des_cond_distr}, one obtains
    \[
        \exp\bigl(D_2(P_{S\vert Z=z}\Vert P_{\mc S})\bigr)
        \leq\frac{ac}{r}2^{-H_2(X\vert Z=z)}+\frac{ad}{r},
    \]
    proceeding as in the proof of Proposition \ref{prop:PA_muti_per_des}.  
    
    For example, if $NN^T$ has largest eigenvalue $\mu_1$ and second-largest eigenvalue $\mu_2$, then 
    \[
        w^TNN^Tw\leq\mu_2w^Tw+\frac{\mu_1-\mu_2}{v}(w^Tj)^2.
    \]
    Mosaics of such matrices were studied in the wiretap scenario in \cite{BRI}.
    
    Another example of a matrix satisfying \eqref{eq:gen_inc_matr_ub} arises from the incidence matrix of a $(u,m,k,\lambda_1,\lambda_2)$ GDD, see \eqref{eq:GDD_inc_ub}.
    
    Theorems \ref{thm:WT_muti_sec}, \ref{thm:WT_TV_sec}, \ref{thm:PA_TV_sec} and \ref{thm:PA_muti_sec} can also be generalized to mosaics of tactical configurations whose incidence matrices satisfy \eqref{eq:gen_inc_matr_ub}, since the reduction of the theorems to Propositions \ref{prop:WT_muti_per_des} and \ref{prop:PA_muti_per_des} only used that the security functions are functional forms of mosaics of tactical configurations. 
\end{rem}

\section{Explicitness of Denniston's BIBD}\label{sect:maxarc_proof}

Let $t\geq 2$ and $1\leq\ell\leq t$. Set $q=2^t$. Recall that Denniston's design $D$, defined in Section \ref{subsect:mos_exs}, has the point set
\[
    \mc X=\{(x,y)\in\mbb F_q^2:Q(x,y)\in H\},
\]
where $Q$ is an irreducible quadratic form and $H$ a subgroup of $\mbb F_q$ of order $\ell$. We will consider $\mbb F_q$ as a $t$-dimensional vector space over $\mbb F_2$, which makes $H$ an $\ell$-dimensional subspace of $\mbb F_q$. The blocks of $D$ are given by the nontrivial intersections of lines of $AG(2,q)$ with $\mc X$. 

\begin{prop}\label{prop:Denniston_explicit}
	There exists an $H$ such that $D$ is explicit.
\end{prop}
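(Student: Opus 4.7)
The plan is to exhibit an explicit $H$ together with an isomorphic realization $\tilde D=([v],[r]\times[a],\tilde I)$ of $D$, for which the incidence predicates (D1) and (D2) reduce to $\poly(t)$-time $\mbb F_2$-linear algebra in $\mbb F_q$. I would represent $\mbb F_q=\mbb F_{2^t}$ in a polynomial basis $\{1,\vartheta,\ldots,\vartheta^{t-1}\}$, so that addition, multiplication, inversion, Frobenius and trace all run in $\poly(t)$. Locate $\xi\in\mbb F_q$ with $\Tr(\xi)=1$ (found in $O(t)$ basis-element trials, since $\Tr$ is $\mbb F_2$-linear and surjective) and set $Q(x,y)=y^2+xy+\xi x^2$, which is irreducible. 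Choose $H=\mathrm{span}_{\mbb F_2}(1,\vartheta,\ldots,\vartheta^{\ell-1})$, and precompute a basis of the trace-dual $H^\perp=\{\gamma:\Tr(\gamma h)=0\ \forall h\in H\}$ by one Gaussian elimination.

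The geometric key is the identity $Q(x,mx+b)=A_m x^2+bx+b^2$ with $A_m=m^2+m+\xi\neq 0$. Using $\Tr(z^2)=\Tr(z)$ (which holds in characteristic $2$), the image of this polynomial as $x$ ranges over $\mbb F_q$ is, for $b\neq 0$, the affine hyperplane $\{z:\Tr((A_m/b^2)z)=1\}$, and is all of $\mbb F_q$ for $b=0$. Hence the line $y=mx+b$ meets $\mc X$ iff $b=0$ or $A_m/b^2\notin H^\perp$, equivalently iff $b=0$ or $1/b\notin M_m:=\sqrt{A_m^{-1}H^\perp}$. The $(t-\ell)$-dimensional $\mbb F_2$-subspace $M_m$ is obtained from $H^\perp$ by one field inversion and one square root (inverse Frobenius), in $\poly(t)$ time; the vertical direction $m=\infty$ is handled symmetrically with $A_\infty=1$. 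This is the pivotal step of the argument: after the change of variable $\beta=1/b$, the set of offsets giving lines disjoint from $\mc X$ becomes $M_m\setminus\{0\}$, turning an a priori multiplicative condition on offsets into a linear one.

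With this linearization in hand the indexing becomes routine. For each direction, extend a basis of $M_m$ to a basis $(e_1,\ldots,e_t)$ of $\mbb F_q$ by Gaussian elimination and store the change-of-basis matrix. Assign index $1\in[a]$ to $b=0$; for the remaining $a-1=(2^\ell-1)2^{t-\ell}$ good offsets, decode $\beta=1/b=\sum_i c_ie_i$ into a pair $\bigl((c_{t-\ell+1},\ldots,c_t),(c_1,\ldots,c_{t-\ell})\bigr)\in(\{0,1\}^\ell\setminus\{0\})\times\{0,1\}^{t-\ell}$ (noting that $\beta\notin M_m$ iff the first factor is nonzero), and rank it lexicographically into $\{2,\ldots,a\}$. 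The resulting bijection $[a]\leftrightarrow B_m$ is $\poly(t)$ in both directions. For the point indexing $[v]\leftrightarrow\mc X$, take parallel class $m=0$ as reference: $\mc X=\bigsqcup_{\alpha\in[a]}(L_{0,\alpha}\cap\mc X)$ yields $[v]\cong[a]\times[k]$, and within each block an $\mbb F_2$-affine basis $(x_0,v_1,\ldots,v_\ell)$ of $L_{0,\alpha}\cap\mc X\subseteq\mbb F_q$ enumerates the $k=2^\ell$ points by binary expansion, all constructible in $\poly(t)$.

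Given this, (D1) is realized by the chain $\tilde x\mapsto(x,y)$, $(x,y)\mapsto b=y+\imath x$ (or $b=x$ for $\imath=\infty$), $b\mapsto\tilde\alpha$; (D2) is the reverse chain $\tilde\alpha\mapsto b$, $b\mapsto$ affine basis of $L_{\imath,b}\cap\mc X$, then read off the $\kappa$-th point (encoding into $[v]$ via the reference indexing when required). Every step is $\poly(t)=\poly(\log v,\log b)$. The main obstacle throughout is precisely the linearization of the bad-offset set via $\beta=1/b$; without it one would have to enumerate the $2^{t-\ell}$ elements of $M_m$ per direction, which is not polylogarithmic when $\ell$ is bounded and $t$ grows.
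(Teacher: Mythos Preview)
Your proposal is correct and follows the same overall strategy as the paper: represent $\mbb F_q$ in a polynomial basis, take $H=\mathrm{span}_{\mbb F_2}(1,\vartheta,\ldots,\vartheta^{\ell-1})$, and use the trace criterion to characterize which intercepts $b$ give lines meeting $\mc X$. Your condition $A_m/b^2\notin H^\perp$ is exactly the paper's Lemma~\ref{lem:Denniston_S} specialized to $Q(x,y)=y^2+xy+\xi x^2$, and your linearization via $\beta=1/b$ (turning the bad-offset set into a linear subspace $M_m$) is the same device the paper uses with its dual basis $Z$ of $\Theta$.

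The genuine difference is in (D2). The paper enumerates the points of a block $L_{c,d}\cap\mc X$ by first parametrizing them by the line through the origin on which each one lies: it determines the set $\mc R_{c,d}$ of admissible slopes via a family of quadratic equations indexed by the affine hyperplane $H_{c,d}\subset H$ (their Lemma~\ref{lem:R_c_d}), which entails a three-way case split and per-point root extraction of $w^2+w=\gamma$. You instead observe that $x\mapsto A_m x^2+bx+b^2$ is $\mbb F_2$-\emph{affine} (Frobenius is $\mbb F_2$-linear), so the block projects to an $\ell$-dimensional $\mbb F_2$-affine subspace of $\mbb F_q$ in the $x$-coordinate, and an affine basis is found by one Gaussian elimination. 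This is cleaner: it avoids the case analysis on $\eta_3 d^2\in H$ and replaces quadratic solving by linear algebra. The paper's route, on the other hand, makes the combinatorial structure of $\mc R_{c,d}$ explicit, which is of independent interest. Both yield $\poly(t)$ per query. One small point: when you write ``store the change-of-basis matrix'' for each direction, make clear this is computed on demand rather than precomputed for all $q+1$ directions; the latter would not be polylogarithmic.
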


The proof of this proposition will be done in the subsections following below. We first observe that the proposition implies that the mosaic $M^{(2)}_{t,\ell,H}$ whose members are isomorphic to $D$ is explicit. This follows from Theorem \ref{thm:comp_compl} together with the efficiency of addition and subtraction on the cyclic group $\mbb Z_a$, which serves as the color set for the mosaic.

\subsection{Characterization of $\mc X$ and $\mc S$}

Denote by $L_{c,d}=\{(x,cx+d):x\in\mbb F_q\}$ the line in $AG(2,q)$ with \textit{slope} $c\in\mbb F_q$ and \textit{intercept} $d\in\mbb F_q$. This are all lines of $AG(2,q)$ except the ``vertical'' ones with infinite slope, given by $L_{\infty,d}=\{(d,y):y\in\mbb F_q\}$, for any $d\in\mbb F_q$. For these lines, we call $d$ the \textit{intercept}.

For the characterization of $\mc X$ and $\mc S$, we choose $H$ arbitrary. Note that $0\in\mc X$. Thus every line $L_{c,0}$ ($c\in\mbb F_q\cup\{\infty\}$) has nontrivial intersection $\mc X_c$ with $\mc X$. Since any two of these lines only meet in 0, the union of all these $\mc X_c$ has precisely 
\[
    v=1+(2^t+1)(2^\ell-1)=2^{t+\ell}+2^\ell-2^t
\]
elements, and so $\mc X$ must equal the union of all $\mc X_c$ by \eqref{eq:Denniston_card_X}. Now assume $c\in\mbb F_q$. An element $(x,cx)$ of $L_{c,0}$ is contained in $\mc X_c$ if and only if
\[
    x^2(\eta_1+\eta_2c+\eta_3c^2)\in H,
\]
or equivalently, $x^2\in(\eta_1+\eta_2c+\eta_3c^2)^{-1}H$ (the irreducibility of $Q$ ensures that $\eta_1+\eta_2c+\eta_3c^2$ is nonzero). In an analogous way one sees that $(0,y)\in\mc X_\infty$ if and only if $y^2\in\eta_3^{-1}H$. 

\begin{lem}\label{lem:Denniston_X}
    The set $\mc X$ is given by the disjoint union
    \[
        \{(0,0)\}\cup\bigcup_{c\in\mbb F_q}\left\{(x,cx):x\neq 0,x^2\in\frac{1}{\eta_1+\eta_2c+\eta_3c^2}H\right\}\cup\left\{(0,y):y\neq 0,y^2\in\frac{1}{\eta_3}H\right\}.
    \]
\end{lem}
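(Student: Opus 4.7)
The plan is to slice $\mc X$ using the pencil of $q+1$ lines through the origin in $AG(2,q)$, show that this pencil already exhausts $\mc X$ by counting, and then compute each slice by plugging into $Q$.

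First I would check that $(0,0)\in\mc X$. This is immediate because $H$ is an additive subgroup (in fact an $\mbb F_2$-subspace) of $\mbb F_q$, so $Q(0,0)=0\in H$. Consequently every line of $AG(2,q)$ passing through the origin contains a point of $\mc X$. By Denniston's two-intersection property (recalled in Section \ref{subsect:mos_exs}), every line of $AG(2,q)$ meets $\mc X$ in either $0$ or $2^\ell$ points, so each of the $q+1$ lines through the origin contributes exactly $2^\ell-1$ points to $\mc X$ besides $(0,0)$ itself.

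Second, I would exploit the fact that the $q+1$ lines through the origin, namely $L_{c,0}$ for $c\in\mbb F_q$ together with $L_{\infty,0}=\{(0,y):y\in\mbb F_q\}$, are pairwise disjoint away from the origin. Hence the union of their intersections with $\mc X$ has cardinality $1+(q+1)(2^\ell-1)$, which by \eqref{eq:Denniston_card_X} equals $v=\lvert\mc X\rvert$. Therefore $\mc X$ coincides with this union, and the union is disjoint off the origin, which yields the disjoint decomposition claimed in the lemma.

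Third, I would compute each slice by direct substitution. For $c\in\mbb F_q$, a point $(x,cx)$ lies in $\mc X$ iff $Q(x,cx)=(\eta_1+\eta_2c+\eta_3c^2)x^2\in H$. For $L_{\infty,0}$, a point $(0,y)$ lies in $\mc X$ iff $Q(0,y)=\eta_3 y^2\in H$. Since $H$ is a subspace of $\mbb F_q$, in the first case we may divide by $\eta_1+\eta_2c+\eta_3c^2$ provided this scalar is nonzero, yielding $x^2\in(\eta_1+\eta_2c+\eta_3c^2)^{-1}H$. Similarly the case $c=\infty$ gives $y^2\in\eta_3^{-1}H$.

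The only mildly non-routine step is verifying that $\eta_1+\eta_2c+\eta_3c^2\neq 0$ for every $c\in\mbb F_q$. This is precisely the irreducibility hypothesis on $Q$: since $Q$ is a homogeneous irreducible quadratic in two variables, the dehomogenized polynomial $\eta_3T^2+\eta_2T+\eta_1$ has no root in $\mbb F_q$ (a root would furnish a linear factor of $Q$). One also needs $\eta_3\neq 0$ to handle $c=\infty$, which again follows from irreducibility (otherwise $y$ would divide $Q$). These observations complete the verification.
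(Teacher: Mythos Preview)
Your argument is correct and is essentially identical to the paper's own proof preceding Lemma~\ref{lem:Denniston_X}: both slice $\mc X$ by the pencil of lines through the origin, invoke Denniston's two-intersection property together with the cardinality formula \eqref{eq:Denniston_card_X} to see that this pencil exhausts $\mc X$, and then compute each slice by substituting into $Q$, using irreducibility to guarantee the scalars $\eta_1+\eta_2c+\eta_3c^2$ and $\eta_3$ are nonzero. Your write-up is in fact slightly more explicit about the justification for $\eta_3\neq 0$, which the paper leaves implicit.
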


Next we turn to $\mc S$. We already noted in Section \ref{sect:mosaics} that the parallel classes of $D$ are in one-to-one correspondence with those of $AG(2,q)$, i.e., with the slopes from $\mbb F_q\cup\{\infty\}$. 

For the description of the elements of a parallel class, we need the \textit{(absolute) trace} of an element $x$ of $\mbb F_q$ defined by
\[
    \Tr(x)=x+x^2+\cdots+x^{2^{t-1}}.
\]
The trace is an $\mbb F_2$-linear form from $\mbb F_q$ onto $\mbb F_2$. Every linear form $\xi$ from $\mbb F_q$ to $\mbb F_2$ corresponds to a unique element $\beta\in\mbb F_q$ such that $\xi(x)=\Tr(\beta x)$ for all $x\in\mbb F_q$ (see \cite[Theorem 2.23]{LN_finite_fields}). We denote by $H^\perp$ the $(t-\ell)$-dimensional subspace of $\mbb F_q$ consisting of those elements whose corresponding linear form vanishes  on $H$.

We will also use the following facts on polynomials. The first one is \cite[Theorem 2.25]{LN_finite_fields}, the second one is elementary.

\begin{fact}\label{fact:poly_fact}
    \begin{enumerate}
        \item The polynomial $x^2+x+\alpha$, with $\alpha\in\mbb F_q$, has a root in $\mbb F_q$ if and only if $\Tr(\alpha)=0$.
        \item Let $F(x)=\alpha x^2+\beta x+\gamma$ be a polynomial over $\mbb F_q$. Then $F(c)=0$ if and only if $\alpha c/\beta$ is a root of
        \[
            x^2+x+\frac{\alpha\gamma}{\beta^2}.
        \]
    \end{enumerate}
\end{fact}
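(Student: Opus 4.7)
The plan for part (1) is to give the standard Artin--Schreier argument, since the result is just cited from \cite{LN_finite_fields}. Consider the map $\phi:\mbb F_q\to\mbb F_q$ given by $\phi(x)=x^2+x$. In characteristic 2, $\phi$ is $\mbb F_2$-linear, and its kernel is $\{x:x^2=x\}=\mbb F_2$, so by rank-nullity $\Image(\phi)$ is an $\mbb F_2$-subspace of $\mbb F_q$ of dimension $t-1$. On the other hand $\Tr$ is $\mbb F_2$-linear and surjective onto $\mbb F_2$, so $\ker(\Tr)$ also has dimension $t-1$. Since raising to the square permutes the Galois conjugates of any $x\in\mbb F_q$, one has $\Tr(x^2)=\Tr(x)$, and hence $\Tr(\phi(x))=\Tr(x)+\Tr(x)=0$, giving $\Image(\phi)\subseteq\ker(\Tr)$. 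Equality of dimensions then forces $\Image(\phi)=\ker(\Tr)$. In characteristic 2 we have $-\alpha=\alpha$, so $x^2+x+\alpha=0$ has a root in $\mbb F_q$ if and only if $\alpha\in\Image(\phi)$, which is if and only if $\Tr(\alpha)=0$.

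For part (2), the plan is a one-line algebraic substitution under the implicit non-degeneracy assumption $\alpha,\beta\neq 0$ (otherwise $F$ is not genuinely quadratic and the change of variables is undefined). Multiplying $F(c)=\alpha c^2+\beta c+\gamma$ through by the nonzero scalar $\alpha/\beta^2$ yields
\[
    \frac{\alpha}{\beta^2}F(c)=\left(\frac{\alpha c}{\beta}\right)^2+\frac{\alpha c}{\beta}+\frac{\alpha\gamma}{\beta^2}.
\]
Since $\alpha/\beta^2\neq 0$, the left-hand side vanishes precisely when $F(c)=0$, and the right-hand side is exactly the polynomial $x^2+x+\alpha\gamma/\beta^2$ evaluated at $x=\alpha c/\beta$. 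This establishes the equivalence; the map $c\mapsto\alpha c/\beta$ being a bijection of $\mbb F_q$ makes both directions automatic.

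Neither step presents any genuine obstacle. The only point requiring brief justification is the identity $\Tr(x^2)=\Tr(x)$ underlying part (1), which follows immediately from the definition $\Tr(x)=\sum_{i=0}^{t-1}x^{2^i}$ and the cyclicity of the Galois group of $\mbb F_q/\mbb F_2$. The fact will subsequently serve as the algorithmic engine in Section \ref{sect:maxarc_proof}, where it reduces testing membership in $\mc X$ and enumerating the roots of the quadratic $\eta_1+\eta_2 c+\eta_3 c^2$ in $\mbb F_q$ to a trace computation and, when a root exists, the extraction of a root of $x^2+x+\alpha$ via its standard decomposition in a normal basis.
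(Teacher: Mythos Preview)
Your proof is correct. The paper does not actually prove this fact: it cites part (1) as \cite[Theorem 2.25]{LN_finite_fields} and declares part (2) ``elementary,'' so there is no proof in the paper to compare against. Your Artin--Schreier argument for (1) is the standard one underlying the cited reference, and your substitution for (2) is exactly the elementary verification the paper has in mind; the final paragraph of commentary is extraneous to the proof but harmless.
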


We have the following lemma.

\begin{lem}\label{lem:Denniston_S}
    For any $c\in\mbb F_q\cup\{\infty\}$, denote by $\mc U_c$ the set of those $d\in\mbb F_q$ for which $L_{c,d}$ has nonempty intersection with $\mc X$. If $c\in\mbb F_q$, then 
    \begin{equation}\label{eq:U_c_form}
        \mc U_c=\left\{d\neq 0:d^{-2}\notin\frac{\eta_2^2}{\eta_1+\eta_2c+\eta_3c^2}H^\perp\right\}\cup\{0\}.
    \end{equation}
    If $c=\infty$, then
    \[
        \mc U_c=\left\{d\neq 0:d^{-2}\notin\frac{\eta_2^2}{\eta_3}H^\perp\right\}\cup\{0\}.
    \]
\end{lem}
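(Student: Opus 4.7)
The plan is to reduce the question ``does $L_{c,d}$ meet $\mc X$?'' to a solvability question for a one-variable quadratic over $\mbb F_q$ and then apply Fact \ref{fact:poly_fact}. First I would treat $c\in\mbb F_q$ and substitute $y=cx+d$ into $Q(x,y)=\eta_1x^2+\eta_2xy+\eta_3y^2$. Using characteristic $2$ to kill the mixed $2\eta_3cdx$ term, one obtains
\[
Q(x,cx+d)=\alpha x^2+\eta_2 d\,x+\eta_3 d^2,\qquad \alpha:=\eta_1+\eta_2c+\eta_3c^2.
\]
Note $\alpha\neq 0$ by irreducibility of $Q$, and similarly $\eta_2\neq 0$ (else $\eta_1x^2+\eta_3y^2$ is a square in characteristic $2$). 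Therefore $d\in\mc U_c$ iff there exists $h\in H$ such that $\alpha x^2+\eta_2 d\,x+(\eta_3 d^2+h)=0$ has a root in $\mbb F_q$ (the case $d=0$ is covered immediately by taking $x=0$ and $h=0$, giving $0\in\mc U_c$).

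Next I would invoke Fact \ref{fact:poly_fact}(2) with $\beta=\eta_2 d\neq 0$: for fixed $d\neq 0$ and fixed $h\in H$ the quadratic has a root in $\mbb F_q$ iff $x^2+x+\alpha(\eta_3 d^2+h)/(\eta_2 d)^2$ has a root, which by Fact \ref{fact:poly_fact}(1) amounts to
\[
\Tr\!\left(\frac{\alpha\eta_3}{\eta_2^2}\right)+\Tr\!\left(\frac{\alpha h}{\eta_2^2 d^2}\right)=0.
\]
The key simplification is that the first trace equals $1$. Indeed, expanding $\alpha\eta_3/\eta_2^2=\eta_1\eta_3/\eta_2^2+\eta_3c/\eta_2+(\eta_3c/\eta_2)^2$ and using $\Tr(x^2)=\Tr(x)$ for $x\in\mbb F_q$, the last two summands contribute equal traces and cancel modulo $2$, leaving $\Tr(\eta_1\eta_3/\eta_2^2)$. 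This is precisely $1$ because $Q$ is irreducible: writing $Q(x,y)=\eta_3y^2\bigl(1+(\eta_2/\eta_3)(x/y)+(\eta_1/\eta_3)(x/y)^2\bigr)$ and again applying Fact \ref{fact:poly_fact}, irreducibility of $Q$ is equivalent to $\Tr(\eta_1\eta_3/\eta_2^2)=1$.

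With this in hand, the condition $d\in\mc U_c$ for $d\neq 0$ becomes: there exists $h\in H$ with $\Tr\bigl(\alpha h/(\eta_2^2 d^2)\bigr)=1$, i.e.\ the $\mbb F_2$-linear functional $h\mapsto\Tr\bigl(\alpha h/(\eta_2^2 d^2)\bigr)$ is not identically zero on $H$. By the definition of $H^\perp$ this means $\alpha/(\eta_2^2 d^2)\notin H^\perp$, or equivalently $d^{-2}\notin(\eta_2^2/\alpha)H^\perp$, giving \eqref{eq:U_c_form}. The case $c=\infty$ is handled in exactly the same way: substitute $x=d$ into $Q$, obtaining a quadratic in $y$ of the form $\eta_3 y^2+\eta_2 d\,y+(\eta_1 d^2+h)$, and apply Fact \ref{fact:poly_fact} to arrive at the condition $d^{-2}\notin(\eta_2^2/\eta_3)H^\perp$. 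The main obstacle I anticipate is the bookkeeping of which constant in $\mbb F_2$ the various traces evaluate to; in particular one must be careful to combine the characteristic-$2$ identity $\Tr(x^2)=\Tr(x)$ with the irreducibility criterion to arrive precisely at the ``non-trivial'' side of the trace-kernel dichotomy, since the opposite conclusion would characterise the complement of $\mc U_c$.
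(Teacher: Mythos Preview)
Your proposal is correct and follows essentially the same route as the paper: substitute the line into $Q$, reduce to the solvability of a quadratic via Fact~\ref{fact:poly_fact}, split the trace into the constant piece $\Tr(\eta_1\eta_3/\eta_2^2)=1$ (from irreducibility of $Q$) plus a piece linear in $h\in H$, and then read off the $H^\perp$ condition. The only cosmetic difference is that you spell out explicitly why $\eta_2\neq 0$ and why $\Tr(\eta_1\eta_3/\eta_2^2)=1$, whereas the paper cites these facts more tersely.
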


\begin{proof}
	We use Fact \ref{fact:poly_fact}. Let $c,d\in\mbb F_q$. For $L_{c,0}$ we already know that it has nonempty intersection with $\mc X$, so assume $d\neq 0$. Then $L_{c,d}$ has nonempty intersection with $\mc X$ if and only if the polynomial
    \[
        F(x)=(\eta_1+\eta_2c+\eta_3c^2)x^2+\eta_2dx+\eta_3d^2
    \]
    assumes a value in $H$ for some $x\in\mbb F_q$. By Fact \ref{fact:poly_fact}, this is the case if and only if there exists a $z\in H$ such that
    \[
        \Tr\left(\frac{(\eta_1+\eta_2 c+\eta_3c^2)(\eta_3 d^2+z)}{\eta_2^2d^2}\right)=0.
    \]
    The term inside the trace can be written as
    \[
        \frac{(\eta_1+\eta_2c+\eta_3c^2)z}{\eta_2^2d^2}+\frac{\eta_1\eta_3}{\eta_2^2}+\left(\frac{\eta_3 c}{\eta_2}+\frac{\eta_3^2 c^2}{\eta_2^2}\right).
    \]
    The sum inside the large brackets has trace zero since $\Tr(\alpha)+\Tr(\alpha^2)=0$ for all $\alpha\in\mbb F_q$. The trace of $(\eta_1\eta_3)/\eta_2^2$ equals 1 due to the irreducibility of $Q$. It follows that $z\in H$ satisfies $F(x)=z$ for some $x\in\mbb F_q$ if and only if 
    \[
        \Tr\left( \frac{(\eta_1+\eta_2c+\eta_3c^2)z}{\eta_2^2d^2} \right)=1.
    \]
	Hence a nonzero $d\in\mbb F_q$ is \textit{not} contained in $\mc U_c$ if and only if 
    \[
        \frac{(\eta_1+\eta_2c+\eta_3c^2)}{\eta_2^2d^2}\in H^\perp,
    \]
    which immediately shows \eqref{eq:U_c_form}. The proof for $c=\infty$ is analogous.
\end{proof}

\subsection{Property (D1)}

$D$ is explicit if it satisfies properties (D1) and (D2) formulated in Section \ref{subsect:gen_constr}. Here we show that it satisfies (D1) for suitable $H$.  Let $\Theta=\{1,\vartheta,\vartheta^2,\ldots,\vartheta^{t-1}\}$ be a polynomial basis of $\mbb F_q$. We take $H$ as the span of $1,\ldots,\vartheta^{\ell-1}$. Let $\Phi_H:[k]\to\mbb F_2^\ell$ be a bijection which in time $\poly(\log k)$ associates to every number from $[k]$ a unique element of $H$, represented in terms of $\Theta$, such that $\Phi_H(0)=0$.

Denote by $\Phi_{\mc R}:[q+1]\to\mbb F_2^t\cup\{\infty\}$ a $\poly(\log q)$ time  bijection between $[q+1]$ and the set of slopes $\mc R=\mbb F_q\cup\{\infty\}$, where $\Phi(\tilde\imath)$ for any $\tilde\imath\in[q]$ is the representation in the basis $\Theta$ of a unique element of $\mbb F_q$. 

Arithmetic operations in $\mbb F_q$ can be performed efficiently in $\Theta$, as well as the computation of the square root \cite[Corollary 7.1.2]{BachShallit_AlgNT}. Hence using $\Phi_{\mc R}$ and $\Phi_H$, one obtains a mapping $\Phi_{\mc X}:[v]\to\mbb F_2^t$ which to every element of $[v]$ associates the $\Theta$-representation of a unique element of $\mc X$ (see Lemma \ref{lem:Denniston_X}). This mapping is computable in time $\poly(\log v)$. 

To the basis $\Theta$ there exists a \textit{dual} basis $Z=\{\zeta_1,\ldots,\zeta_t\}$ satisfying
\[
    \Tr(\zeta_i\vartheta^j)=\delta_{ij}.
\]
$H^\perp$ is the span of $\{\zeta_{\ell},\ldots,\zeta_{t-1}\}$. Denote by $T$ the change-of-basis matrix representing every $\zeta_i$ in terms of $\Theta$. Then for any $c$, there exists a bijective mapping $\Phi_{\mc U_c}:[a]\to\mbb F_2^t$ which to any element of $[a]$ first associates the $Z$-representation of an element of $(\mbb F_q\setminus H^\perp)\cup\{0\}$, then changes the basis to $\Theta$ using $T$, and finally does the necessary arithmetic to obtain an element of $\mc U_c$. The values of this mapping can be computed in time $\poly(t)=\poly(\log a)$.

Now assume we are given numbers $\tilde x\in[v]$ and $\tilde\imath\in[q+1]$, corresponding to the point $(x,\tilde cx)\in\mc X$ and the parallel class $c\in\mbb F_q\cup\{\infty\}$ via $\Phi_{\mc X}$ and $\Phi_{\mc R}$. We want to find the intercept $d$ such that $(x,\tilde cx)\in L_{c,d}$. If $c\in\mbb F_q$, then $d=(c+\tilde c)x$. If $c=\infty$, then $d=x$. It is straightforward to do these computations in $\Theta$. The result is transformed to a number from $[a]$ via $\Phi_{\mc U_c}^{-1}$. The representation of $d$ in $[a]$ can be found from inputs $\tilde x$ and $\tilde\imath$ in $\poly(\log v)$ time.

\subsection{Property (D2)}

Let $(c,d)\in\mc S$ be given. We want to find the set $B_{c,d}$ of those elements of $\mc X$ which are incident with $(c,d)$ in $D$. For $d=0$, we have $L_{c,d}=\mc X_c\cup\{0\}$. Now we consider the case $d\neq 0$. Let 
\[
    \mc R_{c,d}=\{\tilde c\in\mc R:L_{c,d}\cap\mc X_{\tilde c}\neq\emptyset\}.
\]
Once we know the set $\mc R_{c,d}$, we can for every $\tilde c\in\mc R_{c,d}$ find the unique point at the intersection of $L_{c,d}$ and $\mc X_{\tilde c}$. If $\tilde c\in\mbb F_q$, this point has the form $(x,\tilde cx)$ for
\[
    x=\frac{d}{c+\tilde c}
\]
(clearly, $c\neq\tilde c$). If $\tilde c=\infty$, the point at the intersection of $L_{c,d}$ and $\mc X_\infty$ is given by $(0,d)$.

For $c\in\mbb F_q$, define the set
\[
    H_{c,d}=\left\{z\in H:\Tr\left(\frac{(\eta_1+\eta_2c+\eta_3c^2)z}{\eta_2^2d^2}\right)=1\right\}
\]
and, for every $z\in H_{c,d}$, the polynomial 
\[
    G_{c,d,z}(w)=w^2+w+\frac{(\eta_1d^2+c^2z)(z+\eta_3d^2)}{\eta_2^2d^4}.
\]
For $c=\infty$, we set 
\[
    H_{c,d}=\left\{z\in H:\Tr\left(\frac{\eta_3z}{\eta_2^2d^2}\right)=1\right\}
\]
and define, for all $z\in H_{c,d}$, the polynomial
\[
    G_{c,d,z}(w)=w^2+w+\frac{\eta_3d^2(\eta_1d^2+z)}{\eta_2^2d^4}.
\]
All $H_{c,d}$ are nonempty due to the proof of Lemma \ref{lem:Denniston_S}.

\begin{lem}\label{lem:R_c_d}
    If $c\in\mbb F_q$ and $\eta_3d^2\notin H$, then 
    \[
        \mc R_{c,d}=\left\{\frac{\eta_2d^2w}{z+\eta_3d^2}:w\text{ root of }G_{c,d,z},\;z\in H_{c,d}\right\}.
    \]
    If $c\in\mbb F_q$ and $\eta_3d^2\in H$, then
    \[
        \mc R_{c,d}=\left\{\frac{\eta_2d^2w}{z+\eta_3d^2}:w\text{ root of }G_{c,d,z},\;z\in H_{c,d}\setminus\{\eta_3d^2\}\right\}\cup\left\{\frac{\eta_1+\eta_3c^2}{\eta_2},\infty\right\}.
    \]
    If $c=\infty$, then 
    \[
        \mc R_{c,d}=\left\{\frac{\eta_2w}{\eta_3}:w\text{ root of }G_{c,d,z},\;z\in H_{c,d}\right\}.
    \]
\end{lem}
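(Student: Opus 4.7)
The plan is to unfold the definition of $\mc R_{c,d}$: a slope $\tilde c$ lies in $\mc R_{c,d}$ iff $L_{c,d}$ and $L_{\tilde c,0}$ meet at a point of $\mc X$, which by Lemma \ref{lem:Denniston_X} is an algebraic condition modulo $H$. I would handle the cases $c\in\mbb F_q$ and $c=\infty$ separately, and within the first case treat the subcases $\tilde c\in\mbb F_q$ and $\tilde c=\infty$ on their own. For $c\in\mbb F_q$ and $\tilde c\in\mbb F_q\setminus\{c\}$, the unique intersection point is $(x,\tilde c x)$ with $x=d/(c+\tilde c)$, so this point lies in $\mc X$ iff there exists $z\in H$ with $d^2(\eta_1+\eta_2\tilde c+\eta_3\tilde c^2)=(c+\tilde c)^2 z$. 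Using characteristic $2$, this rearranges to the quadratic in $\tilde c$
\[
(z+\eta_3 d^2)\tilde c^{\,2}+\eta_2 d^2\tilde c+(\eta_1 d^2+c^2 z)=0.
\]

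For $z\neq\eta_3 d^2$ I would apply Fact \ref{fact:poly_fact}(2) with $(\alpha,\beta,\gamma)=(z+\eta_3 d^2,\eta_2 d^2,\eta_1 d^2+c^2 z)$: then $\tilde c\in\mbb F_q$ is a solution exactly when $w=(z+\eta_3 d^2)\tilde c/(\eta_2 d^2)$ is a root of $G_{c,d,z}$, equivalently $\tilde c=\eta_2 d^2 w/(z+\eta_3 d^2)$. By Fact \ref{fact:poly_fact}(1), $G_{c,d,z}$ has a root in $\mbb F_q$ iff the trace of its constant term vanishes. Expanding that constant term, one gets four summands; the term $\eta_1\eta_3/\eta_2^2$ has trace $1$ by irreducibility of $Q$, the term $c^2 z^2/(\eta_2^2 d^4)$ has the same trace as $cz/(\eta_2 d^2)$ because $\Tr(\xi)=\Tr(\xi^2)$ on $\mbb F_q$, and the remaining two summands assemble into $(\eta_1+\eta_2 c+\eta_3 c^2)z/(\eta_2^2 d^2)$. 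Hence the trace vanishes iff $\Tr((\eta_1+\eta_2 c+\eta_3 c^2)z/(\eta_2^2 d^2))=1$, i.e.\ iff $z\in H_{c,d}$. This is the cleanest computation in the proof and is the only step where the irreducibility of $Q$ is genuinely exploited.

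The degenerate subcase $z=\eta_3 d^2$ is only possible when $\eta_3 d^2\in H$; there the quadratic collapses to $\eta_2 d^2\tilde c+(\eta_1+\eta_3 c^2)d^2=0$, yielding the single solution $\tilde c=(\eta_1+\eta_3 c^2)/\eta_2$. A direct trace computation (using $\Tr(\xi)=\Tr(\xi^2)$ on the $\eta_3 c/\eta_2$ terms) confirms that $\eta_3 d^2\in H_{c,d}$ whenever it lies in $H$, so excluding $z=\eta_3 d^2$ from the set used in $G_{c,d,z}$ and inserting this value of $\tilde c$ by hand exactly matches the statement. Finally, for $\tilde c=c$ the lines $L_{c,d}$ and $L_{c,0}$ are parallel and distinct (since $d\neq 0$), so no contribution; and for $\tilde c=\infty$ the lines meet in $(0,d)$, which lies in $\mc X_\infty$ iff $\eta_3 d^2\in H$. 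These two observations account for the additional term $\{(\eta_1+\eta_3 c^2)/\eta_2,\infty\}$ in the second formula and for its absence in the first.

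For $c=\infty$ the argument is strictly easier: $L_{\infty,d}\cap L_{\tilde c,0}=\{(d,\tilde c d)\}$ for $\tilde c\in\mbb F_q$, so $\tilde c\in\mc R_{\infty,d}$ iff $\eta_3 d^2\tilde c^{\,2}+\eta_2 d^2\tilde c+(\eta_1 d^2+z)=0$ for some $z\in H$. Because $\eta_3 d^2\neq 0$ there is no degenerate subcase; Fact \ref{fact:poly_fact}(2) with $w=\eta_3\tilde c/\eta_2$ and the same trace analysis give exactly the claimed parameterization $\tilde c=\eta_2 w/\eta_3$ with $z\in H_{\infty,d}$, while $\tilde c=\infty$ is excluded since $L_{\infty,d}$ and $L_{\infty,0}$ are parallel and distinct. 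I expect the main obstacle, such as it is, to be the bookkeeping of the characteristic $2$ trace identities, particularly in verifying the equivalence of the constant-term trace condition with membership in $H_{c,d}$; once that is in hand, the rest is a case analysis that is straightforward.
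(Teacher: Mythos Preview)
Your proposal is correct and follows essentially the same route as the paper: derive the quadratic $(z+\eta_3 d^2)\tilde c^{\,2}+\eta_2 d^2\tilde c+(\eta_1 d^2+c^2 z)=0$ from the intersection condition, apply Fact~\ref{fact:poly_fact} to reduce to $G_{c,d,z}$, and identify the solvability condition with $z\in H_{c,d}$ via the trace decomposition exploiting $\Tr(\eta_1\eta_3/\eta_2^2)=1$ and $\Tr(\xi)=\Tr(\xi^2)$. Your explicit handling of the parallel case $\tilde c=c$ and your verification that $\eta_3 d^2\in H_{c,d}$ whenever $\eta_3 d^2\in H$ are minor expository additions the paper leaves implicit, but the substance is identical.
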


\begin{proof}
We start with the case $c\in\mbb F_q$ and $\eta_3d^2\notin H$. There exists an $x\in\mbb F_q$ such that $(x,\tilde cx)\in\mc X_{\tilde c}\cap L_{c,d}$ if and only if 
\[
    \frac{d^2}{c^2+\tilde c^2}\in\frac{1}{\eta_1+\eta_2\tilde c+\eta_3\tilde c^2}H,
\]
which is equivalent to the existence of a $z\in H$ such that
\begin{equation}\label{eq:H_c_d_quadr_eq}
    (z+\eta_3d^2)\tilde c^2+\eta_2d^2\tilde c+\eta_1d^2+c^2z=0.
\end{equation}
By Fact \ref{fact:poly_fact}.2), $\tilde c$ is a root of this equation if and only if $\eta_2^{-1}d^{-2}(z+\eta_3d^2)\tilde c$ is a root of $G_{c,d,z}$. 

It follows from the proof of Lemma \ref{lem:Denniston_S} that the set of $z\in H$ for which $G_{c,d,z}$ has a root in $\mbb F_q$ necessarily is equal to $H_{c,d}$. One can also check this directly using Fact \ref{fact:poly_fact}. Write the constant term of $G_{c,d,z}$ as
\[
    \frac{\eta_1\eta_3}{\eta_2^2}+\frac{(\eta_1+\eta_2c+\eta_3c^2)z}{\eta_2^2d^2}+\left(\frac{cz}{\eta_2d^2}+\frac{c^2z^2}{\eta_2^2d^4}\right).
\]
As in the proof of Lemma \ref{lem:Denniston_S}, one concludes that the set of $z\in H$ where $G_{c,d,z}$ has a root in $\mbb F_q$ is given by $H_{c,d}$, as claimed. Each root $w$ of $G_{c,d,z}$ gives a root $\tilde c$ of \eqref{eq:H_c_d_quadr_eq}, and this gives the claimed form of $\mc R_{c,d}$. 

Now assume $c\in\mbb F_q$ and $\eta_3d^2\in H$. Then \eqref{eq:H_c_d_quadr_eq} has two distinct roots as in the previous case unless $z=\eta_3d^2$, in which case the quadratic term vanishes. This gives $\tilde c=(\eta_1+\eta_3c^2)/\eta_2$ (the irreducibility of $Q$ ensures $\eta_2\neq 0$). One checks directly that $\infty\in\mc R_{c,d}$.

The case $c=\infty$ is treated analogously to the first case.
\end{proof}

\begin{rem}
	We note that for distinct $z,z'\in H_{c,d}$, the roots of the corresponding $G_{c,d,z}$ and $G_{c,d,z'}$ are different. This follows from a simple counting argument. Assume $c\in\mbb F_q$ and $\eta_3d^2\notin H$, the other cases are analogous. Since $\lvert\mc R_{c,d}\rvert=\lvert B_{c,d}\rvert=2^\ell$, we know from the proof of Lemma \ref{lem:R_c_d} that the total number of roots of $G_{c,d,z}$ as $z$ ranges over $H_{c,d}$ is $2^\ell$. Now $G_{c,d,z}$ has two distinct roots in $\mbb F_q$ for every $z\in H_{c,d}$, for if $G_{c,d,z}(w)=0$, then $G_{c,d,z}(w+1)=0$. Moreover, $H_{c,d}$ is the coset of an $(\ell-1)$-dimensional subspace of $H$.
\end{rem}

It remains to check that (D2) is satisfied. Let $\tilde\imath\in[q+1]$ correspond to a parallel class and $\tilde\kappa\in[a]$ to an element of this parallel class. Through the mapping $\Phi_{\mc R}$, one associates to $\tilde\imath$ a slope $c\in\mbb F_q\cup\{\infty\}$. Then $\Phi_{\mc U_c}(\tilde\kappa)$ gives an intercept $d$ such that $(c,d)\in\mc S$, where both $c$ and $d$ are represented in the basis $\Theta$. It remains to show that the set $\mc R_{c,d}$ can be enumerated in polylogarithmic time. The first task is to find $H_{c,d}$. We shall use that if $\beta=\sum_{i=0}^{t-1}\beta_i\zeta_i$ and $z=\sum_{i=0}^{t-1}z_i\vartheta^i$, then $\Tr(\beta z)=\sum_{i=0}^{t-1}\beta_iz_i$.

Assume that $c\in\mbb F_q$ and $\eta_3d^2\notin H$ (which can be checked by representing $\eta_3d^2$ in the basis $Z$). The other cases are similar. Using the $\Theta$-representations of $c$ and $d$, compute
\[
    \beta=\frac{\eta_1+\eta_2c+\eta_3c^2}{\eta_2^2d^2}.
\]
Transform the result to $Z$. Now assume that $\beta=\sum_{i=0}^{t-1}\beta_i\zeta_i$. Since $H_{c,d}$ is nonempty, the linear form $z\mapsto\Tr(\beta z)$ does not vanish on $H$. Hence $\beta_i=1$ for some $0\leq i\leq\ell-1$, say $\beta_{\ell-1}=1$. One can now enumerate the $\Theta$-representations of all elements of $H_{c,d}$ by enumerating all sequences $z_0,\ldots,z_{\ell-2}$ and choosing $z_{\ell-1}$ such that $\sum_{i=0}^{\ell-1}\beta_iz_i=1$. 

Given $z\in H_{c,d}$, it remains to find both roots of $G_{c,d,z}$. This means that one has to solve the in homogeneous linear equation
\[
    \tilde c^2+\tilde c=\frac{(\eta_1d^2+z)(\eta_1d^2+c^2z)}{\eta_2^2d^2}.
\]
If $\tilde c$ satisfies this equation, then $\tilde c+1$ is the other solution. This equation can be solved in polylogarithmic time in $q$, and so it is possible to find the points of $\mc X$ incident with $(c,d)$ in polylogarithmic time.

\begin{rem}
    In the way they were described here, it appears obvious that property (D1) requires less computation than (D2) for Denniston's BIBD, although both operations have the same complexity class. For the computation of the functional form, only the former operation is necessary. 
    
    For the computation of the randomized inverse of the functional form, recall that it is possible, as pointed out in Section \ref{subsect:complexity}, to first choose the point from $\mc X$ uniformly at random, and then to choose an $s\in\mc S$ such that $f(x,s)=\alpha$ if $\alpha$ is the message to be transmitted. In this approach, it is sufficient to randomly choose $c\in\mc R$ and then to solve for the intercept $d\in\mc A$ as in (D1).
\end{rem}

\section*{Acknowledgments}

The authors would like to thank Eike Kiltz for discussions about the achievability of semantic security using modular coding schemes. H.~Boche would also like to thank Marc Geitz, Oliver Holschke and Frank Fitzek for discussions about the application of modular wiretap coding schemes in communication networks.

Both authors were supported by the Deutsche Forschungsgemeinschaft (DFG, German Research Foundation) under Germany's Excellence Strategy - EXC 2092 CASA - 390781972. H.~Boche was also partly supported by the National Research Initiative of the German Ministry for Education and Research (BMBF) on 6G Communication Systems through the research hub 6G-life (16KISK002).

\end{document}